\gdef\@copyrightpermission{
  \begin{minipage}{0.2\columnwidth}
   \href{https://creativecommons.org/licenses/by/4.0/}{\includegraphics[width=0.90\textwidth]{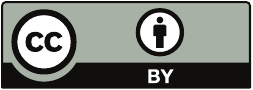}}
  \end{minipage}\hfill
  \begin{minipage}{0.8\columnwidth}
   \href{https://creativecommons.org/licenses/by/4.0/}{This work is licensed under a Creative Commons Attribution International 4.0 License.}
  \end{minipage}
  \vspace{5pt}
}
\newtheorem{assumption}{Assumption}
\newenvironment{reptheorem}[1]
  {\rthm}
  {\endrthm}
\DeclareSymbolFontAlphabet{\mathbb}{AMSb}
\DeclareMathOperator*{\Exp}{\mathbb{E}}
\newcolumntype{C}{>{\Centering\arraybackslash}X}
\newcommand{\vertiii}[1]{{\left\vert\kern-0.25ex\left\vert\kern-0.25ex\left\vert #1 
    \right\vert\kern-0.25ex\right\vert\kern-0.25ex\right\vert}}
\NewDocumentCommand{\tagx}{om}{%
  \IfNoValueTF{#1}
   {
    \refstepcounter{equation}(\theequation)\label{#2}%
   }
   {
    (#1)\def\@currentlabel{#1}\label{#2}%
   }%
}
\pgfplotsset{compat=1.18}
\title[To Spend or To Gain]{To Spend or to Gain: Online Learning in Repeated Karma Auctions}
\author{Damien Berriaud}
\affiliation{
  \institution{ETH Zürich}
  \streetaddress{Physikstrasse 3}
  \city{Zürich}
  \postcode{8092}
  \country{Switzerland}}
\email{dberriaud@ethz.ch}
\author{Ezzat Elokda}
\affiliation{
  \institution{ETH Zürich}
  \streetaddress{Physikstrasse 3}
  \city{Zürich}
  \postcode{8092}
  \country{Switzerland}}
\email{elokdae@ethz.ch}
\author{Devansh Jalota}
\affiliation{
  \institution{Stanford}
  \streetaddress{475 Via Ortega}
  \city{Stanford}
  \state{California}
  \postcode{94305}
  \country{USA}}
  \email{djalota@stanford.edu}
\author{Emilio Frazzoli}
\affiliation{
  \institution{ETH Zürich}
  \streetaddress{Sonneggstrasse 3}
  \city{Zürich}
  \postcode{8092}
  \country{Switzerland}}
  \email{emilio.frazzoli@idsc.mavt.ethz.ch}
\author{Marco Pavone}
\affiliation{
  \institution{Stanford}
  \streetaddress{475 Via Ortega}
  \city{Stanford}
  \state{California}
  \postcode{94305}
  \country{USA}}
  \email{pavone@stanford.edu}
\author{Florian Dörfler}
\affiliation{
  \institution{ETH Zürich} 
  \streetaddress{Physikstrasse 3}
  \city{Zürich}
  \postcode{8092}
  \country{Switzerland}}
  \email{dorfler@ethz.ch}
\begin{abstract}
Recent years have seen a surge of artificial currency-based mechanisms in contexts where monetary instruments are deemed unfair or inappropriate, e.g., in allocating food donations to food banks, course seats to students, and, more recently, even for traffic congestion management.
Yet the applicability of these mechanisms remains limited in repeated auction settings, as it is challenging for users to learn how to bid an artificial currency that has no value outside the auctions.
Indeed, users must jointly learn the value of the currency in addition to how to spend it optimally.
Moreover, in the prominent class of \emph{karma mechanisms}, in which artificial \emph{karma payments}
are redistributed to users at each time step,
users do not only \emph{spend karma} to obtain public resources but also \emph{gain karma} for yielding them.
For this novel class of karma auctions, we propose an \emph{adaptive karma pacing strategy} that learns to bid optimally, and show that this strategy a) is asymptotically optimal for a single user bidding against competing bids drawn from a stationary distribution; b) leads to convergent learning dynamics when all users adopt it; and c) constitutes an approximate Nash equilibrium as the number of users grows.
Our results require a novel analysis in comparison to adaptive pacing strategies in monetary auctions, since we depart from the classical assumption that the currency has known value outside the auctions, and consider that the currency is both spent \emph{and} gained through the redistribution of payments.
\end{abstract}
\keywords{Online learning;   Artificial currency; Karma economy; Repeated auctions;   Budget-constrained auctions;  Adaptive pacing.}
\newcommand{\BibTeX}{\rm B\kern-.05em{\sc i\kern-.025em b}\kern-.08em\TeX}
\begin{document}


\pagestyle{fancy}
\fancyhead{}


\maketitle 


\section{Introduction}
In 
public resource allocation contexts, the use of monetary instruments to regulate resource consumption is often deemed inequitable (e.g., to manage traffic congestion~\cite{evans1992road,arnott1994welfare,taylor2010addressing,brands2020tradable}), inappropriate (e.g., for organ and food donations~\cite{sonmez2020incentivized,kim2021organ,prendergast2022allocation} or course allocations~\cite{budish2012multi}), or simply undesired (e.g., for peer-to-peer file sharing~\cite{vishnumurthy2003karma,friedman2006efficiency} or babysitting services~\cite{johnson2014analyzing}).
As a consequence, significant attention has been devoted to the study of non-monetary mechanism design~\cite{roth1990two,schummer2007mechanism}, which is known to be challenging due to 
interpersonal comparability~\cite{roberts1980interpersonal} and the lack of a general instrument to manipulate incentives~\cite{gibbard1973manipulation,satterthwaite1975strategy}.

However, a number of mechanisms have seen some recent success in jointly achieving the objectives of fairness, efficiency, and strategy-proofness when resources are allocated \emph{repeatedly over time}~\cite{guo2009competitive,balseiro2019multiagent,guo2020dynamic,gorokh2021monetary,gorokh2021remarkable,siddartha2023robust}.
The core principle of these mechanisms is to restrict the number of times the resource can be consumed and let the users trade off when it is most beneficial for them to do so.
To achieve these goals, many of these mechanisms employ \emph{artificial currencies}~\cite{guo2009competitive,johnson2014analyzing,gorokh2021monetary,gorokh2021remarkable,prendergast2022allocation,siddartha2023robust,elokda2023self}, which involves issuing a budget of non-tradable credits or currency to users which they may use to repeatedly bid for resources.
In artificial currency mechanisms, users, who may have time-varying and stochastic valuations for the resources, must be strategic in their bidding to not deplete the budget too quickly, and to spare currency for periods when they have the highest valuation for the resources.
Thus, artificial currencies serve the dual purpose of monitoring resource consumption and providing a means for users to express their time-varying preferences, resulting in fair and efficient allocations over time.

The literature on artificial currency mechanisms for repeated resource allocation can be broadly categorized in two classes.
In the first class, artificial currency is issued at the beginning of a finite episode only to be spent during the episode~\cite{gorokh2021monetary,gorokh2021remarkable,siddartha2023robust}.
In the second class, artificial currency is issued at the beginning of the episode but can also be gained throughout it, typically by means of peer-to-peer exchanges~\cite{vishnumurthy2003karma,friedman2006efficiency,johnson2014analyzing,elokda2023self} or by redistributing the payments collected in each time step~\cite{prendergast2022allocation,elokda2024carma}.
Some works have referred to this class of artificial currencies as \emph{karma}~\cite{vishnumurthy2003karma,elokda2023self,vuppalapati2023karma}:
when users yield resources to others they gain karma, and when instead they consume resources they lose karma.
Karma mechanisms offer some advantages in comparison to mechanisms relying on initial endowments of artificial currency only because from the system perspective, the resource allocation can be infinitely repeated with no central intervention other than the initial endowment of karma; and from the user perspective, karma is forgiving as it enables users to immediately replenish their budgets by yielding resources.

In both classes of artificial currency mechanisms, the focus thus far has been on analyzing equilibrium properties, including existence~\cite{elokda2023self}, strategy-proofness~\cite{guo2009competitive,gorokh2021monetary}, and robustness~\cite{gorokh2021remarkable,siddartha2023robust}.
However, the problem of \emph{learning how to optimally bid artificial currency in repeated auction settings}, and whether such a learning procedure converges to a Nash equilibrium, remains unaddressed.
This problem holds both significant importance and challenge.
The importance is two-fold:
on one hand, the equilibrium-based analysis of previous studies is only meaningful if an equilibrium is reached;
on the other, devising simple learning rules that align with users' self-interest is crucial to implement these mechanisms in practice.
The challenge stems from the fact that, unlike traditional monetary instruments, artificial currency does not have any value outside the resource allocation context for which it has been issued.
Therefore, users must jointly learn the value of the currency as well as how to spend it optimally.
Moreover, the possibility of gaining currency in the class of karma mechanisms leads to new challenges that are particular to these novel mechanisms.

The problem of learning how to bid optimally in \emph{monetary} auctions is a classical one~\cite{nisan2007algorithmic,krishna2009auction}.
This problem has gained recent traction in the context of repeated, budget-constrained auctions~\cite{balseiro2015repeated,balseiro2019learning,castiglioni2022online,gaitonde2022budget,balseiro2023best,wang2023learning,lucier2024autobidders}, most famously to automate the bidding in multi-period online ad campaigns.
We draw inspiration from these works, and in particular~\cite{balseiro2019learning}, to derive \emph{adaptive pacing strategies} in artificial currency-based auctions.
The first class of artificial currency auctions in which users are issued an initial endowment of currency only is most closely related to works considering \emph{value maximization}~\cite{balseiro2021landscape,gaitonde2022budget,lucier2024autobidders}.
In these works, monetary payment costs are not included in the users' optimization objective,
but these monetary costs still enter the optimization explicitly in constraints, either on individual bids~\cite{gaitonde2022budget} or on the total expenditure (also known as \emph{return of investment constraints})~\cite{lucier2024autobidders}.
In contrast, the cost of spending currency must be learned and does not appear in the optimization objective nor constraints in artificial currency auctions.

Moreover, the second class of karma auctions in which karma is gained throughout the auction campaign leads to new strategic opportunities that are not typically considered in monetary settings.
For instance, even in a second-price auction users have an incentive to bid non-truthfully to maximize the karma gained upon losing.
Furthermore, the preservation of total karma held by the users in this class of auctions leads to challenges in the simultaneous adoption of classical adaptive pacing strategies,
since it becomes impossible for all users to simultaneously deplete their budgets.

\subsection{Contribution}

In this paper, we adopt techniques from adaptive pacing in budget-constrained monetary auctions~\cite{balseiro2019learning} to the two aforementioned classes of artificial currency mechanisms.
The first class of artificial currency mechanisms in which users are issued an initial endowment of currency only is addressed in Appendix~\ref{sec:A}.
This appendix serves the dual purpose of reviewing the standard results~\cite{balseiro2019learning}, as well as modifying the standard framework to address that the cost of spending artifical currency is not known a-priori.
The main technical novelty in Appendix~\ref{sec:A} is to include a lower-bound on the pacing multipliers, since otherwise standard adaptive pacing 
could lead to unbounded bids that quickly deplete the initial budget.

The main paper is thus devoted to the second and more challenging class of karma mechanisms.
We specifically consider mechanisms in which the karma payments are redistributed uniformly in each time step.
We derive an \emph{adaptive karma pacing} strategy for these mechanisms, and show that:
 a) adaptive karma pacing is asymptotically optimal for a single user bidding against competing bids drawn from a stationary distribution; b) when all users adopt adaptive karma pacing, the expected dynamics converge asymptotically to a unique stationary point; and c) adaptive karma pacing constitutes an approximate Nash equilibrium
under the additional assumption that there is a large number of parallel auctions for which the matching probability of any two particular users decays asymptotically to zero.
The novel technical challenges of the karma-based setting in comparison to the monetary setting, and how they are tackled in our paper, are summarized as follows:

\begin{itemize}
    
    \item[$\bullet$] The possibility of gaining karma through redistribution leads to non-truthfulness of the second-price auction, and a complex dependency of the karma budget dynamics on the user's bid.
    This requires relaxing the primal problem’s budget constraint and using non-perfect gradient information in the associated relaxed dual problem;

    \item[$\bullet$] The preservation of karma due to redistribution leads to several complications requiring a novel adaptive karma pacing strategy and analysis.
    It is impossible for all agents to simultaneously deplete their budgets in this setting, which requires removing the target expenditure rate from the multiplier updates.
    This leads to non-uniqueness of the optimal stationary multipliers, requiring further strategy modifications to converge to a unique multiplier profile.
\end{itemize}
As a consequence of these modifications to standard adaptive pacing, our paper performs a novel regret analysis in order to extend previous performance guarantees to the karma-based setting.

The remainder of the paper is organized as follows.
In Section~\ref{sec:model} we introduce the problem formulation including key definitions and notations.
We derive our adaptive karma pacing strategy in Section~\ref{sec:G-derivation}.
Our main results are then included in Section~\ref{sec:G} which establishes performance guarantees for this strategy.
Finally, Section~\ref{sec:disc} discusses the results, shedding light on the key assumptions made and providing directions for future work.
Review of related works, results for artificial currency mechanisms with no redistribution, numerical computation results, 
detailed assumptions and proofs are included in the appendix.
\section{Problem setup}\label{sec:model}

This section introduces the setting studied in the paper, including notations and important definitions.

\subsection{Notation}

We denote by $[N]$ the set $\{1, \dots, N\}$, by $\mathds{1}\{\cdot\}$ the indicator function,
by $(\cdot)^+$ the function $x \mapsto \max \{x,0\}$, and by $P_{[a,b]}(\cdot)$ the projection $x \mapsto \min\{ \max \{ x,a \}, b\}$. 
Scalars $x$ are distinguished from vectors $\bm{x} = (x_i)_{i \in I}$, for some index set $I$, through the use of boldface.
If $x$ is a scalar, then $\underline{x}$ (respectively, $\overline{x}$) is a lower bound (respectively, upper bound) of $x$.
If $\bm{x}$ is a vector, then $\underline{x} = \min_{i \in I} x_i$ (respectively, $\overline{x} = \max_{i \in I} x_i $).
Finally, for the vector $\bm{x}$ and an index $i \in I$, the vector $\bm{x_{-i}} = (x_j)_{j \in I, \: i \neq j}$ is constructed by dropping component $i$. 

\subsection{Setting}
\label{sec:setting}

\begin{figure*}[!tb]
 \centering
 \includegraphics[width = 0.86\textwidth]{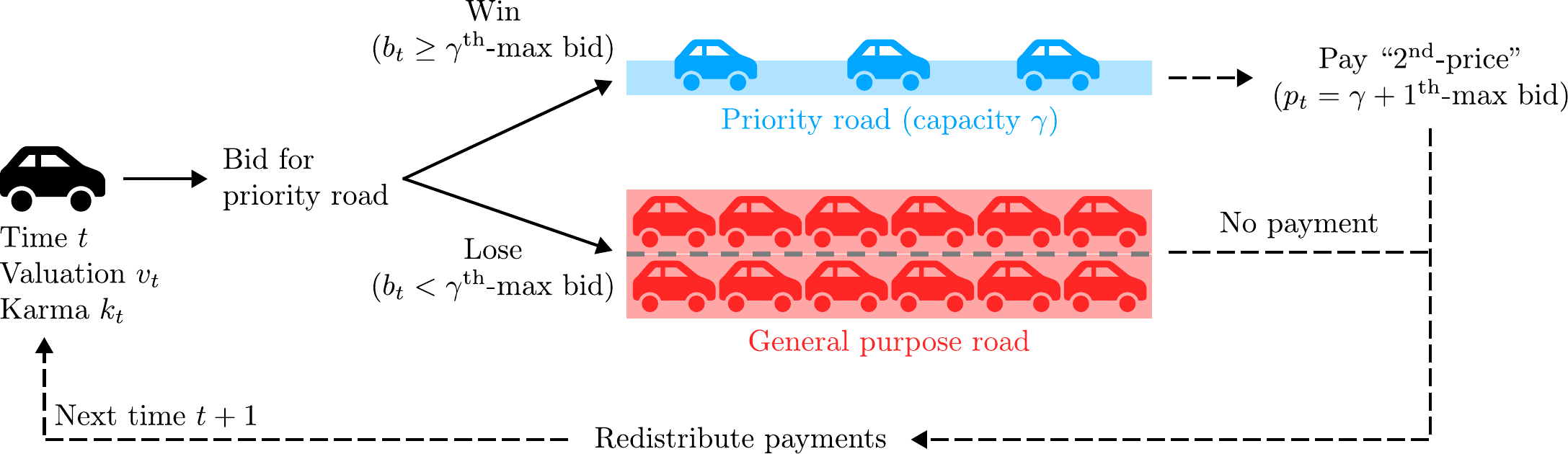}
 \caption{Schematic representation of repeated resource allocation using karma.}
 \Description{At each period, users discover their current valuation and place a bid depending on their remaining karma budget. The highest bidders then get access to the priority road by paying the generalized second price, while the other users are redirected to the general purpose lane. Payments are then potentially redistributed before the start of the new period.}
 \label{fig:karma-scheme}
\end{figure*}

We study a general class of repeated resource allocation problems in which a limited number of resources must be repeatedly allocated to a population $\mathcal{N} = [N]$ of agents.
For the sake of presentation, we instantiate this class of problems using a stylized morning commute setting~\cite{vickrey1969congestion,arnott1990economics}, which is schematically illustrated in Figure~\ref{fig:karma-scheme}.
At discrete time steps $t \in \mathbb{N}$ (e.g., days), the agents seek to commute from the suburb to the city center using one of two roads.
The \emph{general purpose road} is subject to congestion, while access to the \emph{priority road} is limited to its free-flow capacity of $\gamma \in [N-1]$ agents per time step, and therefore it remains free of congestion.
Traveling on the general purpose road takes unit time, 
while traveling on the priority road takes a shorter time $0 \leq 1 - \Delta < 1$.
This model can be interpreted as an abstraction of a multi-lane highway with a governed express lane and un-governed, congested general purpose lanes.
At each time step $t \in \mathbb{N}$, each agent $i \in \mathcal{N}$ is associated with a private \emph{valuation of time} $v_{i,t} \in [0,1]$ drawn independently across time from fixed, exogenous distributions $\mathcal{V}_i$.
The valuations represent the agents' time-varying sensitivities to travel delays, e.g., because they have flexible schedules on some days but must be punctual on other days, and are normalized to the interval $[0,1]$ without loss of generality.
We denote by $\bm{v_t} = \left(v_{i,t} \right)_{i \in \mathcal{N}}$ the vector of agents' valuations at time $t$, which are distributed according to $\bm{\mathcal{V}} =  \prod_{i \in \mathcal{N}}\mathcal{V}_i$ with support over $[0,1]^N$.
As is common in the literature~\cite{balseiro2019learning,gorokh2021monetary,castiglioni2022online}, we assume that the valuation distribution $\bm{\mathcal{V}}$ is absolutely continuous with bounded density $\nu : [0,1]^N \mapsto \left[\underline{\nu}, \overline{\nu}\right]^N \subset \mathbb{R}_{>0}^N.$

Access to the priority road is governed by means of an artificial currency called \emph{karma}.
Each agent $i \in \mathcal{N}$ is endowed with an initial karma budget $k_{i,1} \in \mathbb{R}_+$.
Then, at each time step $t \in \mathbb{N}$, each agent places a sealed bid $b_{i,t} \in \mathbb{R}_+$ smaller than its current budget $k_{i,t}\in \mathbb{R}_+$.
The $\gamma$-highest bidders, referred to as the `auction winners', are granted access to the priority road, and must pay $p_t^{\gamma+1} :=  \gamma+1\textsuperscript{th}\mbox{-}\max\{ b_{i,t}\}_{i=1}^N$ in karma.
The $N-\gamma$ remaining agents, referred to as the `auction losers', must instead take the general purpose road and make no payments.
The price $p_t^{\gamma+1}$ is set by the highest bid among the auction losers, i.e., it corresponds to the second price auction if $\gamma = 1$.
After payments are settled, they are redistributed uniformly to the agents such that each agent gains $g_{i,t} = \gamma \, p_t^{\gamma+1} / N$ units of karma in the next time step.
Notice that under this redistribution scheme, agents that access the priority road have a net decrease in karma, while those using the general purpose road have a net increase in karma.
Meanwhile, the total amount of karma in the system set by the initial endowments $\bm{k_1}$ is preserved over time.

Let $\bm{b_{-i,t}}=(b_{j,t})_{j \neq i}$ be the bid profile of agents other than $i$, and $d_{i,t}^\gamma = \gamma\textsuperscript{th}\mbox{-}\max_{j:j\neq i} \{ b_{j,t}\}$ be the associated \emph{competing bid}, since agent $i$ must bid higher than $d_{i,t}^\gamma$ to be among the auction winners.
We assume that ties in bids do not occur (as is common in the literature~\cite{balseiro2019learning}; in practice ties could be settled randomly).
Let $x_{i,t} = \mathds{1}\{b_{i,t} > d^\gamma_{i,t}\} \in \{0,1\}$ indicate whether agent $i$ is an auction winner at time $t$.
Then the agent suffers a cost $c_{i,t} = v_{i,t} \: (1 - x_{i,t} \Delta)$ 
and pays $z_{i,t} = x_{i,t} \: d^\gamma_{i,t}$ at that time.
Their budget for the next time step is hence determined by $k_{i,t+1} = k_{i,t} - z_{i,t} + g_{i,t}$.

We consider rational agents that aim to minimize their expected total cost over a time horizon $T$. At time $t$, the information available to agent $i$ to formulate its bid is the \emph{history} \\
$\mathcal{H}_{i,t} = \left\{T, (v_{i,s}, k_{i,s}, b_{i,s}, z_{i,s}, c_{i,s})_{s=1}^{t-1} , v_{i,t}, k_{i,t} \right\}.$
A bidding strategy $\beta_i \in \mathcal{B}$ for agent $i$ is thus a sequence of functions $\beta_i = (\beta_{i,1}, \dots, \beta_{i,T})$, where  $\beta_{i,t}$ maps the history $\mathcal{H}_{i,t}$ to a probability distribution over the set of feasible bids $[0, k_{i,t}]$.
A profile of strategies for all agents is accordingly denoted by $\bm{\beta}=\left(\beta_i\right)_{i \in \mathcal{N}}$.
For a fixed agent $i$ following strategy $\beta_i$, it will be convenient to define three notions of expected total cost, given by
\begin{align}
\small
    \mathcal{C}_i^{\beta_i} \left(\bm{v_i},\bm{d_i} \right) &= \Exp_{\bm{b_i} \sim \beta_i} \left[ \sum_{t=1}^T c_{i,t} \right] 
    = \Exp_{\bm{b_i} \sim \beta_i} \left[ \sum_{t=1}^T v_{i,t} \left(1 -   \mathds{1}\left\{ b_{i,t} > d_{i,t}^\gamma \right\}  \Delta \right) \right], \label{equ:C-sample-path} \\
    \mathcal{C}_i^{\beta_i} &= \Exp_{\substack{\bm{v_i} \sim \prod\limits_{t=1}^T \mathcal{V}_i, \; \bm{d_i} \sim \prod\limits_{t=1}^T \mathcal{D}_i, \; \bm{b_i} \sim \beta_i } }
    \left[ \mathcal{C}_i^{\beta_i} (\bm{v_i} ,\bm{d_i}) \right], \label{equ:C-stat-comp} \\
    \mathcal{C}_i^{\bm{\beta}} &=
    \Exp_{\substack{\bm{v} \sim \prod\limits_{t=1}^T \bm{\mathcal{V}}, \; \bm{b_{-i}} \sim \bm{\beta_{-i}}}} \left[\mathcal{C}_i^{\beta_i} \left(\bm{v_i},\bm{d_i}(\bm{b_{-i}}) \right) \right]. \label{equ:C-strategic}
\end{align}
Equation~\eqref{equ:C-sample-path} defines the \emph{sample path cost} $\mathcal{C}_i^{\beta_i} (\bm{v_i},\bm{d_i^\gamma} )$ for a fixed realization of valuations $\bm{v_i} = (v_{i,t})_{t \in [T]}$ and competing bids $\bm{d_i} := \big(d^\gamma_{i,t}, d^{\gamma+1}_{i,t}\big)_{t \in [T]}$.
Equation~\eqref{equ:C-stat-comp} defines the \emph{stationary competition cost} $\mathcal{C}_i^{\beta_i}$, in which
the competing bids $d_{i,t}$ are assumed to be drawn independently across time from a stationary distribution $\mathcal{D}_i$.
Finally, Equation~\eqref{equ:C-strategic} defines the \emph{strategic competition cost} $\mathcal{C}_i^{\bm{\beta}}$, which is agent $i$'s expected total cost when all agents follow strategy profile $\bm{\beta}$.

Note that the above notations implicitly rely on $T$ and we should write  $\bm{\beta}^T$ for a strategy profile over the time horizon $T.$ We now overload notations and consider an infinite series of strategy profiles $\bm{\beta} = \big(\bm{\beta}^T \big)_{T\in \mathbb{N}}. $
With this, we will say that strategy profile $\bm{\beta}$ constitutes an \emph{approximate Nash equilibrium} if its strategic competition cost satisfies, for all agents $i \in \mathcal{N}$,
\begin{equation}\label{equ:Nash}
\small
    \lim_{T \to \infty} \: \frac{1}{T} \left( C_i^{\bm{\beta}^T} - \inf_{\tilde{\beta}_i \in \mathcal{B}} C_i^{\tilde{\beta}_i,\bm{\beta_{-i}}^T}  \right) =0.
\end{equation}
Equation~\eqref{equ:Nash} implies that under strategy profile $\bm{\beta}$, no single agent $i$ can asymptotically improve its expected average cost per time step by unilaterally deviating to a strategy $\tilde{\beta}_i \neq \beta_i$.

\section{Derivation of Adaptive Karma Pacing} \label{sec:G-derivation}

In the remainder of the paper, our main goal is to devise a bidding strategy that constitutes an approximate Nash equilibrium when all agents follow it.
As a first step towards this goal, this section derives a candidate optimal bidding strategy using an \emph{online dual gradient ascent scheme}. This classical optimization technique has gained recent traction in the context of budget-constrained auctions~\cite{balseiro2019learning} and other related problems~\cite{hazan2023introduction,balseiro2023best,pmlr-v206-jalota23a,jalota2023stochastic}.
To elucidate our bidding strategy, we first introduce the optimization problem of a single agent $i \in \mathcal{N}$ who has the \emph{benefit of hindsight}, i.e., who can make optimal bidding decisions with prior knowledge of the future realizations of valuations $\bm{v_i}$ and competing bids $\bm{d_i}$.
Thus, the optimal cost of this problem serves as a theoretical benchmark for the lowest cost that the agent can hope to achieve.
Then, since in practice the agent only observes the stochastic valuations and competing bids online as the auctions progress, we introduce a candidate bidding strategy, based on online gradient ascent, to approximate the agent's optimal bidding strategy with the benefit of hindsight.
\smallskip

\textbf{Optimal Cost with the Benefit of Hindsight.}
We construct hereafter a lower bound on the optimal cost with the benefit of hindsight.
For a fixed realization of valuations $\bm{v_i}$ and competing bids $\bm{d_i}
$, agent $i$'s optimal cost with the benefit of hindsight is given by the following optimization problem
\begin{equation}
\label{equ:G_stat_comp_def_hindsight}
\small
    \begin{aligned}
        &\mathcal{C}_i^H (\bm{v_i},\bm{d_i}) 
        = \min_{\bm{b_i}\in \mathbb{R}_+^T}  \sum_{t=1}^T v_{i,t}\left(1 - \mathds{1} \left\{ b_{i,t} > d_{i,t}^\gamma \right\} \Delta\right), \mbox{ such that}\\
         &\sum_{t=1}^s \mathds{1} \left\{ b_{i,t} > d_{i,t}^\gamma \right\} d_{i,t}^\gamma  \leq \rho_i T +  \sum_{t=1}^{s-1} g_{i,t} \left(b_{i,t}, d_{i,t}\right), \; 
         \forall s \in [T],
    \end{aligned}
\end{equation}
where, following the standard literature~\cite{balseiro2019learning}, we define $\rho_i = k_{i,1} / T $ as the \emph{target expenditure rate}, i.e., the average expenditure per time step that would fully deplete the initial budget by the end of the time horizon if no karma was gained.
Notice that Problem~\eqref{equ:G_stat_comp_def_hindsight} bears significant complexity in comparison to its counterpart in the standard setting with no budget gains, c.f. Problem~\eqref{equ:A_stat_comp_def_hindsight} in Appendix~\ref{sec:A}.
First, the possibility of gaining karma requires a budget constraint for each time step $s \in [T]$ instead of only one at the end of the horizon.
Second, the outcomes $x_{i,t}$ cannot be used directly as decision variables since the gains $g_{i,t}$ depend non-trivially on the bids $b_{i,t}$, as given by
\begin{equation}
\label{equ:karma-gain}
\small
g_{i,t}\left(b_{i,t},d_{i,t}\right) = \frac{\gamma}{N} \: p_t^{\gamma+1}\left(b_{i,t},d_{i,t}\right) = \frac{\gamma}{N} \: \begin{cases}
d_{i,t}^\gamma, &  d_{i,t}^\gamma < b_{i,t} , \\
b_{i,t}, & d_{i,t}^{\gamma+1} < b_{i,t} \leq d_{i,t}^\gamma, \\
d_{i,t}^{\gamma+1}, & b_{i,t} \leq d_{i,t}^{\gamma+1}.
\end{cases}
\end{equation}
Finally, notice that with respect to the standard setting, Problem~\eqref{equ:G_stat_comp_def_hindsight} has an additional dependency on $d_{i,t}^{\gamma+1}$, i.e., the $\gamma+1^\textup{th}$-highest competing bid, c.f. Equation~\eqref{equ:karma-gain}.
If agent $i$ is among the auction winners, the price $p^{\gamma+1}_t$ and thereby the gain $g_{i,t}$ is determined by $d^\gamma_{i,t}$; if the agent is among the auction losers and is not the \emph{price setter}, the gain is determined by $d^{\gamma+1}_{i,t}$; and if the agent is among the auction losers but sets the price the gain is determined by its own bid $b_{i,t}$.
To address the complexity of Problem~\eqref{equ:G_stat_comp_def_hindsight}, we perform a relaxation that forms a lower bound on the optimal cost with the benefit of hindsight, given by
\begin{equation}\label{equ:G_lowerbound_cost_H}
\small
    \begin{aligned}
        \mathcal{C}_i^H (\bm{v_i},\bm{d_i}) 
        \geq \underline{\mathcal{C}}_i^H \big(\bm{v_i},\bm{d_i^\gamma}\big) 
        = &\min_{\bm{x_i}\in \{0,1\}^T}  \sum_{t=1}^T v_{i,t}( 1 - x_{i,t} \Delta), \\
        &\mbox{s.t.} \sum_{t=1}^T \left(x_{i,t} -  \frac{\gamma}{N} \right) d_{i,t}^\gamma  \leq \rho_i T. 
    \end{aligned}
\end{equation}
This lower bound is obtained by a) allowing temporary negative balances of karma as long as it is non-negative at the end of the horizon $T$; and b) eliminating the dependency of $g_{i,t}$ on $b_{i,t}$ and $d^{\gamma+1}_{i,t}$ by assuming that when the agent is among the auction losers, it always manages to be the price setter and impose the maximum gain $\frac{\gamma}{N} \: d^\gamma_{i,t}$.
The Lagrangian dual problem associated with 
Problem~\eqref{equ:G_lowerbound_cost_H} is 
\begin{equation}
\small
\label{equ:G_stat_comp_hindsight-dual}
\begin{aligned}
    &\mathcal{C}_i^H (\bm{v_i},\bm{d_i}) 
    \geq \delta_i^H \big(\bm{v_i},\bm{d_i^\gamma}\big) \\
    &:= \sup\limits_{\mu_i \geq 0} \min\limits_{\bm{x}_i \in \{0,1\}^T} \sum_{t=1}^T  x_{i,t} \left(\mu_i d_{i,t}^\gamma - \Delta v_{i,t}\right) + v_{i,t}  -\mu_i \left(\rho_i + \frac{\gamma}{N} d_{i,t}^\gamma  \right) \\
\end{aligned}
\end{equation}
The relaxation in Problem~\eqref{equ:G_lowerbound_cost_H} results in dual Problem~\eqref{equ:G_stat_comp_hindsight-dual} with similar structure as its counterpart in the standard setting with no budget gains, c.f. Problem~\eqref{equ:A_stat_comp_hindsight-dual-all} in Appendix~\ref{sec:A}.
The main difference is that the target expenditure rate $\rho_i$ is replaced by the time-varying term $\rho_i + \frac{\gamma}{N} \: d^\gamma_{i,t}$.
This is intuitive as the agent now aims to deplete both its initial budget as well as the gains it receives.
Notice that for a fixed multiplier $\mu_i \geq 0$, the inner minimum in~\eqref{equ:G_stat_comp_hindsight-dual} is obtained by winning all auctions satisfying $\Delta v_{i,t} > \mu_i d_{i,t}^\gamma$.
This can be achieved by bidding $b_{i,t}=\Delta v_{i,t} / \mu_i$, yielding
\begin{equation}
\small
\label{equ:G_stat_comp_hindsight-dual-2}
    \begin{aligned}
    \delta_i^H \big(\bm{v_i},\bm{d_i^\gamma}\big)
    &= \sup_{\mu_i \geq 0} \sum_{t=1}^T  v_{i,t}  - \mu_i \left(\rho_i + \frac{\gamma}{N} d_{i,t}^\gamma  \right) - \left(\Delta v_{i,t} - \mu_i d_{i,t}^\gamma \right)^+  \\
    &:=\sup\limits_{\mu_i \geq 0} \sum_{t=1}^T \delta_{i,t}^H \big(v_{i,t}, d_{i,t}^\gamma, \mu_i\big). 
    \end{aligned}
\end{equation}

\smallskip

\textbf{Adaptive Karma Pacing.}
We perform a stochastic gradient ascent scheme in order to approximately solve the relaxed dual Problem~\eqref{equ:G_stat_comp_hindsight-dual-2} using online observations.
Namely, the agent considers a candidate optimal multiplier $\mu_{i,t}$ and places its bid accordingly with $b_{i,t}= \Delta v_{i,t} / \mu_{i,t}$.
In an ideal case, it would then update $\mu_{i,t+1}$ using the subgradient given by 
$
    \frac{\partial \delta_{i,t}^H}{ \partial \mu_{i,t}} \big(v_{i,t}, d_{i,t}^\gamma, \mu_{i,t}\big)  = z_{i,t} - \rho_i - \frac{\gamma}{N} d_{i,t}^\gamma$.
However, adopting this subgradient raises two issues.
First, the term $\frac{\gamma}{N} \: d_{i,t}^\gamma$ is only observed 
by auction winners, hence we use the observed gain $g_{i,t}$ as a proxy instead.
Second, if all agents are to adopt this subgradient, it is impossible for them to simultaneously track $\rho_i$ and fully deplete their karma by the end of the horizon, as the total amount of karma in the system is preserved.
For this reason, we will omit the term $\rho_i$, such that each agent attempts to match its expenditures to its gains.
This yields the 
\emph{adaptive karma pacing strategy},
which is denoted by $K$ and summarized in Algorithm~\ref{alg:G}.

\begin{algorithm}[ht]
\caption{\textbf{Adaptive Karma Pacing $K$}}
\label{alg:G}

\KwInput{ Time horizon $T$, initial budget $k_{i,1}>0$, multiplier bounds $\overline{\mu} > \underline{\mu} > 0$, gradient step size $\epsilon > 0$.}

\KwInitialize{ Initial multiplier $\mu_{i,1} \in [\underline{\mu}, \overline{\mu} ]$.}

\For{ $t=1,\dots,T$}{
\begin{enumerate}
    \item Observe the realized valuation $v_{i,t}$ and place bid $ b_{i,t} = \min \left\{ \dfrac{\Delta v_{i,t}}{ P_{[\underline{\mu}, \overline{\mu} ]} ( \mu_{i,t}) }, k_{i,t} \right\};$ \hfill\tagx[$K$-$b$]{equ:G-b}
    
    \item Observe the expenditure $z_{i,t}$ and the gain $g_{i,t}$. 
    Update the multiplier $ \mu_{i,t+1} =  \mu_{i,t} + \epsilon ( z_{i,t} - g_{i,t}),$ \hfill\tagx[$K$-$\mu$]{equ:G-mu}
    
    as well as the karma budget
    $k_{i,t+1} = k_{i,t} - z_{i,t} + g_{i,t} .$
\end{enumerate}}
\end{algorithm}


The term `adaptive karma pacing' is in line with the literature on budget-constrained monetary auctions~\cite{balseiro2019learning,pmlr-v119-fang20a}, 
but instead of trying to \emph{pace} the budget depletion rate to match the target rate $\rho_i$, strategy $K$ attempts to match the time-varying expenses $z_{i,t}$  with the gains $g_{i,t}.$
Indeed, karma losses $z_{i,t} - g_{i,t} > 0$ increase $\mu_{i,t+1}$, effectively reducing future bids; and vice versa for $z_{i,t} - g_{i,t} < 0$.
Another important novelty in strategy $K$ is that the denominator in the bid~\eqref{equ:G-b} is $\mu_{i,t}$ instead of $1 + \mu_{i,t} $, as common in the standard monetary setting.
This is a consequence of 
the fact that the valuation in karma is not known a-priori; and could lead to a rapid depletion of the budget if $\mu_{i,t}$ becomes small during the learning process even for a short transient period.
For this reason, it is necessary to introduce the lower bound $\underline{\mu}$ in Algorithm~\ref{alg:G}, which we note is unlike 
adaptive pacing algorithms in the monetary setting~\cite{balseiro2019learning}.
Moreover, the projection of multiplier $\mu_i$ on $[\underline{\mu},\overline{\mu}]$ now occurs in the bid~\eqref{equ:G-b} instead of the multiplier update~\eqref{equ:G-mu}.
The importance of this technical difference will be discussed in Section~\ref{sec:G_sim_lear}.

\section{Analysis of Adaptive Karma Pacing}\label{sec:G}

In this section, we analyze the previously derived adaptive karma pacing strategy $K$, with the main goal of establishing that it constitutes an $\varepsilon$-Nash equilibrium when adopted by all agents.
To achieve this goal, this section proceeds as follows.
In section~\ref{sec:G_stat_comp}, we establish that this strategy is asymptotically optimal for a single agent bidding against competing bids drawn from a stationary distribution (Theorem~\ref{thm:G_stat_comp}).
Section~\ref{sec:G_sim_lear} then establishes that the learning dynamics converge to a unique stationary point when all agents follow strategy $K$ (Theorem~\ref{thm:G_sim_lear_cv}).
Finally, Section~\ref{sec:G_eps_NE} combines these results to achieve the main goal of proving that the strategy constitutes an $\varepsilon$-Nash equilibrium under suitable conditions (Theorem~\ref{thm:G_eps_NE}).

\subsection{Asymptotic Optimality under Stationary Competition}\label{sec:G_stat_comp}

In this section, we establish that strategy $K$ is \emph{asymptotically optimal in a stationary competition setting}, where a single agent $i$ bids against competing bids $\bm{d_i}= \big(d_{i,t}^{\gamma}, d_{i,t}^{\gamma+1}\big)_{t \in [T]}$  drawn independently across time from a fixed distribution $\mathcal{D}_i$.
This section, as well as Sections~\ref{sec:G_sim_lear} and~\ref{sec:G_eps_NE}, are organized similarly as follows.
We first state the required definitions, and the new assumptions needed in the karma-based setting\footnote{Standard assumptions in the literature, as well as minor technical assumptions, are included in Appendix~\ref{sec:A} and~\ref{sec:G_add_ass}, respectively.}.
We then present a brief statement of the main result of the section (Theorem~\ref{thm:G_stat_comp} in this case)\footnote{Full theorem statements are included in Appendix~\ref{sec:G_add_ass}.},
and focus our discussion on the differences to the standard setting with no budget gains (c.f. Appendix~\ref{sec:A}).

To state the main result of this section, we must first define the \emph{expected dual objective}, \emph{expected gain}, \emph{expected expenditure}, and \emph{expected karma loss} when agent $i$ follows strategy $K$.
For a fixed multiplier $\mu_i > 0$, these quantities are respectively given by 
\begin{equation} \label{equ:G-dual-objective} 
\small
    \begin{aligned}
        \Psi_i^{0}(\mu_i) &= \Exp_{v_i, d_i} \left[v_i -\mu_i g_i - (\Delta v_i - \mu_i d_i^\gamma )^+\right], \quad
     G_i(\mu_i) = \Exp_{v_i, d_i} \left[g_i \right], \\
     Z_i(\mu_i) &= \Exp_{v_i, d_i} \left[d_i^\gamma \mathds{1}\{\Delta v_i  > \mu_i d_i^\gamma \} \right], 
     \quad
     L_i(\mu_i) = Z_i(\mu_i) - G_i(\mu_i),
    \end{aligned}
\end{equation}
where the expectation is with respect to the stationary distributions $\mathcal{V}_i$ and $\mathcal{D}_i.$ 
We make two observations on the definition of the expected dual objective.
First, in line with the multiplier update~\eqref{equ:G-mu},
the dual objective $\Psi_i^0$ that strategy $K$ aims to maximize artificially considers a target expenditure rate of zero.
Notice however that $\rho_i$ appears in the expected optimal dual objective $\Psi_i^H(\mu_i) = \Exp_{v_i, d_i} \big[\delta_i^H \big(v_{i}, d_{i}^\gamma, \mu_i\big)\big]$ of Problem~\eqref{equ:G_stat_comp_hindsight-dual-2}.
For this reason, we require the initial budget to grow sublinearly with the time horizon,
so as to control the difference between $\Psi_i^0$ and $\Psi_i^H.$
\begin{assumption}[Initial Budget $k_{i,1}(T)$] \label{ass:G_stat_comp_limit_T}
    The initial budget $k_{i,1}$ is a function of $T$ satisfying $\lim_{T\to \infty} k_{i,1}(T) /T = 0.$
\end{assumption}

Second, 
Equation~\eqref{equ:G-dual-objective} is defined in terms of the actual gain $g_i$ rather than the maximum possible gain $\frac{\gamma}{N} \: d^\gamma_i$ used in the relaxed problems~\eqref{equ:G_lowerbound_cost_H}--\eqref{equ:G_stat_comp_hindsight-dual-2}.
This discrepancy implies yet another gap between $\Psi_i^0$ and $\Psi_i^H$,
since
if agent $i$ is not among the auction winners, it can 
gain more by bidding $d^{\gamma+1}_i < b_i \leq d^\gamma_i$.
For this reason, it is convenient to define the \emph{residual gain} 
$\hat{\varepsilon} = \frac{\gamma}{N} \Exp_{d_i}\big[d^\gamma_i - d^{\gamma+1}_i \big]$, that is, the expected maximum additional gain that agent $i$ can get by becoming the price setter. 

We moreover denote by $\mu^{\star0}_i >0$ the \emph{stationary multiplier} that satisfies $L_i \big(\mu_i^{\star0}\big)=0$ and causes the expected expenditure to equal the expected gain.
Finally, we define the notion of \emph{hitting time} as
\begin{equation}\label{equ:G-hit-time}
\small
    \begin{aligned}
        \mathscr{T}_i &= \min \left\{\mathscr{T}_i^k, \mathscr{T}_i^{\underline{\mu}}, \mathscr{T}_i^{\overline{\mu}} \right\},  \quad \text{where }
         \mathscr{T}_i^k = \mathop{\mathrm{argmax}}_{t\in[T]} \left\{ \forall s \in [t], k_{i,s} \geq \Delta/\underline{\mu} \right\}, \\
    \mathscr{T}_i^{\underline{\mu}} &= \mathop{\mathrm{argmax}}\limits_{t\in[T]} \left\{ \forall s \in [t], \mu_{i,s} \geq \underline{\mu} \right\}, 
    \quad
    \mathscr{T}_i^{\overline{\mu}} = \mathop{\mathrm{argmax}}\limits_{t\in[T]} \Big\{ \forall s \in [t], \mu_{i,s} \leq \overline{\mu} \Big\}.
    \end{aligned}
\end{equation}
This is the latest time step which guarantees that $b_{i,t} = \Delta v_{i,t} / \mu_{i,t}$
in~\eqref{equ:G-b} for any valuation $v_{i,t} \in [0,1]$.
By definition, the hitting time is a stricter notion than the \emph{budget depletion time} $\mathscr{T}_i^k$ used in the standard setting with no budget gains (c.f. Appendix~\ref{sec:A}).
This modification is needed since the projection occurs in the bid~\eqref{equ:G-b} instead of the multiplier update~\eqref{equ:G-mu} in strategy $K$, and in turn requires the following additional assumption to establish our result.

\begin{assumption}[Control of Hitting Time] \label{ass:G_stat_comp_hitting_time_main}

    The following 
    holds:
    \begin{enumerate}[label=\ref{ass:G_stat_comp_hitting_time_main}.\arabic*]
      
        \item Distribution $\mathcal{V}_i$ has support in $\big[\underline{v_i}, 1\big]$, where $0 < \underline{v_i}< 1$;\label{ass:G_stat_comp_hitting_time-1}
        
        \item Distribution $\mathcal{D}_i$ has support in $\big[\underline{d_i}, \overline{d_i}\big]^2$, where $0 < \underline{d_i} < \overline{d_i}$.\label{ass:G_stat_comp_hitting_time-2}
        
    \end{enumerate}
\end{assumption}

We are now ready to state the main result regarding the asymptotic optimality of strategy $K$ under stationary competition.

\begin{theorem}[Asymptotic Optimality under Stationary Competition] \label{thm:G_stat_comp}
    There exists a constant $C \in \mathbb{R}_+$ such that the average expected regret of an agent $i \in \mathcal{N}$ for following strategy $K$ in the stationary competition setting satisfies
\begin{equation*} 
\begin{aligned}
\small
      &\frac{1}{T} \Exp_{\bm{v_i}, \bm{d_i}} \left[ \mathcal{C}_i^{K} (\bm{v_i}, \bm{d_i}) - \mathcal{C}_i^H (\bm{v_i}, \bm{d_i})  \right] \\
        &\leq   C \left(\epsilon  +  \frac{ 1 }{\epsilon T} + \frac{ k_{i,1}  }{ T}
        + \frac{\mathbb{E}_{\bm{v_i}, \bm{d_i}} \left[T - \mathscr{T}_i  \right]}{T} + \hat{\varepsilon}\right)
    \end{aligned}
\end{equation*}
Moreover, for suitably chosen parameters,
strategy $K$ asymptotically converges to an $O(\hat{\varepsilon})$-neighborhood of the optimal expected cost with the benefit of hindsight, i.e., 
\begin{equation*}
    \small
    \lim_{T \to \infty} \: \frac{1}{T} \: \Exp_{\bm{v_i},\bm{d_i}} \left[ \mathcal{C}_i^K (\bm{v_i},\bm{d_i}) - \mathcal{C}_i^H (\bm{v_i},\bm{d_i}) \right] = O(\hat{\varepsilon}).
\end{equation*}
\end{theorem}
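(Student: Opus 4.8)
The plan is to follow the now-standard template for regret analysis of online dual gradient ascent for budget-constrained auctions, adapted to the new features of strategy $K$. I would split the regret into three pieces: (i) the gap between the true optimal cost with the benefit of hindsight $\mathcal{C}_i^H$ and the relaxed dual value $\Psi_i^H$ (which is an upper bound by weak duality, since the relaxed problem is a relaxation of the true one); (ii) the gap between $\Psi_i^H$ and the ``target-zero'' dual objective $\Psi_i^0$ that strategy $K$ actually performs gradient ascent on; and (iii) the gap between $T\,\Psi_i^0(\mu_i^{\star 0})$ (the value at the optimal stationary multiplier) and the realized cost of strategy $K$. Piece (i) contributes at worst an $O(\hat\varepsilon)$ term plus lower-order terms, since the only relaxation that does not vanish as $T\to\infty$ is the replacement of the true gain $g_{i,t}$ by the maximal gain $\tfrac{\gamma}{N}d_{i,t}^\gamma$; the temporary-negative-balance relaxation only loses the last $k_{i,1}$ of budget, giving an $O(k_{i,1}/T)$ term. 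Piece (ii) is controlled by Assumption~\ref{ass:G_stat_comp_limit_T}: the difference between $\Psi_i^0$ and $\Psi_i^H$ is exactly the $\mu_i\rho_i$ term summed over $t$, i.e.\ $\mu_i k_{i,1}$, which is $o(T)$ and bounded by $\overline\mu\, k_{i,1}$, giving another $O(k_{i,1}/T)$ term.

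The core of the argument is piece (iii), the online-to-stationary regret of the gradient scheme. Here I would use the standard potential/telescoping argument on $(\mu_{i,t}-\mu_i^{\star 0})^2$: from the update~\eqref{equ:G-mu}, $(\mu_{i,t+1}-\mu^{\star 0}_i)^2 = (\mu_{i,t}-\mu^{\star0}_i)^2 + 2\epsilon(z_{i,t}-g_{i,t})(\mu_{i,t}-\mu^{\star0}_i) + \epsilon^2(z_{i,t}-g_{i,t})^2$. Since valuations, bids, and competing bids are all bounded (by Assumptions~\ref{ass:G_stat_comp_hitting_time-1}--\ref{ass:G_stat_comp_hitting_time-2} the bids are bounded below away from $0$, so $\mu_{i,t}$ stays bounded, and the expenditures/gains are bounded above), the quadratic term is $O(\epsilon^2)$ per step. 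Taking conditional expectations, $\Exp[z_{i,t}-g_{i,t}\mid \mu_{i,t}]$ is (up to the non-perfect-gradient issue, discussed below) the subgradient of $-\Psi_i^0$ at $\mu_{i,t}$, which by concavity of $\Psi_i^0$ and optimality of $\mu^{\star0}_i$ has the right sign to make the cross term telescope favorably. Summing over $t$ from $1$ to $\mathscr{T}_i$, rearranging, and dividing by $T$ yields the $\epsilon + 1/(\epsilon T)$ terms; the time steps after the hitting time $\mathscr{T}_i$ are bounded crudely by $\Delta$ per step, contributing the $\Exp[T-\mathscr{T}_i]/T$ term. Finally I would relate the realized cost to $\sum_t \delta^H_{i,t}(v_{i,t},d^\gamma_{i,t},\mu_{i,t})$ using that bidding $b_{i,t}=\Delta v_{i,t}/\mu_{i,t}$ wins exactly when $\Delta v_{i,t} > \mu_{i,t}d^\gamma_{i,t}$ (valid before $\mathscr{T}_i$, which is precisely why the hitting time rather than the budget-depletion time is the right object), and invoke concavity of $\Psi_i^0$ once more to pass from the pointwise multipliers to $\mu^{\star0}_i$.

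The main obstacle is the \emph{non-perfect gradient}: the quantity actually used in the update~\eqref{equ:G-mu} is $z_{i,t}-g_{i,t}$ with $g_{i,t}$ the \emph{realized} gain, whereas the subgradient of $\Psi_i^0$ defined in~\eqref{equ:G-dual-objective} uses $\Exp[g_i]$, and more subtly the dual derivation in~\eqref{equ:G_stat_comp_hindsight-dual-2} was done with the \emph{maximal} gain $\tfrac{\gamma}{N}d^\gamma_i$. I would handle this by showing that $\Exp[z_{i,t}-g_{i,t}\mid\mu_{i,t}]$ differs from a genuine subgradient of $-\Psi_i^0$ at $\mu_{i,t}$ by at most $\hat\varepsilon$ in absolute value — this is exactly the residual-gain bound, since the only way the realized gain falls short of the maximal gain is when agent $i$ loses but is not the price setter, and the expected shortfall is $\hat\varepsilon$ by definition. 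Propagating this $\hat\varepsilon$-perturbation through the telescoping bound (it enters multiplied by $\epsilon|\mu_{i,t}-\mu^{\star0}_i|$, hence by $(\overline\mu-\underline\mu)$, per step) adds an $O(\hat\varepsilon)$ term to the per-step regret, which is consistent with the claimed bound. Combining all pieces gives the displayed inequality; then choosing $\epsilon = \Theta(T^{-1/2})$ and using Assumption~\ref{ass:G_stat_comp_limit_T} together with the fact that $\Exp[T-\mathscr{T}_i]/T\to 0$ (which should follow from the lower bounds on $\mathcal{V}_i,\mathcal{D}_i$ keeping $\mu_{i,t}$ in $[\underline\mu,\overline\mu]$ and $k_{i,t}$ above $\Delta/\underline\mu$ with high probability, argued separately as in the standard setting) drives every term except $O(\hat\varepsilon)$ to zero, yielding the limit statement.
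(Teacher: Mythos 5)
Your three-piece decomposition matches the paper's architecture (hindsight lower bound $\Exp[\mathcal{C}_i^H]\geq T\Psi_i^H(\mu_i^{\star H})$ via the relaxation and weak duality; a bound on $|\Psi_i^0(\mu_i^{\star0})-\Psi_i^H(\mu_i^{\star H})|$ of order $\rho_i+\hat\varepsilon$ using strong monotonicity of the loss; an online bound on the realized cost against $T\Psi_i^0(\mu_i^{\star0})$), and your accounting of where the $\hat\varepsilon$ and $k_{i,1}/T$ terms come from is essentially right. There is, however, one genuine gap in the core step (iii), and it is exactly where the claimed rate $\epsilon+\tfrac{1}{\epsilon T}$ lives. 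The expected per-period cost is $C(\mu)=\Psi_i^0(\mu)-\mu L_i(\mu)$, and $\mu_i^{\star0}$ is \emph{not} a critical point of $C$: one has $C'(\mu_i^{\star0})=-\mu_i^{\star0}Z_i'(\mu_i^{\star0})\neq 0$. Hence the Taylor expansion of $\Exp[c_{i,t}]$ around $\mu_i^{\star0}$ carries a first-order term proportional to the \emph{signed bias} $\Exp[\mu_{i,t}-\mu_i^{\star0}]$ (equivalently, a complementary-slackness term $-\sum_t\mu_{i,t}(z_{i,t}-g_{i,t})$). Your plan handles the trajectory only through the telescoping of $(\mu_{i,t}-\mu_i^{\star0})^2$ plus concavity, which controls $\sum_t s_t$ with $s_t=\Exp[(\mu_{i,t}-\mu_i^{\star0})^2]=O((1-2\lambda\epsilon)^{t}+\epsilon)$; passing to the first-order term via Jensen or Cauchy--Schwarz then yields only $\tfrac{1}{T}\sum_t s_t^{1/2}=O(\sqrt{\epsilon}+\tfrac{1}{\epsilon T})$. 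That suffices for the limit statement (and is in fact the rate of Theorem~\ref{thm:G_sim_lear_cv}), but it does not give the stated finite-$T$ bound of Theorem~\ref{thm:G_stat_comp}.

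The paper closes this gap with a separate recursion on the absolute mean error $r_t=\mu_i^{\star0}\,|\Exp[\mu_{i,t}-\mu_i^{\star0}]|$: Taylor-expanding $L_i$ in the update $\Exp[\mu_{i,t+1}-\mu_{i,t}]=\epsilon L_i(\mu_{i,t})$ gives $r_{t+1}\leq(1-\epsilon\lambda)r_t+\epsilon\,\overline{\mu}(\overline{Z''}+\overline{G''})s_t/2$, so the bias contracts geometrically with an $O(\epsilon s_t)$ forcing term and $\tfrac{1}{T}\sum_t r_t=O(\tfrac{1}{\epsilon T}+\epsilon)$ --- the bias of $\mu_{i,t}$ decays an order of magnitude faster than its standard deviation, which is precisely what upgrades $\sqrt{\epsilon}$ to $\epsilon$. (An alternative fix, closer to \cite{balseiro2019learning}, is to note that $z_{i,t}-g_{i,t}=(\mu_{i,t+1}-\mu_{i,t})/\epsilon$ exactly before the hitting time and Abel-sum $\sum_t\mu_{i,t}(\mu_{i,t+1}-\mu_{i,t})$ pathwise, which also yields $O(1/\epsilon+\epsilon T)$; either route must appear explicitly.) Two smaller points: the discrepancy between the realized update direction $L_i(\mu_{i,t})$ and the actual derivative $\Psi_i^{0\prime}(\mu_{i,t})=L_i(\mu_{i,t})-\mu_{i,t}G_i'(\mu_{i,t})$ is indeed $O(\hat\varepsilon)$ as you suspect, but the cleaner treatment (the paper's) is to target the zero of $L_i$ directly via its strong monotonicity rather than the maximizer of $\Psi_i^0$; and the paper controls the hitting time \emph{deterministically} ($\mathscr{T}_i=T$ via pigeonhole arguments under Assumptions~\ref{ass:G_stat_comp_hitting_time_main} and~\ref{ass:G_stat_comp_hitting_time}), not merely with high probability.
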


The full statement of Theorem~\ref{thm:G_stat_comp} with the required technical assumptions is included in Appendix~\ref{sec:ass_G_stat}.
The detailed proof of the theorem is moreover included in Appendix~\ref{sec:proof_G_stat_comp}, and we provide here a sketch of the proof.
The average expected optimal cost with the benefit of hindsight is lower-bounded by the maximum of the expected dual objective 
$\Psi_i^H \big(\mu_i^{\star H}\big)$ derived from the Lagrangian dual problem in Equation~\eqref{equ:G_stat_comp_hindsight-dual-2}.
Meanwhile, the average stationary competition cost of strategy $K$ is upper-bounded in terms of $\Psi^0_i \big(\mu_i^{\star0} \big)$ through a Taylor expansion in $\mu_i^{\star0}$.
Controlling this upper bound
requires to control a) the hitting time such that the budget is depleted, or the multiplier leaves its bounds, only towards the end of the horizon $T$,
as assumed in the definition of $\Psi^0_i$; and b) the expected distance of the multiplier iterates to the optimal multiplier $\mu_{i,t} - \mu^{\star0}_i$ such that $\mu_{i,t}$ converges asymptotically to $\mu^{\star0}_i$.
These two objectives are achieved with a suitable choice of the gradient step size $\epsilon .$
Finally, we bound the difference of dual objectives $\Psi^0_i \big(\mu_i^{\star0} \big) - \Psi_i^H \big(\mu_i^{\star H} \big)$ in terms of the target expenditure rate $\rho_i$ and the residual gain $\hat{\varepsilon}.$

This last step constitutes a fundamental difference to the standard setting with no budget gains, c.f., Theorem~\ref{thm:A_stat_comp} in Appendix~\ref{sec:A}.
Indeed, Theorem~\ref{thm:G_stat_comp} differs from
Theorem~\ref{thm:A_stat_comp} in that it does not establish an asymptotic average regret of zero but rather in the order of the residual gain $\hat{\varepsilon} 
$.
The intuition for the $O(\hat{\varepsilon})$ term is that, without the benefit of hindsight, agent $i$ will always regret not setting the price to the a priori unknown maximum  $d_i^\gamma$, instead of $d_i^{\gamma+1}$, when losing the auction.
However, in cases where $d^\gamma_i - d^{\gamma+1}_i$ correspond to the distance between two adjacent independent samples from a common distribution $\mathcal{D}_{-i}$, the residual gain $\hat{\varepsilon}$ diminishes as the number of samples, or equivalently $N-1$, grows. For a large number of agents $N$, the residual gain $\hat{\varepsilon}$ is hence expected to be modest.

Another important difference to Theorem~\ref{thm:A_stat_comp} is that 
its asymptotic guarantee 
requires the initial budget $k_{i,1}$ to grow sublinearly rather than linearly with respect to the time horizon $T$, c.f. Assumption \ref{ass:G_stat_comp_limit_T}. 
This effectively ensures that the target expenditure rate $\rho_i$ tends to zero 
and that strategy $K$ maximizes the correct expected dual objective $\Psi_i^0$. 
However, the initial budget $k_{i,1}$ cannot be kept constant.
We show in the proof of Theorem~\ref{thm:G_stat_comp} that the multiplier under strategy $K$ is bounded at all time steps $t \in [T]$ by $\mu_{i,t} \leq \mu_{i,1} + \epsilon k_{i,1}.$ 
Since, as in the standard setting with no budget gains, it is required that $\epsilon_i $ diminishes to zero asymptotically (c.f. Assumption~\ref{ass:A_stat_comp_limit_T} in Appendix~\ref{sec:A}), $\mu_{i,t}$ would not be able to converge to any value greater than the initial value $\mu_{i,1}$ with a constant $k_{i,1}$.
We refer to this phenomenon as the \emph{vanishing box problem}.

Finally, 
the proof of Theorem~\ref{thm:G_stat_comp} 
requires establishing the sublinear growth rate of $\Exp_{\bm{v_i}, \bm{d_i}} \left[T - \mathscr{T}_i\right]$ with respect to $T$, which is harder to obtain since the hitting time $\mathscr{T}_i$ defined in Equation~\eqref{equ:G-hit-time} is a stricter notion than the budget depletion time used 
for Theorem~\ref{thm:A_stat_comp}.
This difficulty is addressed by Assumption~\ref{ass:G_stat_comp_hitting_time_main}. By introducing a strictly positive minimum valuation $\underline{v_i}$, Assumption~\ref{ass:G_stat_comp_hitting_time-1} ensures that agent $i$ always wins when the multiplier is close to $\underline{\mu}$, whereas the strictly positive minimum competing bid $\underline{d_i}$ introduced in Assumption ~\ref{ass:G_stat_comp_hitting_time-2} ensures that the agent always loses when the multiplier is close to $\overline{\mu}$. In practice, Assumption~\ref{ass:G_stat_comp_hitting_time_main} implies that agents always have a need to
participate 
in the auction.

\subsection{Convergence under Simultaneous Learning} \label{sec:G_sim_lear}

In this section, we take the next step towards our main goal of establishing that strategy $K$ constitutes an approximate Nash equilibrium when adopted by all agents.
Namely, we establish that the learning dynamics \emph{converge in the simultaneous learning setting} in which all agents follow strategy $K$, denoted by joint strategy profile $\bm{K}$.
The exact notion of convergence considered is presented in the main result of the section, Theorem~\ref{thm:G_sim_lear_cv}.

Before stating this result, we first adapt our previous definitions to the multi-agent setting.
Let $\bm{\mu}_t \in \mathbb{R}_+^N$ be the \emph{multiplier profile} stacking the multipliers $\mu_{i,t}$ of all agents $i \in \mathcal{N}$.
We extend the \emph{expected dual objective}, \emph{expected gain}, \emph{expected expenditure} and the \emph{expected loss} respectively as 
\begin{equation*}
\small
\begin{aligned}
    \Psi_i^0(\bm{\mu}) &= \Exp_{\bm{v}} \left[v_i  -\mu_i  g_i - (\Delta v_i - \mu_i d_i^\gamma )^+\right],
    \quad
     G_i(\bm{\mu}) = \Exp_{\bm{v}} \left[g_i \right],\\
     Z_i(\bm{\mu}) &= \Exp_{\bm{v}} \left[d_i^\gamma \mathds{1}\{\Delta v_i  > \mu_i d_i^\gamma \} \right], 
     \quad
     L_i(\bm{\mu}) = Z_i(\bm{\mu}) - G_i(\bm{\mu}),
    \end{aligned}
\end{equation*}
Compared to~\eqref{equ:G-dual-objective}, the expectation is now with respect to the profile of valuations $\bm{v} \sim \bm{\mathcal{V}}.$
Indeed, both the competing bid $d_i^\gamma = \gamma\textsuperscript{th}\mbox{-}\max_{j:j\neq i} \{ \Delta v_j / \mu_j\}$ and the auction price $p^{\gamma+1} = \gamma+1\textsuperscript{th}\mbox{-}\max_{j} \{ \Delta v_j / \mu_j\}$ are functions of $\bm{v}$ and $\bm{\mu}.$
We will aim to show that $\bm{\mu}_t$ converges to a \emph{stationary multiplier profile}, which is a multiplier profile $\bm{\mu^{\star0}} \in \mathbb{R}^N_{>0}$ satisfying $L_i \big(\bm{\mu^{\star0}} \big) = 0$ for all agents $i \in \mathcal{N}$.
This multiplier profile is stationary in the sense that in expectation, update rule~\eqref{equ:G-mu} will yield $\mu^\star_{i,t+1} = \mu^\star_{i,t}$ for all agents $i$, since the expected losses $Z_i \big(\bm{\mu^{\star0}}\big)$ equal the expected gain $G_i \big(\bm{\mu^{\star0}} \big)$.

Notice that stationary multipliers $\bm{\mu^{\star0}}$ are numerous: if multiplier $\bm{\mu^{\star0}}$ is stationary, so is $\eta \bm{\mu^{\star0}}$ for all $\eta > 0$, as the expenditures $Z_i$ and gains $G_i$ are equally scaled by $1 / \eta$.
This property of $\bm{\mu^{\star0}}$ is novel to the karma setting, 
as with no budget gains a unique scale is fixed by the target expenditure rates $\rho_i$.
To fix the unique $\bm{\mu^{\star0}}$ that strategy profile $\bm{K}$ converges to, 
the projection is moved from the multiplier update~\eqref{equ:G-mu} to the bid~\eqref{equ:G-b} in strategy $K$, as compared to standard adaptive pacing (c.f. strategy $A$ in Appendix~\ref{sec:A}).
This modification, combined with a shared gradient step size $\epsilon$ for all agents, implies the following property
\begin{equation} \label{equ:G-mu-hyperplane}
\small
    \begin{aligned}
    \sum_{i \in \mathcal{N}} \mu_{i,t+1} = \sum_{i \in \mathcal{N}} \mu_{i,t} + \epsilon \sum_{i \in \mathcal{N}} \left(z_{i,t} - g_{i,t} \right) \stackrel{\textup{(a)}}{=} \sum_{i \in \mathcal{N}} \mu_{i,t}, \\
    \text{hence }
     \bm{\mu_t} \in \bm{H_{\mu_1}} = \left\{\bm{\mu} \in \mathbb{R}^N \left\lvert \: \sum_{i \in \mathcal{N}} \left(\mu_i - \mu_{i,1}\right) = 0 \right. \right\}.
     \end{aligned}
\end{equation}
Property~\eqref{equ:G-mu-hyperplane} anchors the scale of $\bm{\mu^{\star0}}$ that is feasible under strategy profile $\bm{K}$ to the initial multiplier profile $\bm{\mu_1}$, and implies that the \emph{average multiplier} $\mu_m = \sum_{i \in \mathcal{N}} \mu_{i,1} / N$ is preserved over time.
Notice that Property~\eqref{equ:G-mu-hyperplane}
would not hold if the projection was in the multiplier update~\eqref{equ:G-mu}: intuitively, a projection there would cause agent $i$ to `forget' part of the history of expenses and gains, and affect the convergence of the whole population as a consequence by shifting the hyperplane of feasible profiles.

In the standard setting with no budget gains, it is common to assume strong monotonicity of the expected expenditure $\bm{Z}$, c.f.
\cite{balseiro2019learning} which shows that it is implied by a \emph{diagonal strict concavity} condition~\cite{rosen1965existence}, and that it always holds in symmetric settings.
In our karma setting with budget gains, 
the natural extension is to assume strong monotonicity of the expected loss $\bm{L}$
which effectively replaces the expected expenditure $\bm{Z}$.
This adaptation is however not straightforward, as the multiple zeros of $\bm{L}$ on $\bm{U} := \prod_{i\in\mathcal{N}} \left(\underline{\mu}, \overline{\mu}\right)$ would immediately break the property.
Instead, we use Property~\eqref{equ:G-mu-hyperplane} and restrict our monotonicity requirement to multipliers lying in the hyperplane $\bm{H_{\mu_1}}$.
\begin{assumption}[Monotonicity]\label{ass:G_sim_lear_strong_mono}
    The expected loss $\bm{L}$ is $\lambda$-strongly monotone over $\bm{U}\cap \bm{H_{\mu_1}}$ with parameter $\lambda >0$, i.e., for all $\bm{\mu}, \bm{\mu'} \in \bm{U}\cap \bm{H_{\mu_1}}$ 
    , it holds that $(\bm{\mu} - \bm{\mu}')^\top (\bm{L}(\bm{\mu}) - \bm{L}(\bm{\mu'}) ) \leq - \lambda \Vert \bm{\mu} - \bm{\mu'}\Vert_2^2.$
\end{assumption}
As shown in Appendix~\ref{sec:A_sim_lear_unique_mu^*}, Assumption~\ref{ass:G_sim_lear_strong_mono} ensures that the stationary multiplier profile $\bm{\mu^{\star0}}$ is unique up to a multiplicative constant if it exists.
With these preliminaries, we are ready to state the main result of this section regarding the asymptotic convergence of strategy profile $\bm{K}$,
both with respect to the multiplier profile iterates $\bm{\mu_t}$ and the strategic competition costs $C_i^{\bm{K}}$ of all agents $i \in \mathcal{N}$ defined in Equation~\eqref{equ:C-strategic}.

\begin{theorem}[Convergence under Simultaneous Learning] \label{thm:G_sim_lear_cv}
There exist constants $C_1$ and $C_2\in \mathbb{R_+}$ such that the average expected distance to the stationary multiplier profile $\bm{\mu^{\star0}} \in \bm{H_{\mu_1}}$ and the strategic competition cost of strategy profile $\bm{K}$ for any agent $i \in \mathcal{N}$ satisfy respectively
\begin{equation*}
\small
\begin{aligned}
    \frac{1}{T} \sum_{t=1}^T \Exp_{\bm{v}} \left[\left\Vert \bm{\mu_t} - \bm{\mu^{\star0}} \right\Vert_2^2 \right] 
     \leq  C_1 N \left( \epsilon + \frac{1}{\epsilon T} 
    +  \frac{\Exp_{\bm{v}} \left[ T - \underline{\mathscr{T}}  \right]}{T} \right),\\
     \frac{1}{T} \mathcal{C}_i^{\bm{K}} - \Psi_i^0 \big(\bm{\mu^{\star0}}\big)
     \leq C_2 \left( N \left( \epsilon^{1/2} +   \frac{1}{\epsilon T}\right) + \frac{ \Exp_{\bm{v}} \left[T - \underline{\mathscr{T}} \right]}{T}\right).
     \end{aligned}
\end{equation*}
Moreover, for suitably chosen parameters of strategy profile $\bm{K}$,
the multiplier profile $\bm{\mu_t}$ converges in expectation to the stationary 
profile $\bm{\mu^{\star0}}$, and the average strategic competition cost $\mathcal{C}_i^{\bm{K}}$ converges to the expected dual objective $\Psi^0_i \big(\bm{\mu^{\star0}} \big)$ for all agents $i \in \mathcal{N}$, i.e., 
$\lim\limits_{T \to \infty} \: \frac{1}{T} \sum_{t=1}^T \Exp_{\bm{v}} \left[ \big\Vert \bm{\mu_t} - \bm{\mu^{\star0}} \big\Vert_2^2 \right] =  \lim\limits_{T \rightarrow \infty} \: \frac{1}{T} \: \mathcal{C}_i^{\bm{K}} - \Psi_i^0 \big(\bm{\mu^{\star0}} \big) = 0. $
\end{theorem}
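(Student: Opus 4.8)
The plan is to follow the two-stage route used for simultaneous learning in the monetary case (Appendix~\ref{sec:A}), adapted to the hyperplane invariance~\eqref{equ:G-mu-hyperplane} and to the fact that under strategy $K$ the projection lives in the bid~\eqref{equ:G-b} rather than in the multiplier update~\eqref{equ:G-mu}: first control the multiplier iterates, then read off the cost. Write $\mathcal{F}_{t-1}$ for the information available at the start of round $t$, so that $\bm{\mu_t}$, the budgets $(k_{j,t})_{j}$, and hence the event $\{t \leq \underline{\mathscr{T}}\}$ are $\mathcal{F}_{t-1}$-measurable, and $\underline{\mathscr{T}}$ behaves like a stopping time ($\mathds{1}\{t{+}1 \leq \underline{\mathscr{T}}\} \leq \mathds{1}\{t \leq \underline{\mathscr{T}}\}$). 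On $\{t \leq \underline{\mathscr{T}}\}$ every agent $j$ plays the clean bid $b_{j,t} = \Delta v_{j,t}/\mu_{j,t}$ with $\mu_{j,t} \in [\underline{\mu}, \overline{\mu}]$, so that $\Exp[z_{i,t} - g_{i,t} \mid \mathcal{F}_{t-1}] = L_i(\bm{\mu_t})$; and on every round $|z_{i,t} - g_{i,t}| \leq 2\Delta/\underline{\mu} =: \overline{L}$, since~\eqref{equ:G-b} keeps all bids below $\Delta/\underline{\mu}$.

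For the multiplier iterates I would take the Lyapunov function $V_t = \Vert \bm{\mu_t} - \bm{\mu^{\star0}} \Vert_2^2$, where $\bm{\mu^{\star0}} \in \bm{U} \cap \bm{H_{\mu_1}}$ is the (assumed) stationary profile on the invariant hyperplane. Expanding $V_{t+1}$ along~\eqref{equ:G-mu}, taking $\Exp[\,\cdot \mid \mathcal{F}_{t-1}]$ on $\{t \leq \underline{\mathscr{T}}\}$, and combining the unbiased-gradient identity with $L_i(\bm{\mu^{\star0}}) = 0$ and the $\lambda$-strong monotonicity of $\bm{L}$ on $\bm{U} \cap \bm{H_{\mu_1}}$ (Assumption~\ref{ass:G_sim_lear_strong_mono}, legitimate since both $\bm{\mu_t}$ and $\bm{\mu^{\star0}}$ lie there), while bounding the quadratic remainder by $\epsilon^2 N \overline{L}^2$, yields the contraction $\Exp[V_{t+1}\mathds{1}\{t \leq \underline{\mathscr{T}}\}] \leq (1 - 2\epsilon\lambda)\,\Exp[V_t\mathds{1}\{t \leq \underline{\mathscr{T}}\}] + \epsilon^2 N \overline{L}^2$. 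Summing this geometric recursion, and bounding the rounds $t > \underline{\mathscr{T}}$ crudely by $V_t = O(N)$ (the iterates stay bounded since $\mu_{i,t} \leq \mu_{i,1} + \epsilon k_{i,1}$), gives the first displayed inequality, the post-hitting-time contribution being exactly the $\Exp[T - \underline{\mathscr{T}}]/T$ term.

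For the cost, the workhorse is the pointwise identity $c_{i,t} = \psi_{i,t} - \mu_{i,t}(z_{i,t} - g_{i,t})$, valid whenever $b_{i,t} = \Delta v_{i,t}/\mu_{i,t}$, with $\psi_{i,t} := v_{i,t} - \mu_{i,t} g_{i,t} - (\Delta v_{i,t} - \mu_{i,t} d_{i,t}^\gamma)^+$ and $\Exp[\psi_{i,t} \mid \mathcal{F}_{t-1}] = \Psi_i^0(\bm{\mu_t})$ on $\{t \leq \underline{\mathscr{T}}\}$. Summing over $t \leq \underline{\mathscr{T}}$ and rewriting $\mu_{i,t}(\mu_{i,t+1} - \mu_{i,t}) = \tfrac12(\mu_{i,t+1}^2 - \mu_{i,t}^2) - \tfrac12(\mu_{i,t+1} - \mu_{i,t})^2$ telescopes the second piece into a boundary term of order $1/\epsilon$ plus a quadratic-variation term of order $\epsilon T$, while the rounds $t > \underline{\mathscr{T}}$ contribute at most $\Exp[T - \underline{\mathscr{T}}]$ via $c_{i,t} \leq v_{i,t} \leq 1$. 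It then remains to replace $\sum_t \Exp[\Psi_i^0(\bm{\mu_t})\mathds{1}\{t \leq \underline{\mathscr{T}}\}]$ by $T\,\Psi_i^0(\bm{\mu^{\star0}})$, which I would do through the Lipschitz continuity of $\Psi_i^0$ on $\bm{U}$ followed by a Jensen/Cauchy--Schwarz passage from the mean-square bound of the previous step to the mean absolute deviation. Dividing by $T$ and collecting terms yields the second inequality; the two asymptotic statements then follow, exactly as for Theorem~\ref{thm:G_stat_comp}, by choosing $\epsilon = \epsilon(T) \to 0$ with $\epsilon T \to \infty$ and invoking the sublinearity of $\Exp[T - \underline{\mathscr{T}}]$.

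The hard parts are the following. The drift in the first step is inherently joint — $d_i^\gamma$ and $g_i$ depend on the whole profile $\bm{\mu_t}$ — so the recursion must be run as a single $N$-dimensional inequality, and it is precisely the invariance~\eqref{equ:G-mu-hyperplane} together with restricting strong monotonicity to $\bm{H_{\mu_1}}$ (on which $\bm{L}$ has a unique zero up to scaling, unlike on all of $\bm{U}$) that makes it close; this is the structural reason the projection was moved from~\eqref{equ:G-mu} to~\eqref{equ:G-b}. Second, showing that $\Exp[T - \underline{\mathscr{T}}]$ is sublinear in $T$ is harder than the analogous budget-depletion estimate in the monetary setting, since $\underline{\mathscr{T}}$ now also guards against any multiplier leaving $[\underline{\mu}, \overline{\mu}]$, which is where the population-wide form of Assumption~\ref{ass:G_stat_comp_hitting_time_main} re-enters. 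Finally, the rate bookkeeping in the cost step is delicate: converting the averaged mean-square multiplier guarantee into a bound on $\tfrac1T \mathcal{C}_i^{\bm{K}} - \Psi_i^0(\bm{\mu^{\star0}})$ costs a square root, which has to be absorbed into the stated order of the constants.
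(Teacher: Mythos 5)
Your proposal is correct and follows the paper's architecture for the most part: the first bound is obtained exactly as in the paper (a joint $N$-dimensional drift recursion for $s_t=\Exp[\Vert\bm{\mu_t}-\bm{\mu^{\star0}}\Vert_2^2\mid t\leq\underline{\mathscr{T}}]$ using the unbiasedness of $z_{i,t}-g_{i,t}$, the strong monotonicity of $\bm{L}$ on $\bm{U}\cap\bm{H_{\mu_1}}$, and the uniform bound on the increments, with times $t>\underline{\mathscr{T}}$ handled crudely), and the cost bound starts from the same pointwise identity $c_{i,t}=\psi_{i,t}-\mu_{i,t}(z_{i,t}-g_{i,t})$ with the $\Psi_i^0$ part controlled by Lipschitz continuity plus Jensen, which is where the $\epsilon^{1/2}$ rate enters in both arguments. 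Where you genuinely diverge is the treatment of the $-\mu_{i,t}(z_{i,t}-g_{i,t})$ term: the paper takes conditional expectations to get $-\mu_{i,t}L_i(\bm{\mu_t})$ and then invokes $L_i(\bm{\mu^{\star0}})=0$ together with Lipschitz continuity of $L_i$ (which it must establish through a lengthy bound on the derivatives of the competing-bid c.d.f.\ $H_i$), whereas you telescope pathwise via $\mu_{i,t}(\mu_{i,t+1}-\mu_{i,t})=\tfrac12(\mu_{i,t+1}^2-\mu_{i,t}^2)-\tfrac12(\mu_{i,t+1}-\mu_{i,t})^2$, yielding an $O(1/\epsilon+\epsilon T)$ bound with no smoothness of $\bm{L}$ required. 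Your route is more elementary and shortcuts a substantial technical section for this theorem taken in isolation; the paper's per-period Lipschitz bound on $L_i$ is, however, reused in the proof of Theorem~\ref{thm:G_eps_NE}, where a pathwise telescoping is not available for the deviating agent, so the extra machinery is not wasted there. Two deferred items you should be explicit about if writing this up: the Lipschitz continuity of $\Psi_i^0$ on $\bm{U}$ (the paper proves it from absolute continuity of $\bm{\mathcal{V}}$ via Leibniz's rule) and the sublinearity of $\Exp[T-\underline{\mathscr{T}}]$, which in the paper is only partly deterministic (an induction bounding $\overline{\mu}_t-\underline{\mu_t}$ handles $\mathscr{T}_i^{\overline{\mu}}$ and $\mathscr{T}_i^k$) and partly assumed outright for $\mathscr{T}_i^{\underline{\mu}}$ (Assumption~\ref{ass:G-sim-learn-hitting-time-4}); your sketch correctly flags this as the hard part but does not supply the argument.
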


The full statement of Theorem~\ref{thm:G_sim_lear_cv} with the required technical assumptions is included in Appendix~\ref{sec:ass_G_sim}.
The detailed proof 
is moreover included in Appendix~\ref{sec:G_sim_lear_cv_proof} and mostly follows from similar arguments as Theorem~\ref{thm:G_stat_comp}.
The main step from the first to the second bound 
is to show that the profiles of expected dual objectives $\bm{\Psi^0}$ and expected losses $\bm{L}$ are Lipschitz continuous in $\bm{\mu}$ on the entire compact set $\bm{U}$, which can be guaranteed by the absolute continuity of the valuations. 
As before, the main difficulty in comparison to the standard setting with no budget gains, c.f. Theorem~\ref{thm:A-sim-lear-converge} in Appendix~\ref{sec:A}, lies in ensuring that the expectation $\Exp_{\bm{v}} \left[ T - \underline{\mathscr{T}}  \right]$ 
grows sublinearly with respect to $T.$
This challenge is addressed analogously as in Section~\ref{sec:G_stat_comp}:
Assumption~\ref{ass:G_stat_comp_hitting_time-1}
deterministically guarantees that $\mathscr{T}_i^{\overline{\mu}}= T$ as any agent close to $\overline{\mu}$ will lose the auction and transition away from $\overline{\mu}$. 
A similar deterministic guarantee cannot be derived at the lower bound $\underline{\mu}$, however, due to the preservation of the average multiplier $\mu_m .$
For this reason, technical Assumption~\ref{ass:G-sim-learn-hitting-time-4}, c.f. Appendix~\ref{sec:ass_G_sim}, imposes a probabilistic condition on the lower bound $\underline{\mu}$. 
This assumption is discussed further in Section~\ref{sec:disc}.

\subsection{Approximate Nash Equilibrium in Parallel Auctions}\label{sec:G_eps_NE}

In this section, we finally combine the results of the previous two sections to achieve the main goal of establishing that the profile of adaptive karma pacing strategies $\bm{K}$ constitutes an approximateNash equilibrium under suitable conditions.

Notice that one cannot immediately conclude that strategy profile $\bm{K}$ constitutes an approximate Nash equilibrium despite the previously established asymptotic guarantee on the strategic competition costs. 
Namely, Theorem~\ref{thm:G_sim_lear_cv} ensures that the multiplier profile converges asymptotically to $\bm{\mu^{\star0}}$ under strategy profile $\bm{K}$.
Therefore, agent $i$'s distribution of competing bids becomes stationary, and we showed in the proof of Theorem~\ref{thm:G_stat_comp} that $\Psi_i^0 \big(\bm{\mu^{\star0}} \big)$ lower bounds any \emph{stationary competition cost}.
However, agent $i$ could potentially improve its cost by unilaterally deviating to a strategy $\beta_i \neq K$ that causes non-convergence of $\bm{\mu_t}$ and violation of the stationary competition assumption.
For this reason, following~\cite{balseiro2019learning}, we consider a natural extension of our setting in which there are multiple \emph{parallel auctions} causing the effect of any single agent on the multipliers of others to become negligible as the number of agents grows.

\smallskip

\textbf{Parallel Auctions.}
Let there be $M \geq 1$ auctions that are held in parallel at each time step $t \in [T]$.
The number $M$ could represent different priority roads, or the same road accessed at different times of the day, or a combination thereof.
Each agent $i \in \mathcal{N}$ participates in one auction $m_{i,t} \in [M]$ per time step, where $m_{i,t}$ is drawn independently across agents and time from a fixed distribution $\bm{\pi_i} = \left(\pi_{i,m}\right)_{m \in [M]}$; each $\pi_{i,m}$ denotes the probability for agent $i$ to participate in auction $m$. We adapt the definition of competing bids accordingly as $d_{i,t}^\gamma = \gamma\textsuperscript{th}\mbox{-}\max_{j:j\neq i} \left\{ \mathds{1}\{ m_{j,t} = m_{i,t}\}  b_{j,t} \right\}$.
Finally, we consider that the aggregate payment of all $M$ auctions gets redistributed uniformly, leading to karma gains $g_{i,t} = \frac{\gamma}{N} \sum_{m \in [M]} p_{m,t}^{\gamma+1}$, where $p_{m,t}^{\gamma+1}$ is the price of auction $m$ defined as \mbox{$p_{i,t}^{\gamma+1} = \gamma+1\textsuperscript{th}\mbox{-}\max_{i\in\mathcal{N}} \{ \mathds{1}\{ m_{i,t} = m\}  b_{i,t} \}$}.
This aggregate redistribution scheme is advantageous over redistributing the payment of each auction among its agents, since the aggregation restricts the influence of a single agent over the gains, and thereby the multipliers, of others.
The distributions $\left(\bm{\pi_i}\right)_{i \in \mathcal{N}}$ yield \emph{matching probabilities} $\bm{a_i} = \left(a_{i,j}\right)_{j\neq i}$, where $a_{i,j} = \mathbb{P}\{m_{j} = m_{i}\}$ denotes the probability that agent $j$ is matched in the same auction as agent $i.$
It is straightforward to show that the previous Theorems~\ref{thm:G_stat_comp} and~\ref{thm:G_sim_lear_cv} also hold in the extended parallel auction setting.

\begin{theorem} [Approximate Nash Equilibrium] \label{thm:G_eps_NE}~
 There exists a constant $C\in \mathbb{R_+}$ such that each agent $i \in \mathcal{N}$ can decrease its average strategic competition cost by deviating from strategy $K$ to any strategy $\beta_i \in \mathcal{B}$ by at most
\begin{equation*}
\small
\begin{aligned}\label{equ:G-Nash-bound}
     \frac{1}{T} \left(\mathcal{C}_i^{K} - \mathcal{C}_i^{\beta_i, \bm{K_{-i}}} \right)
     &\leq C \Bigg( \left( \Vert \bm{a_i}\Vert_2 + \frac{M\gamma}{N} \right) \bigg( \sqrt{N\epsilon} \left(1 + \frac{1}{\epsilon^{3/2} T}\right) + \Vert \bm{a_i}\Vert_2 \\
     &+ \frac{\gamma}{\sqrt{N}}  \bigg) 
    + \left( \frac{\gamma}{N} + \frac{\overline{k_1}}{T} \right)
    +   \left( \frac{\overline{k_1}}{T} + \frac{M\gamma}{N} \right) 
    \frac{\mathbb{E}_{\bm{v}, \bm{m}} \left[T - \underline{\mathscr{T}}   \right]}{T} \Bigg)
\end{aligned}
\end{equation*}
Moreover, for suitably chosen parameters,
strategy profile $\bm{K}$ constitutes an approximate Nash equilibrium, i.e., it holds for all agents $i \in \mathcal{N}$ that
$ \lim\limits_{T, N, M \to \infty} \frac{1}{T} \left( \mathcal{C}_i^{\bm{K}} - \inf_{\beta_i \in \mathcal{B}} \mathcal{C}_i^{\beta_i, \bm{K_{-i}}} \right) = 0.$
\end{theorem}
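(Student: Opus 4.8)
The plan is to combine the two previous theorems via a triangle-inequality-style decomposition of the regret $\frac{1}{T}(\mathcal{C}_i^{K} - \mathcal{C}_i^{\beta_i,\bm{K_{-i}}})$, inserting the benchmark quantity $\Psi_i^0(\bm{\mu^{\star 0}})$. First I would write
\[
\frac{1}{T}\left(\mathcal{C}_i^{K} - \mathcal{C}_i^{\beta_i,\bm{K_{-i}}}\right)
= \underbrace{\left(\tfrac{1}{T}\mathcal{C}_i^{K} - \Psi_i^0(\bm{\mu^{\star 0}})\right)}_{\text{(I)}}
+ \underbrace{\left(\Psi_i^0(\bm{\mu^{\star 0}}) - \tfrac{1}{T}\mathcal{C}_i^{\beta_i,\bm{K_{-i}}}\right)}_{\text{(II)}}.
\]
Term (I) is handled directly by Theorem~\ref{thm:G_sim_lear_cv} (extended to the parallel-auction setting, as the excerpt claims is straightforward), giving the $\sqrt{N\epsilon}$, $\frac{1}{\epsilon T}$, and hitting-time contributions. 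The substantive work is in bounding (II): I must show that $\Psi_i^0(\bm{\mu^{\star 0}})$ is an approximate lower bound on $\frac{1}{T}\mathcal{C}_i^{\beta_i,\bm{K_{-i}}}$, the cost under \emph{any} deviation $\beta_i$, uniformly over $\mathcal{B}$.

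The key idea for (II) — borrowed from~\cite{balseiro2019learning} but requiring the karma-specific adaptations — is that when agent $i$ deviates unilaterally, the other agents still run strategy $K$, and a single deviating agent can perturb each opponent's multiplier only through the redistributed gains, whose per-step effect on $\mu_{j,t}$ is $O(\epsilon \gamma/N)$ (aggregate redistribution), and through direct competition in the matched auction, which happens with probability controlled by $\bm{a_i}$. So the opponents' multiplier profile stays within a shrinking neighborhood of what it would be absent the deviation, and hence $d_{i,t}^\gamma$ is close (in distribution, in an averaged sense) to an i.i.d.\ draw from a stationary $\mathcal{D}_i$ pinned by $\bm{\mu^{\star 0}_{-i}}$. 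Against a genuinely stationary competition, the hindsight benchmark of the \emph{deviator} is at least $\Psi_i^H(\mu_i^{\star H}) \ge \Psi_i^0(\bm{\mu^{\star 0}}) - O(\rho_i + \hat\varepsilon)$ — this is exactly the dual weak-duality chain established in the proof of Theorem~\ref{thm:G_stat_comp}, where $\Psi_i^0$ lower-bounds any stationary competition cost. So I would: (a) quantify the drift of $\bm{\mu_{-i,t}}$ caused by the deviation in terms of $\|\bm{a_i}\|_2$ and $M\gamma/N$ over the horizon; (b) translate this multiplier drift into a bound on how far $\mathcal{C}_i^{\beta_i,\bm{K_{-i}}}$ can fall below the cost against the idealized stationary competition, using Lipschitz continuity of the expenditure/gain/dual maps in $\bm{\mu}$; (c) invoke weak duality and Theorem~\ref{thm:G_stat_comp}'s dual-gap bound to replace the idealized stationary-competition cost by $\Psi_i^0(\bm{\mu^{\star 0}})$ up to $O(\gamma/N + \overline{k_1}/T)$ terms (noting $\rho_i = k_{i,1}/T \le \overline{k_1}/T$ and $\hat\varepsilon = O(\gamma/N)$ under the matching-probability decay). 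Assembling (a)--(c) with (I) yields the displayed bound, and taking $\epsilon \to 0$ slowly, $T,N,M \to \infty$ with $\|\bm{a_i}\|_2 \to 0$ and $M\gamma/N$ bounded (so that the hitting-time term vanishes via the extended hitting-time control) gives the approximate Nash conclusion.

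The main obstacle, I expect, is step (a)--(b): controlling the deviator's influence on the opponents' learning dynamics and converting it into a cost bound. Unlike the monetary setting, the deviator here affects opponents \emph{both} through head-to-head auction competition (weight $\|\bm{a_i}\|_2$) \emph{and} through the global karma redistribution channel (weight $M\gamma/N$), and crucially the opponents' multiplier update $\mu_{j,t+1} = \mu_{j,t} + \epsilon(z_{j,t} - g_{j,t})$ accumulates these perturbations over all $T$ steps, so a naive bound would blow up. The fix is that the per-step perturbation is $O(\epsilon)$ and the learning dynamics are contractive toward $\bm{\mu^{\star 0}}$ by the strong monotonicity Assumption~\ref{ass:G_sim_lear_strong_mono}, so the accumulated drift stays bounded — but making this rigorous requires carefully re-deriving the convergence estimate of Theorem~\ref{thm:G_sim_lear_cv} with an extra adversarial perturbation term, and checking that the hyperplane-preservation Property~\eqref{equ:G-mu-hyperplane} is only mildly violated (by $O(\epsilon \gamma/N)$ per step from the deviator's own off-$K$ expenditure) so that the unique-scale argument still applies. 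A secondary subtlety is that the $\sup$ over all $\beta_i \in \mathcal{B}$ must be uniform, which is fine because the bound on (II) is derived from properties of the opponents' dynamics that hold regardless of $\beta_i$, combined with the universal weak-duality lower bound $\Psi_i^0(\bm{\mu^{\star 0}}) \le \frac{1}{T}\mathcal{C}_i^{\tilde\beta_i}$ for stationary competition.
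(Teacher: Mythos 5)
Your proposal matches the paper's proof in all essentials: the same decomposition around $\Psi_i^0(\bm{\mu^{\star0}})$, the same weak-duality/Lagrangian lower bound on the deviator's cost (using that any $\beta_i$ respects the budget constraint on average), the same Lipschitz-continuity argument, and the same key technical step of re-deriving the multiplier MSE recursion with an extra perturbation term of order $\|\bm{a_i}\|_2 + \gamma/\sqrt{N}$ coming from the deviator's influence through both the matching and redistribution channels (the paper's Lemma~\ref{lem:G_bound_MSE_eps_NE}). The only cosmetic difference is that the paper bounds term (II) directly via $\Psi_i^0$ evaluated at the realized multiplier profile $\bm{\mu_t}$ rather than detouring through $\Psi_i^H$ and an "approximately stationary competition" argument, but this does not change the substance.
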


The full statement of Theorem~\ref{thm:G_eps_NE} with the required technical assumptions is included in Appendix~\ref{sec:ass_G_NE}.
The detailed proof of Theorem~\ref{thm:G_eps_NE},
included in Appendix~\ref{sec:G_eps_NE_proof},
involves lower-bounding the average expected cost under strategy $\beta_i$ in terms of the optimal expected dual objective $\Psi_i^0 \big(\bm{\mu^{\star0}} \big)$, and showing that asymptotically agent $i$ cannot affect the multiplier profile of the other agents which converges to $\bm{\mu^{\star0}}$.
Competition hence becomes stationary, for which strategy $K$ is optimal.
While the proof follows a similar structure as in the standard setting with no budget gains, notice however that the bound in Theorem~\ref{thm:G_eps_NE}
is substantially different from its counterpart with no budget gains, 
c.f. Theorem~\ref{thm:A_eps_NE} in Appendix~\ref{sec:A}, and requires adapting the technical assumptions in order to establish the asymptotic guarantee.

This achieves the main goal of our analysis.
For the class of karma mechanisms with redistribution of payments, we have devised the simple adaptive karma pacing strategy $K$ and provided conditions in which it constitutes an approximate Nash equilibrium.

\section{Discussion}\label{sec:disc}

In this paper, we devised a learning strategy, called \emph{adaptive karma pacing}, that learns to bid optimally in karma mechanisms in which payments are redistributed in every time step.
This simple strategy constitutes an $\varepsilon$-Nash equilibrium in large populations, and can hence be effectively employed to provide decision support,
which is an important step toward the practical implementation of these mechanisms.

\smallskip

\emph{Welfare implications.}
Adaptive karma pacing results in bids that are linear in the agents' private valuations.
In a symmetric setting with a common valuation distribution, the simultaneous adoption of adaptive karma pacing converges asymptotically to the same multiplier for all agents, and bids become \emph{truth-revealing}.
Therefore, resources will be allocated efficiently to the agents with the highest private valuation, and without the need to use money.
In a non-symmetric setting, adaptive karma pacing will be similarly efficient among identical agents,
and further interpersonal comparability assumptions~\cite{roberts1980interpersonal} are needed to assess its welfare properties across agents with different valuation distributions.

\smallskip

\emph{Comparison to adaptive pacing in monetary auctions.}

Owing to the redistribution of karma, the possibility to gain karma and manipulate that gain, as well as the preservation of karma per capita, lead to several conceptual and technical novelties.
These novelties include the inevitable regret term 
in Theorem~\ref{thm:G_stat_comp} stemming from the possibility of manipulating the gain upon losing;
the impossibility for agents to simultaneously track their target expenditure rates and deplete their budgets due to the preservation of karma;
the non-uniqueness of stationary multiplier profiles, which requires moving the multiplier projection from the multiplier update to the bid in Algorithm~\ref{alg:G};
and, as a consequence, the need to control a stricter notion of hitting time in the proofs, which in addition to the budget depletion time considered in monetary auctions includes the first time a multiplier is projected.

\smallskip

\emph{Discussion of assumptions.}
Our main results require a number of technical assumptions which, we argue, are not highly restrictive.
These assumptions can be categorized as follows.
The assumptions on \emph{valuation and competing bid distributions} (e.g. the absolute continuity of valuations
and Assumption~\ref{ass:A_stat_comp_imply_differentiability+concavity}) are mild continuity and differentiability assumptions that are common in the literature, including in the standard monetary setting~\cite{balseiro2019learning}.
Nonetheless, we performed numerical experiments with discrete and unbounded valuation distributions to test the robustness of our results, c.f. Figures~\ref{fig:non_cont_v_perfs_vs_H} and~\ref{fig:non_cont_v_cv_mu} in Appendix~\ref{sec:numerical}, which show examples in which Theorems~\ref{thm:G_stat_comp} and Theorem~\ref{thm:G_sim_lear_cv} hold under discrete uniform and geometric distributions.

The assumptions on \emph{input parameters} of the adaptive karma pacing strategy (e.g., 
Assumptions~\ref{ass:G-parameters} and \ref{ass:G-sim-learn-parameters}) can be satisfied by design and/or tuning.
Generally, setting the initial multiplier $\mu_{i,1}$ close to the center of a sufficiently low $\underline{\mu}$ and a sufficiently high $\overline{\mu}$, and using a sufficiently small gradient step size $\epsilon$, suffices to satisfy these assumptions.
These assumptions are provided for technical completeness and to provide insight on the structure of the problem.

On the other hand, assumptions requiring to \emph{vary parameters asymptotically} (e.g., Assumptions
~\ref{ass:G_stat_comp_limit_T}, ~\ref{ass:A_stat_comp_limit_T} and \ref{ass:G_eps_NE_init}) are less natural to interpret in practice since typically the time horizon $T$ and number of agents $N$ are fixed by the setting.
For this reason, we provided bounds in all our theorems that give finite time and population guarantees.

Finally, assumptions needed to \emph{control the hitting time} (Assumptions~\ref{ass:G_stat_comp_hitting_time_main}, \ref{ass:G_stat_comp_hitting_time}, \ref{ass:G-sim-learn-hitting-time} and \ref{ass:G-eps-NE-hitting-time}) arise
from our proof technique seeking to deterministically guarantee that the multipliers $\mu_{i,t}$ will never reach their bounds $\underline{\mu}$ and $\overline{\mu}.$
In fact, we performed numerical experiments verifying that the hitting time quickly approaches the end of the time horizon under many parameter combinations that extend beyond those assumed, c.f. Figure~\ref{fig:hit_time} in Appendix~\ref{sec:numerical}.
However, relaxing these assumptions requires a continuous-space Markov chain-based analysis, which provides a promising avenue for future work.

\smallskip

\emph{Future work.}
We conclude with a discussion of future research directions.
We hope that our paper inspires future interest in karma mechanisms with redistribution or other forms of karma gains.
In this paper, we studied a simple uniform redistribution scheme; however, there are rich possibilities to design karma gains that have not yet been fully explored, including redistributing karma payments partially rather than fully.
Note that our analysis does not directly extend to such settings, as the primary challenge lies in identifying appropriate assumptions to establish the convergence of the mean squared error, similar to the approach in Section~\ref{sec:G_helpful_lem}.
Moreover, in this 
class of karma mechanisms, it will be important to address the \emph{vanishing box problem}, c.f. Section~\ref{sec:G_stat_comp}, which prevents convergence from being achieved with a fixed initial budget.
Towards this direction, we performed a numerical experiment using a \emph{variable gradient step size} $\epsilon_t$, c.f. Figure~\ref{fig:fixed_budget} in Appendix~\ref{sec:numerical}, which provides preliminary evidence that asymptotic convergence can be achieved even with a fixed initial budget.
Therefore, extending the present analysis for variable gradient step sizes provides an exciting avenue for further investigation.
Finally, armed with the simple adaptive karma pacing strategy, we hope to see many practical implementations of karma mechanisms that address societal challenges involving scarce resource allocation.

\bibliographystyle{ACM-Reference-Format}
\bibliography{biblio_karma}

\appendix

\section{Related Work}\label{sec:related_work}

Our paper is situated within two main research fields: non-monetary mechanism design for repeated allocations; and learning in monetary auctions.
To the extent of our knowledge, learning in repeated non-monetary auctions has not been studied before.

\paragraph{Non-monetary mechanism design for repeated allocations.}

The celebrated Gibbard-Satterhwhite impossibility theorem~\cite{gibbard1973manipulation,satterthwaite1975strategy}, which states that it is generally impossible for mechanisms to be jointly strategyproof and non-dictatorial when no monetary transfers are allowed, has triggered a long line of works on \emph{non-monetary mechanism design}~\cite{nisan2007algorithmic} and \emph{stable matching}~\cite{roth1990two} that aim at overcoming this impossibility.
The classical setting in these areas regards static, one-shot allocations, e.g., college admissions~\cite{gale1962college}, school choice~\cite{abdulkadirouglu2003school}, organ donations~\cite{roth2004kidney}, or course allocations~\cite{budish2012multi}.
Many works in these areas employ \emph{pseudomarkets}~\cite{hylland1979efficient,pycia2023pseudomarkets,budish2011combinatorial}, in which users are issued individual budgets of artificial currency with which they acquire resources at the market clearing prices (those are, the prices at which all resources are allocated and all budgets are depleted).
Although it is generally difficult to determine the market clearing prices in large pseudomarkets~\cite{budish2011combinatorial}, these markets are non-dictatorial and \emph{approximately strategyproof in the large}: the potential gain from misreporting preferences diminishes as the number of users increases.

Relatively fewer works have investigated non-monetary \emph{repeated} allocations, however, despite being relevant to many important real-world problems, including food donations~\cite{prendergast2022allocation}, babysitting services~\cite{johnson2014analyzing}, computation resources~\cite{vuppalapati2023karma}, and public goods (e.g., transportation~\cite{elokda2024carma} or water~\cite{van2015equity}).
Some of the pioneering works in this direction rely on the abstract notion of \emph{promised utilities}~\cite{guo2020dynamic,balseiro2019multiagent}, which elicit truthfulness by letting users trade-off immediate utility for future promised utility, in a manner similar to trading-off utility for money. However, computing the correct utility to promise each user requires central knowledge of all users' valuation distributions, and the precise mechanism by which these future promises are implemented is not specified.
A more practical alternative is to employ artificial currency, similarly to pseudomarkets in the static setting, but with the distinguishing feature that users bid currency for resources in an online fashion, without perfect knowledge of their (potentially infinitely repeated) future resource valuations~\cite{guo2009competitive,gorokh2021monetary,gorokh2021remarkable,siddartha2023robust,elokda2023self}.

Most recent theoretical works on artificial currency for repeated allocations consider finite repetitions and an initial endowment of currency that cannot be replenished over the course of the allocations~\cite{gorokh2021monetary,gorokh2021remarkable,siddartha2023robust}.
In \cite{gorokh2021monetary}, a black-box method to convert truthful monetary mechanisms to approximately truthful artificial currency mechanisms is proposed, which however leads to complex mechanisms requiring exact central knowledge of all users' valuation distributions.
In contrast, \cite{gorokh2021remarkable,siddartha2023robust} adopt simple first-price auctions, and devise \emph{robust strategies} that guarantee their users a minimum level of utility irrespective of the bidding behaviors of others.
While robust strategies have the benefit of requiring little information or adaptation by the users, they could lead to overly conservative outcomes in non-adversarial settings.

On the other hand, when allocations are to be repeated indefinitely, many application-driven works have found it natural to keep the total amount of artificial currency in the system constant, either through peer-to-peer exchanges~\cite{vishnumurthy2003karma,friedman2006efficiency,johnson2014analyzing,elokda2023self}, or by redistributing the payments collected in each time step~\cite{prendergast2022allocation,elokda2024carma}.
This type of mechanisms has been employed successfully in practice, most famously in the context of food donations~\cite{prendergast2022allocation}, despite not yet being fully understood theoretically.
A difficulty in the theoretical analysis arises from the possibility to gain currency which leads to highly non-trivial budget dynamics.
For this purpose, \cite{elokda2023self} performs a mean field game-based analysis to guarantee the existence of a Nash equilibrium when the artificial currency takes the form of integer tokens, referred to as \emph{karma}.
An algorithm to compute the equilibrium is proposed that is however centralized in nature.

\paragraph{Learning in repeated monetary auctions}

Repeated monetary auctions with budget constraints are part of a vast body of literature that considers online decision-making, see \cite{shalev-shwartz_online_2011} for a comprehensive survey.
Some early works employ a mean-field game framework to study these auctions~\cite{iyer_mean_2011,balseiro2015repeated}, with which it is difficult to extract analytical insights.
Instead, \cite{balseiro2019learning} establishes a comprehensive framework for \emph{adaptive pacing strategies} based on online dual gradient ascent and regret analysis.
For second-price auctions with budget constraints, it is shown that adaptive pacing strategies achieve sublinear regret in stationary competition settings, maximal competitive ratio in adversarial settings, and these strategies converge to approximate Nash equilibria when simultaneously adopted and when there are many parallel auctions.
The techniques developed in~\cite{balseiro2019learning} form the basis for the present paper, in which we adopt and extend the analysis to artifical currency-based mechanisms with and without payment redistribution.

The results in~\cite{balseiro2019learning} have been extended in several directions which however all remain in the monetary realm.
Namely, \cite{gaitonde2022budget} shows that adaptive pacing strategies guarantee at least half of the optimal welfare in both first and second price auctions regardless of the convergence of their dynamics; \cite{balseiro2023best} extends the individual guarantees of dual ascent based methods to non-convex settings; and \cite{castiglioni2022online} proposes a `best of two worlds strategy' that generalizes the results in \cite{balseiro2019learning} to first price auctions.
An important delineation in these works regards the objective that individual bidders seek to maximize~\cite{balseiro2021landscape}: works considering \emph{utility maximization}~\cite{balseiro2019learning,castiglioni2022online,balseiro2023best} include monetary payment costs in the objective; whereas works considering \emph{value maximization}~\cite{gaitonde2022budget,lucier2024autobidders}, also known as \emph{liquid welfare}, do not include monetary payment costs in the objective.
However, even works considering value maximization assume that money has known value outside the auctions, since monetary payment costs still enter the optimization in constraints either on individual bids~\cite{gaitonde2022budget} or on the total expenditure~\cite{lucier2024autobidders} (also known as \emph{return of investment constraints}).
Our paper thus complements these works as we completely drop any explicit cost of losing currency in the optimization, and moreover consider that the budget increases throughout the auction campaign by redistributing payments.
\section{Artificial Currency Mechanisms with no Budget Gains}\label{sec:A}

This appendix serves the dual purpose of reviewing standard results on adaptive pacing in monetary auctions~\cite{balseiro2019learning}, as well as tailoring these results to the artificial currency setting in which the currency has no known value outside the auctions.
We follow the same notation and consider the same setting as laid out in Section~\ref{sec:model}, but consider that auction payments are not redistributed, i.e., $g_{i,t} = 0$ for all $i \in \mathcal{N}$, $t \in [T]$.
We omit the detailed proofs of the results in this appendix, since with the proposed modifications to standard adaptive pacing, the proofs follow similar arguments as in~\cite{balseiro2019learning}.

\subsection{Derivation of Adaptive Pacing}
\label{sec:A-derivation}

\textbf{Optimal Cost with the Benefit of Hindsight.}
To derive a candidate optimal bidding strategy, we follow the dual stochastic gradient ascent technique of~\cite{balseiro2019learning}.
For a fixed realization of valuations $\bm{v_i}$ and competing bids $\bm{d_i}$, agent $i$'s optimal cost with the benefit of hindsight is given by the following optimization problem
\begin{equation}\label{equ:A_stat_comp_def_hindsight}
\begin{aligned}
\small
\mathcal{C}_i^H (\bm{v_i},\bm{d_i}) = &\min_{\bm{x_i}\in \{0,1\}^T}  \sum_{t=1}^T v_{i,t}( 1 - x_{i,t} \Delta), \quad 
\mbox{s.t.} \sum_{t=1}^T x_{i,t} d_{i,t}^\gamma \leq \rho_i T.
\end{aligned}
\end{equation}
In Problem~\eqref{equ:A_stat_comp_def_hindsight}, we may use auction outcomes $x_{i,t}$ directly as decision variables rather than bids $b_{i,t}$, since given competing bids $d_{i,t}^\gamma$, $b_{i,t}$ affect the cost and the budget only through $x_{i,t} = \mathds{1} \{b_{i,t} > d_{i,t}^\gamma\}$.
 
The Lagrangian dual problem associated with Problem~\eqref{equ:A_stat_comp_def_hindsight} is
\begin{subequations}
\label{equ:A_stat_comp_hindsight-dual-all}
\small
\begin{align}
    \mathcal{C}_i^H (\bm{v_i},\bm{d_i}) 
    &\geq \delta_i^H \big(\bm{v_i},\bm{d_i^\gamma}\big)\\ 
    &:= \sup\limits_{\mu_i \geq 0} \min\limits_{\bm{x}_i \in \{0,1\}^T} \sum_{t=1}^T  x_{i,t} \left(\mu_i d_{i,t}^\gamma - \Delta v_{i,t}\right) + v_{i,t} -\mu_i \rho_i \label{equ:A_stat_comp_hindsight-dual} \\
    &= \sup_{\mu_i \geq 0} \sum_{t=1}^T  v_{i,t}  - \mu_i \rho_i  - \left(\Delta v_{i,t} - \mu_i d_{i,t}^\gamma \right)^+ \label{equ:A_stat_comp_hindsight-dual-2} \\
    &:=\sup\limits_{\mu_i \geq 0} \sum_{t=1}^T \delta_{i,t} \big(v_{i,t}, d_{i,t}^\gamma, \mu_i\big). 
\end{align}
\end{subequations}
Notice that for a fixed multiplier $\mu_i \geq 0$, the inner minimum in~\eqref{equ:A_stat_comp_hindsight-dual} is obtained by winning all auctions satisfying $\Delta v_{i,t} > \mu_i d_{i,t}^\gamma$.
This can be achieved by bidding $b_{i,t}=\Delta v_{i,t} / \mu_i$, and yields~\eqref{equ:A_stat_comp_hindsight-dual-2}.

\smallskip

\textbf{Adaptive Pacing.}
We perform a stochastic gradient ascent scheme in order to approximately solve the dual Problem~\eqref{equ:A_stat_comp_hindsight-dual-all} using online observations.
Namely, the agent considers a candidate optimal multiplier $\mu_{i,t}$ and places its bid accordingly with $b_{i,t}= \Delta v_{i,t} / \mu_{i,t}$.
It then updates $\mu_{i,t+1}$ using the subgradient given by
$
    \frac{\partial \delta_{i,t}}{ \partial \mu_{i,t}} \big(v_{i,t}, d_{i,t}^\gamma, \mu_{i,t} \big) = d_{i,t}^\gamma \mathds{1}\{b_{i,t} > d_{i,t}^\gamma \} - \rho_i 
    = z_{i,t} - \rho_i 
    $.
This yields the \emph{adaptive pacing strategy}, which is denoted by $A$ and summarized in Algorithm~\ref{alg:A}.

With respect to standard adaptive pacing in monetary auctions~\cite{balseiro2019learning}, an important difference in strategy $A$ is that the denominator in the bid~\eqref{equ:A-bid} is $\mu_{i,t}$ instead of $\mu_{i,t} + 1$.
This 
is a consequence of 
the fact that the valuation in artificial currency is not known a-priori; and could lead to a rapid depletion of the budget if $\mu_{i,t}$ becomes small during the learning process even for a short transient period.
For this reason, it is necessary to introduce the lower bound $\underline{\mu}$ in Algorithm~\ref{alg:A}.

\begin{algorithm}[ht]
\caption{\textbf{Adaptive Pacing $A$}}
\label{alg:A}

\KwInput{ Time horizon $T$, target expenditure rate $\rho_i >0$, multiplier bounds $\overline{\mu} > \underline{\mu} > 0$, gradient step size $\epsilon > 0$.}

\KwInitialize{ Initial multiplier $\mu_{i,1} \in [\underline{\mu}, \overline{\mu} ]$, initial budget $k_{i,1} = \rho_i T$.}

\For{ t=1,\dots, T:}{
\begin{enumerate}
    \item Observe the realized valuation $v_{i,t}$ and place bid 
    $ b_{i,t} = \min \left\{ \dfrac{\Delta v_{i,t}}{ \mu_{i,t}}, k_{i,t} \right\};$ \hfill\tagx[$A$-$b$]{equ:A-bid}
    \item Observe the expenditure $z_{i,t}$ and update the multiplier $\mu_{i,t+1} = P_{[\underline{\mu}, \overline{\mu} ]} \left( \mu_{i,t} + \epsilon ( z_{i,t} 
    - \rho_i) \right),$ \hfill\tagx[$A$-$\mu$]{equ:A-mu}
    
    as well as the karma budget
    $k_{i,t+1} = k_{i,t} - z_{i,t} 
    .$
\end{enumerate}}

\end{algorithm}

\subsection{Asymptotic Optimality under Stationary Competition}\label{sec:A_stat_comp}

In this section, we establish that strategy $A$ is asymptotically optimal in a stationary competition setting, where a single agent $i$ bids against competing bids $\bm{d_i}= \big(d_{i,t}^{\gamma}\big)_{t \in [T]}$  drawn independently across time from a fixed distribution $\mathcal{D}_i$.
To state the main result of this section, we must first define the \emph{expected dual objective} and \emph{expected expenditure} when agent $i$ follows strategy $A$.
For a fixed multiplier $\mu_i > 0$, these two quantities are respectively given by 
\begin{equation}\label{equ:A-stat-comp-dual}
\small
\begin{aligned}
    \Psi_i(\mu_i) &= \Exp_{v_i, \: d_i} \left[v_i  -\mu_i \rho_i - \left(\Delta v_i - \mu_i d_i^\gamma \right)^+\right],
    \\
    Z_i(\mu_i) &= \Exp_{v_i, \: d_i} \left[d_i^\gamma \: \mathds{1} \left\{\Delta v_i  > \mu_i d_i^\gamma \right\} \right],
    \end{aligned}
\end{equation}
where the expectation is with respect to the stationary distributions $\mathcal{V}_i$ and $\mathcal{D}_i^\gamma$.
Notice that in Equation~\eqref{equ:A-stat-comp-dual},
it is assumed that the budget constraint in the bid~\eqref{equ:A-bid} does not become active.
This is without loss of generality as we prove that the \emph{budget depletion time} $\mathscr{T}^k_i$, defined in Equation~\eqref{equ:G-hit-time}, asymptotically approaches the time horizon $T$, and the budget constraint indeed never becomes active until the end of the horizon.
The maximum of the expected dual objective is denoted by $\Psi_i \big(\mu^\star_i\big)$, where $\mu^\star_i > 0$ is a multiplier satisfying $Z_i \big(\mu_i^\star\big)=\rho_i$ and causing the expected expenditure to equal the target expenditure rate.

Finally, the \emph{average expected regret} of strategy $A$ is given by $ \mathcal{R}_i^A = \frac{1}{T} \: \Exp_{\bm{v_i}, \: \bm{d_i}} \left[ \mathcal{C}_i^A (\bm{v_i},\bm{d_i}) - \mathcal{C}_i^H (\bm{v_i},\bm{d_i}) \right].$
Notice that achieving $\mathcal{R}_i^A=0$ is a sufficient condition for the optimality of strategy $A$ under stationary competition, since the expected optimal cost with the benefit of hindsight $\Exp_{\bm{v_i}, \: \bm{d_i}} \left[\mathcal{C}_i^H (\bm{v_i},\bm{d_i})\right]$ provides a lower bound on the stationary competition cost~\eqref{equ:C-stat-comp} of any strategy.
The following standard assumption is required in order to ensure that $\mathcal{R}_i^A$ decays asymptotically to zero.

\begin{assumption}[$\epsilon(T)$ under Stationary Competition]\label{ass:A_stat_comp_limit_T}
The gradient step size $\epsilon$ is a function of the time horizon $T$ satisfying $\epsilon(T) \xrightarrow[T\to \infty]{} 0$ and $T \epsilon(T) \xrightarrow[T\to \infty]{} \infty$.
\end{assumption}

\begin{theorem} [Asymptotic Optimality under Stationary Competition] \label{thm:A_stat_comp}
    There exist a constant $C \in \mathbb{R}_+$ such that the average expected regret of agent $i \in \mathcal{N}$ for following strategy $A$ satisfies
    \small
\begin{equation*}
\small
      \frac{1}{T} \: \Exp_{\bm{v_i}, \: \bm{d_i}} \left[ \mathcal{C}_i^A (\bm{v_i},\bm{d_i}) - \mathcal{C}_i^H (\bm{v_i},\bm{d_i}) \right] 
    \leq C \left( \epsilon 
    + \frac{1 + \epsilon }{\epsilon T} \right).
\end{equation*}
\normalsize
    Moreover, 
    for the asymptotic framework described by Assumption~\ref{ass:A_stat_comp_limit_T},
    strategy $A$ achieves sub-linear regret and 
converges to the optimal expected cost with the benefit of hindsight, i.e.,
    \begin{equation*}
    \small
    \lim_{T \to \infty} \: \frac{1}{T} \: \Exp_{\bm{v_i},\bm{d_i}} \left[ \mathcal{C}_i^A (\bm{v_i},\bm{d_i}) - \mathcal{C}_i^H (\bm{v_i},\bm{d_i}) \right] = 
    0.
    \end{equation*}
    Furthermore, the convergence rate is at least $O(T^{-1/2})$ with the choice of $\epsilon \propto T^{-1/2}$.
\end{theorem}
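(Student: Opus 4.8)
The plan is to adapt the adaptive pacing regret analysis of~\cite{balseiro2019learning}, accounting for the two modifications forced by the absence of a face value for karma: the bid denominator $\mu_{i,t}$ instead of $1+\mu_{i,t}$, and the lower bound $\underline{\mu}$ on the multipliers. The first step is to lower bound the hindsight benchmark by the stationary dual optimum: weak duality in~\eqref{equ:A_stat_comp_hindsight-dual-all} gives $\mathcal{C}_i^H(\bm{v_i},\bm{d_i}) \geq \delta_i^H(\bm{v_i},\bm{d_i^\gamma}) \geq \sum_{t=1}^T \delta_{i,t}(v_{i,t},d_{i,t}^\gamma,\mu_i^\star)$, so taking expectations over the stationary $\mathcal{V}_i,\mathcal{D}_i$ yields $\Exp_{\bm{v_i},\bm{d_i}}[\mathcal{C}_i^H(\bm{v_i},\bm{d_i})] \geq T\,\Psi_i(\mu_i^\star)$, where $\mu_i^\star$ maximizes $\Psi_i$.

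Next I would upper bound $\mathcal{C}_i^A$ by decomposing it along the multiplier iterates. On the greedy phase $t \leq \mathscr{T}_i^k$ the budget is large enough that the bid equals $\Delta v_{i,t}/\mu_{i,t}$, so $x_{i,t}=\mathds{1}\{\Delta v_{i,t} > \mu_{i,t}d_{i,t}^\gamma\}$, hence $(\Delta v_{i,t}-\mu_{i,t}d_{i,t}^\gamma)^+ = x_{i,t}(\Delta v_{i,t}-\mu_{i,t}d_{i,t}^\gamma)$ and therefore $\delta_{i,t}(v_{i,t},d_{i,t}^\gamma,\mu_{i,t}) = c_{i,t} + \mu_{i,t}(z_{i,t}-\rho_i)$. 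Summing this identity, bounding the $O(T-\mathscr{T}_i^k)$ terms outside the greedy phase (all uniformly bounded, since $c_{i,t}\leq 1$, $z_{i,t}\leq\overline{d_i}$, $\mu_{i,t}\leq\overline{\mu}$), and taking expectations gives $\Exp[\mathcal{C}_i^A] \leq \Exp[\sum_t \delta_{i,t}(v_{i,t},d_{i,t}^\gamma,\mu_{i,t})] - \Exp[\sum_t \mu_{i,t}(z_{i,t}-\rho_i)] + C'\Exp[T-\mathscr{T}_i^k]$. Since $\mu_{i,t}$ depends only on the history before round $t$ while $(v_{i,t},d_{i,t}^\gamma)$ is independent of it, $\Exp[\delta_{i,t}(v_{i,t},d_{i,t}^\gamma,\mu_{i,t})] = \Exp[\Psi_i(\mu_{i,t})] \leq \Psi_i(\mu_i^\star)$, so the first term is at most $T\Psi_i(\mu_i^\star) \leq \Exp[\mathcal{C}_i^H]$, and the regret reduces to bounding $-\Exp[\sum_t \mu_{i,t}(z_{i,t}-\rho_i)] + C'\Exp[T-\mathscr{T}_i^k]$.

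For the dual-ascent term I would use that $z_{i,t}-\rho_i$ is a subgradient of $\delta_{i,t}(\cdot,\mu_{i,t})$ at $\mu_{i,t}$ together with nonexpansiveness of the projection onto $[\underline{\mu},\overline{\mu}]$; the standard telescoping argument then gives $-\sum_t(\mu_{i,t}-\mu_i^\star)(z_{i,t}-\rho_i) \leq \frac{(\mu_{i,1}-\mu_i^\star)^2}{2\epsilon} + \frac{\epsilon}{2}\sum_t(z_{i,t}-\rho_i)^2$. Here the modified denominator is essential: because bids are $\Delta v_{i,t}/\mu_{i,t}$, the lower bound $\underline{\mu}>0$ keeps every bid, and hence every expenditure $z_{i,t}$, below $\Delta/\underline{\mu}$, so $\sum_t(z_{i,t}-\rho_i)^2 = O(T)$ while $(\mu_{i,1}-\mu_i^\star)^2 \leq (\overline{\mu}-\underline{\mu})^2$, making this contribution $O(\epsilon T + 1/\epsilon)$. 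The residual piece of $-\Exp[\sum_t \mu_{i,t}(z_{i,t}-\rho_i)]$ is $-\mu_i^\star\big(\Exp[\sum_t z_{i,t}] - \rho_i T\big)$, that is, $\mu_i^\star$ times the expected unspent budget $\rho_i T - \Exp[\sum_t z_{i,t}] \geq 0$.

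The main obstacle is controlling this unspent budget together with $\Exp[T-\mathscr{T}_i^k]$, and showing both are $O(1/\epsilon)$ — this is the step that needs care and that again relies on $\underline{\mu}$ and on the box $[\underline{\mu},\overline{\mu}]$ being chosen large enough. I would couple the budget to the multiplier via $\mu_{i,t+1}-\mu_{i,1} = \epsilon\sum_{s\leq t}(z_{i,s}-\rho_i)$ (up to the projection corrections). If the budget is exhausted before $T$, then by definition of $\mathscr{T}_i^k$ at least $\rho_i T - \Delta/\underline{\mu}$ karma has been spent, so the unspent budget is $O(1)$, and since $\mu_{i,t}$ stays in a bounded box while the budget depletes, the coupling forces $T-\mathscr{T}_i^k = O(1/\epsilon)$; if the budget is never exhausted, the same coupling bounds $\rho_i T - \sum_t z_{i,t}$ directly by $O(1/\epsilon)$, provided $\mu$ does not remain stuck at $\overline{\mu}$ (equivalently $\mu_i^\star$ lies in the interior), which is ensured by the standard assumptions on $\overline{\mu}$ as in~\cite{balseiro2019learning}. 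Combining the three estimates gives $\Exp[\mathcal{C}_i^A]-\Exp[\mathcal{C}_i^H] \leq C(\epsilon T + 1/\epsilon + 1)$; dividing by $T$ yields the claimed bound $C(\epsilon + \tfrac{1+\epsilon}{\epsilon T})$. Finally, under Assumption~\ref{ass:A_stat_comp_limit_T} the right-hand side tends to $0$, and taking $\epsilon \propto T^{-1/2}$ balances $\epsilon T$ against $1/\epsilon$, giving the $O(T^{-1/2})$ rate.
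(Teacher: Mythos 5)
Your route is not the one the paper takes. The paper omits a detailed proof of this theorem, but its template is the detailed proof of the analogous Theorem~\ref{thm:G_stat_comp} in Appendix~\ref{sec:proof_G_stat_comp}: there the regret is controlled by a Taylor expansion of the expected cost-per-period around $\mu_i^\star$, whose first- and second-order terms are the absolute mean error $r_t$ and the mean squared error $s_t$ of the multiplier iterates, both of which are driven to zero through recursions that use the strong monotonicity of the expected expenditure (cf.\ Lemma~\ref{lem:G_bound_MSE} and Assumption~\ref{ass:G_stat_comp_L_mono}). You instead use the weak-duality/complementary-slackness decomposition $c_{i,t}=\delta_{i,t}(\mu_{i,t})-\mu_{i,t}(z_{i,t}-\rho_i)$ together with the online-gradient-ascent telescoping bound. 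That decomposition is correct (your identity $\delta_{i,t}=c_{i,t}+\mu_{i,t}(z_{i,t}-\rho_i)$ holds on the greedy phase, the conditioning argument giving $\Exp[\delta_{i,t}(\mu_{i,t})]\leq\Psi_i(\mu_i^\star)$ is sound, and the telescoping term is genuinely $O(1/\epsilon+\epsilon T)$), and it has the appeal of not needing strong concavity for those pieces.

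The gap is in the step you yourself flag as delicate: bounding the complementary-slackness term $\mu_i^\star\big(\rho_i T-\Exp[\sum_t z_{i,t}]\big)$ and $\Exp[T-\mathscr{T}_i^k]$ via the "coupling" $\mu_{i,T+1}-\mu_{i,1}=\epsilon\sum_t(z_{i,t}-\rho_i)$. That identity is broken by the projection in update~\eqref{equ:A-mu}: each projection at $\underline{\mu}$ erases up to $\epsilon\rho_i$ of recorded underspending and each projection at $\overline{\mu}$ erases up to $\epsilon\Delta/\underline{\mu}$ of recorded overspending, so the coupling only yields $\rho_iT-\sum_t z_{i,t}\leq(\overline{\mu}-\underline{\mu})/\epsilon+\rho_i\,n_L$ and $T-\mathscr{T}_i^k\leq O(1/\epsilon)+O(n_U)$, where $n_L,n_U$ count projection events. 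Nothing in your argument controls $n_L$ or $n_U$, and a priori they can be $\Theta(T)$. Closing this requires one of two ingredients you do not supply: either support/parameter conditions ensuring the agent deterministically loses near $\overline{\mu}$ and overspends near $\underline{\mu}$ (the role played by Assumptions~\ref{ass:G_stat_comp_hitting_time_main} and~\ref{ass:G_stat_comp_hitting_time} in the karma setting), or a proof that $\Exp[|\mu_{i,t}-\mu_i^\star|]$ is small so that projections are rare and $\Exp[z_{i,t}]-\rho_i=Z_i(\mu_{i,t})-Z_i(\mu_i^\star)$ is small on average --- which is exactly the strong-monotonicity-based MSE recursion that constitutes the core of the paper's (and \cite{balseiro2019learning}'s) argument. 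Equivalently: the term $-\mu_i^\star\Exp[\sum_t(z_{i,t}-\rho_i)]$ has conditional increments $\mu_i^\star(\rho_i-Z_i(\mu_{i,t}))$ with no sign control, so it cannot be bounded without establishing convergence of $\mu_{i,t}$ to $\mu_i^\star$; the OGD telescoping alone does not give that. The rest of your proof, and the final rates, are consistent with the theorem once this lemma is supplied.
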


\subsection{Convergence under Simultaneous Learning}\label{sec:A_sim_lear}

In this section, we establish that the learning dynamics \emph{converge in the simultaneous learning setting} in which all agents follow strategy $A$, denoted by joint strategy profile $\bm{A}$.
Before stating the main result of the section, we first adapt our previous definitions to the multi-agent setting.
Let $\bm{\mu}_t \in \mathbb{R}_+^N$ be the \emph{multiplier profile} stacking the multipliers $\mu_{i,t}$ of all agents $i \in \mathcal{N}$.
We extend the \emph{expected dual objective} and the \emph{expected expenditure} respectively as 
\begin{equation*}
\small
\begin{aligned}
    \Psi_i(\bm{\mu}) &= \Exp_{\bm{v}} \left[v_i  -\mu_i \rho_i - \left(\Delta v_i - \mu_i d_i^\gamma \right)^+\right],
    \\
    Z_i(\bm{\mu}) &= \Exp_{\bm{v}} \left[d_i^\gamma \: \mathds{1}\left\{\Delta v_i  > \mu_i d_i^\gamma \right\} \right].
    \end{aligned}
\end{equation*}
Compared to~\eqref{equ:A-stat-comp-dual}
, the expectation is now with respect to the profile of valuations $\bm{v} \sim \bm{\mathcal{V}}$, as given $\bm{v}$ and $\bm{\mu}$ the competing bid $d_i^\gamma = \gamma\textsuperscript{th}\mbox{-}\max_{j:j\neq i} \{ \Delta v_j / \mu_j\}$ can be uniquely determined.

We will aim to show that $\bm{\mu}_t$ converges to a \emph{stationary multiplier profile}, which is a multiplier profile $\bm{\mu^\star} \in \mathbb{R}^N_{>0}
$ satisfying $Z_i \big(\bm{\mu^\star} \big) = \rho_i$, for all agents $i \in \mathcal{N}$.
This multiplier profile is stationary in the sense that in expectation, update rule~\eqref{equ:A-mu} will yield $\mu^\star_{i,t+1} = \mu^\star_{i,t}$ for all agents $i$, since the expected losses $Z_i \big(\bm{\mu^\star} \big)$ equal the target rates $\rho_i$.
To prove the convergence of the multiplier profile, it is standard to adopt the following assumption on the expected expenditure $\bm{Z}$~\cite{balseiro2019learning}.
\begin{assumption}[Monotonicity]\label{ass:A_sim_lear_strong_mono}
    The expected expenditure $\bm{Z}$ is $\lambda$-strongly monotone over $\bm{U}= \prod_{i \in \mathcal{N}} \left(\underline{\mu}, \overline{\mu}\right)$ with parameter $\lambda > 0$, i.e., for all $\bm{\mu}, \bm{\mu'} \in \bm{U}$, it holds that $(\bm{\mu} - \bm{\mu}')^\top (\bm{Z}(\bm{\mu}) - \bm{Z}(\bm{\mu'}) ) \leq - \lambda \Vert \bm{\mu} - \bm{\mu'}\Vert_2^2$.
\end{assumption}

\begin{theorem}[Convergence under Simultaneous Learning]\label{thm:A-sim-lear-converge}
There exist constants $C_1$ and $C_2\in \mathbb{R_+}$ such that the average expected distance to the stationary multiplier profile $\bm{\mu^\star}$ and the strategic competition cost of strategy profile $\bm{A}$ for any agent $i \in \mathcal{N}$ satisfy respectively
\begin{equation*}
\small
\begin{aligned}
    \frac{1}{T} \sum_{t=1}^T \Exp_{\bm{v}} \left[\left\Vert \bm{\mu_t} - \bm{\mu^\star} \right\Vert_2^2 \right] \leq C_1 N\left( \epsilon +  \dfrac{ 1 + \epsilon}{\epsilon T}  \right),\\
     \frac{1}{T} \: \mathcal{C}_i^{\bm{A}} - \Psi_i \big(\bm{\mu^\star}\big)
    \leq C_2 \left( \frac{1}{T} + N 
    \left( \epsilon^{1/2} +  \frac{ 1}{\epsilon T} \right)\right).
    \end{aligned}
\end{equation*}
Moreover,
for the asymptotic framework described by Assumption~\ref{ass:A_stat_comp_limit_T},
the multiplier profile $\bm{\mu_t}$ converges in expectation to the stationary 
profile $\bm{\mu^\star}$, and the average strategic competition cost $\mathcal{C}_i^{\bm{A}}$ converges to the optimal expected dual objective $\Psi_i \big(\bm{\mu^\star} \big)$ for all agents $i \in \mathcal{N}$, i.e., 
$\lim_{T \to \infty} \: \frac{1}{T} \sum_{t=1}^T \Exp_{\bm{v}} \left[\Vert \bm{\mu_t} - \bm{\mu^\star} \Vert_2^2 \right] = 0 \text{ and } \lim_{T \rightarrow \infty} \: \frac{1}{T} \: \mathcal{C}_i^{\bm{A}} - \Psi_i \big(\bm{\mu^\star} \big) = 0. $ 
\end{theorem}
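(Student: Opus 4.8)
The plan is to mirror the single-agent regret analysis of Theorem~\ref{thm:A_stat_comp}, but now tracking the joint multiplier profile $\bm{\mu_t}$ and using the strong monotonicity of $\bm{Z}$ (Assumption~\ref{ass:A_sim_lear_strong_mono}) as the contraction mechanism. First I would establish a one-step drift inequality for $\Exp\big[\Vert \bm{\mu_{t+1}} - \bm{\mu^\star}\Vert_2^2\big]$. Conditioning on the history at time $t$ and using the non-expansiveness of the projection $P_{[\underline{\mu},\overline{\mu}]}$ componentwise, one gets $\Vert \bm{\mu_{t+1}} - \bm{\mu^\star}\Vert_2^2 \le \Vert \bm{\mu_t} - \bm{\mu^\star}\Vert_2^2 + 2\epsilon (\bm{\mu_t} - \bm{\mu^\star})^\top (\bm{z_t} - \bm{\rho}) + \epsilon^2 \Vert \bm{z_t} - \bm{\rho}\Vert_2^2$ on the event that the budget constraint in~\eqref{equ:A-bid} is slack for all agents (so that $z_{i,t} = d_{i,t}^\gamma \mathds{1}\{\Delta v_{i,t} > \mu_{i,t} d_{i,t}^\gamma\}$ and hence $\Exp_{\bm{v}}[\bm{z_t} \mid \bm{\mu_t}] = \bm{Z}(\bm{\mu_t})$, an unbiased subgradient). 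Taking expectations over $\bm{v_t}$ and invoking $\bm{Z}(\bm{\mu^\star}) = \bm{\rho}$ together with $\lambda$-strong monotonicity gives $(\bm{\mu_t} - \bm{\mu^\star})^\top(\bm{Z}(\bm{\mu_t}) - \bm{\rho}) \le -\lambda \Vert \bm{\mu_t} - \bm{\mu^\star}\Vert_2^2$, while the noise term is bounded by $\epsilon^2 \Vert \bm{z_t} - \bm{\rho}\Vert_2^2 \le \epsilon^2 N (\overline{d} + \overline{\rho})^2$ using boundedness of competing bids and target rates. This yields the recursion $\Exp[\Vert \bm{\mu_{t+1}} - \bm{\mu^\star}\Vert_2^2] \le (1 - 2\epsilon\lambda)\Exp[\Vert \bm{\mu_t} - \bm{\mu^\star}\Vert_2^2] + C \epsilon^2 N$, valid up to the budget depletion time.

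Next I would telescope/average this recursion over $t = 1, \dots, T$. Summing and rearranging the drift inequality, the contraction terms collapse and one obtains $\frac{1}{T}\sum_{t=1}^T \Exp[\Vert \bm{\mu_t} - \bm{\mu^\star}\Vert_2^2] \le \frac{\Vert \bm{\mu_1} - \bm{\mu^\star}\Vert_2^2}{2\epsilon\lambda T} + \frac{C\epsilon N}{2\lambda}$, where $\Vert \bm{\mu_1} - \bm{\mu^\star}\Vert_2^2 \le N(\overline{\mu} - \underline{\mu})^2$. This is precisely the first bound with constant $C_1$ absorbing $\lambda$, $\overline{\mu}$, $\underline{\mu}$, $\overline{d}$, $\overline{\rho}$ — except for the correction coming from the possibility that the budget constraint becomes active. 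That correction is controlled exactly as in the proof sketch of Theorem~\ref{thm:G_stat_comp}: the per-agent budget depletion time $\mathscr{T}_i^k$ approaches $T$, and on the residual window $[\mathscr{T}_i^k, T]$ each term contributes at most $O(1)$, giving the $\frac{\Exp[T - \underline{\mathscr{T}}]}{T}$-type contribution; under Assumption~\ref{ass:A_stat_comp_limit_T} this is shown to be $o(1)$ just as in~\cite{balseiro2019learning}, so it is subsumed in the stated $C_1 N(\epsilon + \frac{1+\epsilon}{\epsilon T})$ bound without needing an extra hitting-time term (this is the simplification available in the no-budget-gains setting, since the projection is in the multiplier update~\eqref{equ:A-mu} rather than the bid).

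For the second bound I would connect the multiplier convergence to the cost. The average strategic competition cost $\frac{1}{T}\mathcal{C}_i^{\bm{A}}$ is lower bounded by $\Psi_i(\bm{\mu^\star})$ via the dual problem~\eqref{equ:A_stat_comp_hindsight-dual-all}, and upper bounded by expanding around $\bm{\mu^\star}$: since $\bm{\Psi}$ and $\bm{Z}$ are Lipschitz in $\bm{\mu}$ on the compact box $\bm{U}$ (a consequence of the absolute continuity and bounded density of the valuations — this is the routine continuity step), a first-order expansion of $\Psi_i$ at $\bm{\mu^\star}$ plus optimality at $\bm{\mu^\star}$ controls the cost gap by $\Exp[\Vert \bm{\mu_t} - \bm{\mu^\star}\Vert_2]$, which by Jensen is at most the square root of the first bound, producing the $\epsilon^{1/2}$ exponent. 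Combining with the error between the online-strategy cost and the per-step dual objective (accounting again for the depletion-time window, contributing the $\frac{1}{T}$ and residual terms), and with the step-size regime of Assumption~\ref{ass:A_stat_comp_limit_T} driving $\epsilon \to 0$ and $\epsilon T \to \infty$, both limits vanish. The main obstacle I anticipate is not any single inequality but the bookkeeping around the budget depletion time: one must show that the event ``some agent's budget constraint in~\eqref{equ:A-bid} binds before $T$'' has vanishing aggregate effect, which requires a separate lemma (analogous to the hitting-time control in the karma setting) bounding $\Exp[T - \mathscr{T}_i^k]$ sublinearly using the concentration of cumulative expenditure around $\rho_i T$; the monotonicity and unbiased-gradient arguments themselves are then standard stochastic-approximation steps.
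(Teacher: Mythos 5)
Your proposal is correct and follows essentially the same route as the paper: the paper omits a detailed proof of this appendix result, stating it follows \cite{balseiro2019learning} with the noted modifications, and the argument it does spell out for the karma analogue (Lemma~\ref{lem:G_bound_MSE} giving the drift recursion via $\lambda$-strong monotonicity, summing to get the multiplier bound, then Lipschitz continuity of $\bm{\Psi}$ and $\bm{Z}$ plus Jensen to convert $s_t$ into the $\epsilon^{1/2}$ cost bound, with budget-depletion bookkeeping handled separately) is exactly the template you describe. Your observation that the projection sitting in the multiplier update~\eqref{equ:A-mu} spares you the stricter hitting-time control needed in the karma setting is also the correct simplification.
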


\subsection{Approximate Nash Equilibrium in Parallel Auctions}\label{sec:A_eps_NE}

In this section, we combine the results of the previous two sections to establish that the profile of adaptive pacing strategies $\bm{A}$ constitutes an $\varepsilon$-Nash equilibrium in the same parallel auction setting introduced in Section~\ref{sec:G_eps_NE}, for which the following standard assumption is needed~\cite{balseiro2019learning}.

\begin{assumption}[Matching Probabilities]\label{ass:A_eps_NE}
The matching probabilities $\left(\bm{a_i}\right)_{i \in \mathcal{N}}$ satisfy:
\begin{enumerate}[label=\ref{ass:A_eps_NE}.\arabic*]
    \item  The fraction of potential auction winners $\frac{M\gamma}{N}$ is constant for all $N \in \mathbb{N}$
    ;\label{ass:frac_winner_const}

    \item There exists a constant $\kappa>0$ such that $\sqrt{N} \max_{i\in\mathcal{N}} \Vert\bm{a_i}\Vert_2 \leq \kappa$ for all $N \in \mathbb{N}$; \label{ass:A_eps_NE-1}
    
    \item It holds that $\max\limits_{i\in\mathcal{N}} \Vert\bm{a_i}\Vert_2 \xrightarrow[T,N,M \to \infty]{} 0$.\label{ass:A_eps_NE-2}
\end{enumerate}
\end{assumption}
Notice that Assumptions~\ref{ass:A_eps_NE-1} and~\ref{ass:A_eps_NE-2} are implied by Assumption~ \ref{ass:frac_winner_const} in many cases, e.g. when agents are assigned to auctions uniformly at random.
In this case, $a_{i,j} = 1 / M^2$ for all $i\neq j$, and $\sqrt{N} \: \Vert\bm{a_i}\Vert_2 = N / M $ is constant.

\begin{theorem}[$\varepsilon$-Nash Equilibrium] \label{thm:A_eps_NE}
    There exists a constant $C\in \mathbb{R_+}$ such that each agent $i \in \mathcal{N}$ can decrease its average strategic competition cost by deviating from strategy $A$ to any strategy $\beta_i \in \mathcal{B}$ by at most
\begin{equation*}
\small
\begin{aligned}
     \frac{1}{T} \left(\mathcal{C}_i^{A} - \mathcal{C}_i^{\beta, A_{-i}} \right)
    &\leq C \Bigg( 
    \Vert \bm{a_i}\Vert_2 
    \bigg( \sqrt{N} \bigg( \epsilon + \frac{1}{ \epsilon T} \bigg)
    + 
    \Vert \bm{a_i}\Vert_2 
    \bigg) 
    +\frac{1 + \epsilon}{\epsilon T} 
   \Bigg).
\end{aligned}
\end{equation*}
Moreover,
for the asymptotic framework described by Assumptions~\ref{ass:A_stat_comp_limit_T} and~\ref{ass:A_eps_NE},
strategy profile $\bm{A}$ constitutes an $\varepsilon$-Nash equilibrium, i.e., it holds for all agents $i \in \mathcal{N}$ that
\begin{equation*}
\small
\lim_{T, N, M \to \infty} \frac{1}{T} \left( \mathcal{C}_i^{\bm{A}} - \inf_{\beta_i \in \mathcal{B}} \mathcal{C}_i^{\beta, \bm{A_{-i}}} \right) = 0.
\end{equation*}
\end{theorem}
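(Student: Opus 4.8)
The plan is to sandwich the two quantities in the equilibrium gap between expressions that both collapse to $T\,\Psi_i\big(\bm{\mu^\star}\big)$. The upper bound on $\mathcal{C}_i^{\bm{A}}$ is immediate: Theorem~\ref{thm:A-sim-lear-converge} carries over verbatim to the parallel-auction model, so $\tfrac1T\mathcal{C}_i^{\bm{A}} \le \Psi_i(\bm{\mu^\star}) + C_2\big(\tfrac1T + N(\epsilon^{1/2} + \tfrac1{\epsilon T})\big)$. All the work is the matching lower bound $\tfrac1T\mathcal{C}_i^{\beta_i,\bm{A_{-i}}} \ge \Psi_i(\bm{\mu^\star}) - (\text{vanishing terms})$, \emph{uniformly over} deviations $\beta_i \in \mathcal{B}$.

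For the lower bound I would first argue that, no matter how agent $i$ deviates, the multiplier profile $\bm{\mu_{-i,t}}$ of the remaining agents stays close to the stationary profile $\bm{\mu^\star_{-i}}$. Introduce a coupled ``phantom'' system in which agent $i$ is replaced by a copy running strategy $A$ (so all $N$ agents run $A$), with the valuation draws and auction assignments coupled across the two systems; write $\tilde{\bm{\mu}}_{-i,t}$ for the phantom multipliers. In the phantom system, Theorem~\ref{thm:A-sim-lear-converge} gives $\tfrac1T\sum_t \mathbb{E}\|\tilde{\bm{\mu}}_{-i,t} - \bm{\mu^\star}\|_2^2 = O\big(N(\epsilon + \tfrac{1+\epsilon}{\epsilon T})\big)$. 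For the discrepancy $\bm{e_t} := \bm{\mu_{-i,t}} - \tilde{\bm{\mu}}_{-i,t}$, agent $j\neq i$'s update differs across the two systems only when $m_{i,t}=m_{j,t}$, an event of probability $a_{i,j}$; since $j$ runs $A$ its per-step expenditure is bounded, $z_{j,t}\le\Delta/\underline{\mu}$, so the expected per-step perturbation of $z_{j,t}$ is $O(a_{i,j})$, feeding a recursion for $\mathbb{E}\|\bm{e_t}\|_2^2$ whose ``signal'' term is $O\big(\epsilon\,(\text{function of }\bm{a_i})\big)$ and whose ``drift'' is mean-contractive by $\lambda$-strong monotonicity of $\bm{Z}$ (Assumption~\ref{ass:A_sim_lear_strong_mono}). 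Unrolling this recursion --- exactly the mean-squared-error bookkeeping already behind Theorem~\ref{thm:A-sim-lear-converge}, now with an added deviation-induced source term --- together with standard martingale/concentration estimates yields $\tfrac1T\sum_t\mathbb{E}\|\bm{\mu_{-i,t}} - \bm{\mu^\star_{-i}}\|_2 = O\big(\sqrt N\,\|\bm{a_i}\|_2\big) + O\big(\sqrt{N(\epsilon+\tfrac1{\epsilon T})}\big)$.

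With the other agents' multipliers pinned near $\bm{\mu^\star_{-i}}$, agent $i$ essentially faces the stationary competing-bid distribution $\mathcal{D}_i$ induced by $\bm{\mu^\star_{-i}}$. For any realization of valuations and competing bids one has, pathwise, $\mathcal{C}_i^{\beta_i}(\bm{v_i},\bm{d_i}) \ge \mathcal{C}_i^H(\bm{v_i},\bm{d_i}) \ge \delta_i^H(\bm{v_i},\bm{d_i^\gamma}) \ge \sum_t \delta_{i,t}(v_{i,t},d_{i,t}^\gamma,\mu^\star_i)$, the last step by plugging the fixed multiplier $\mu^\star_i$ into the supremum defining the dual. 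Taking expectations under $\bm{A_{-i}}$ and replacing the realized competing-bid distribution by the stationary one --- at a cost of $O(\sqrt N\,\|\bm{a_i}\|_2 + \sqrt{N(\epsilon+\tfrac1{\epsilon T})})$ via Lipschitzness of $\delta_{i,t}$ in $d_{i,t}^\gamma$ (which follows from absolute continuity of $\bm{\mathcal{V}}$) combined with the trajectory bound above --- gives $\tfrac1T\mathcal{C}_i^{\beta_i,\bm{A_{-i}}} \ge \mathbb{E}[\delta_{i,1}(\cdot,\mu^\star_i)] - (\text{vanishing terms}) = \Psi_i(\bm{\mu^\star}) - (\text{vanishing terms})$. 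Subtracting from the upper bound on $\tfrac1T\mathcal{C}_i^{\bm{A}}$ yields the stated finite-$(T,N,M)$ inequality; choosing $\epsilon \propto T^{-1/2}$ (Assumption~\ref{ass:A_stat_comp_limit_T}) and letting $T,N,M\to\infty$ under Assumption~\ref{ass:A_eps_NE} (so $\|\bm{a_i}\|_2\to 0$ while $\sqrt N\,\|\bm{a_i}\|_2$ stays bounded) drives every term to zero, establishing the $\varepsilon$-Nash property.

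The main obstacle is the second paragraph: controlling the trajectory of the \emph{other} agents' multipliers uniformly over \emph{all} deviations $\beta_i$, including adversarial ones with large bids. The two delicate points are (i) that a single agent's bid perturbs the competition seen by agent $j$ only when they are matched, and even then the induced change in $j$'s realized \emph{expenditure} $z_{j,t}$ is bounded because $j$ still runs $A$ --- so the per-step perturbation is genuinely proportional to the matching probabilities rather than $O(1)$; this is precisely where the aggregate structure of the parallel-auction model and Assumption~\ref{ass:A_eps_NE} are essential; and (ii) that this small per-step perturbation does not accumulate over the $T$ steps, which is exactly what $\lambda$-strong monotonicity of $\bm{Z}$ (Assumption~\ref{ass:A_sim_lear_strong_mono}) buys in the mean-squared recursion. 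The remaining ingredients --- the hindsight/dual lower bound and the Lipschitz transfer from realized to stationary competing bids --- are routine given the machinery already developed for Theorems~\ref{thm:A_stat_comp} and~\ref{thm:A-sim-lear-converge}.
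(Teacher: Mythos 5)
Your proposal is correct and follows essentially the same route as the paper's argument (spelled out in detail for the karma analogue, Theorem~\ref{thm:G_eps_NE}, in Appendix~\ref{sec:G_eps_NE_proof}): a weak-duality lower bound on the deviator's cost obtained by plugging the fixed stationary multiplier $\mu_i^\star$ into the per-period dual, combined with a perturbed mean-squared-error recursion for the other agents' multipliers in which the deviation contributes a per-step source term of order $\|\bm{a_i}\|_2$ that is kept from accumulating by the $\lambda$-strong monotonicity of $\bm{Z}$, and a Lipschitz transfer of the dual objective in $\bm{\mu}$. Your phantom-system coupling is only a cosmetic repackaging of the paper's direct recursion for $\mathbb{E}\|\bm{\mu_t}-\bm{\mu^\star}\|_2^2$ under unilateral deviation (the analogue of Lemma~\ref{lem:G_bound_MSE_eps_NE}); the substance is the same.
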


\section{Numerical experiments}\label{sec:numerical}

This appendix presents numerical results that complement the main theoretical results presented in the body of the paper.
Figures \ref{fig:verif_thms} and~\ref{fig:test_hypoth} were produced by running $100$ simulations for each parameter combination, with the mean and the estimated $95\%$ confidence interval displayed.
They use a log-log scale with the time horizon $T$ in the x-axis and the quantity of interest in the y-axis.
Therefore, a decreasing trend implies that the quantity of interest is converging asymptotically to zero, and the slope of decrease gives the rate of convergence as a power of $T$.

Figure~\ref{fig:comp_strat} on the other hand is the result of a single simulation and shows the evolution of critical parameters throughout an episode of length $T=5000.$

\begin{figure*}[!tbh]
     \centering
     \begin{subfigure}{0.48\textwidth}
         \centering
         \includegraphics[width=\textwidth]{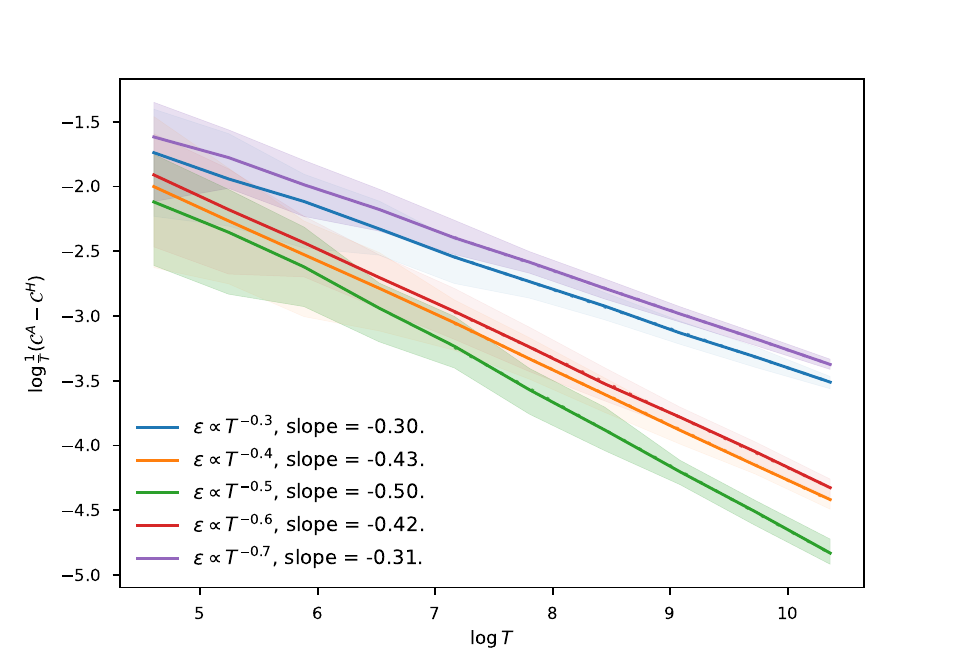}
         \caption{Theorem \ref{thm:A_stat_comp}: Stationary competition, no budget gains.}
         \label{fig:P_vs_H_no_redistrib}
     \end{subfigure}
     \hfill
     \begin{subfigure}[b]{0.48\textwidth}
         \centering
         \includegraphics[width=\textwidth]{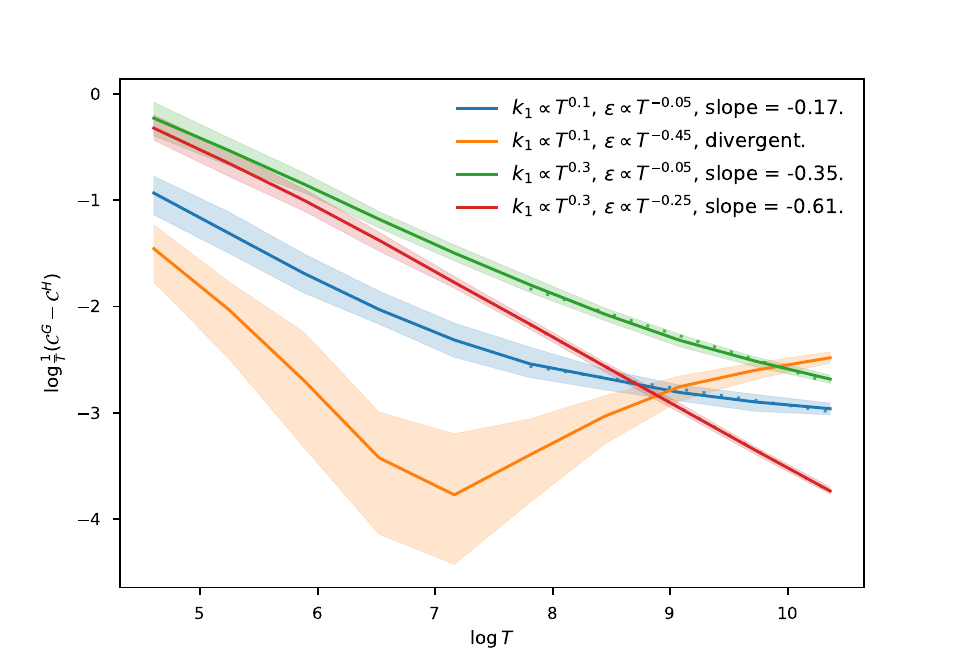}
         \caption{Theorem \ref{thm:G_stat_comp}: Stationary competition, with budget gains.}
         \label{fig:P_vs_H_with_redistrib}
     \end{subfigure}
     \hfill
     \begin{subfigure}[b]{0.48\textwidth}
         \centering
         \includegraphics[width=\textwidth]{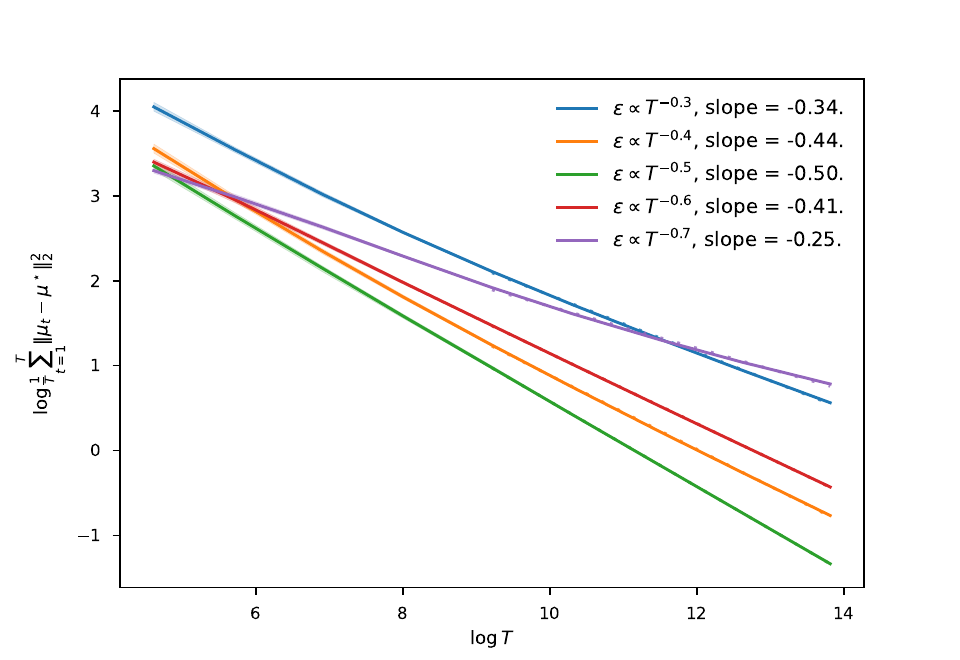}
         \caption{Theorem \ref{thm:A-sim-lear-converge}: Simultaneous learning, no budget gains.}
         \label{fig:cv_mu_no_redistrib}
     \end{subfigure}
     \hfill
     \begin{subfigure}[b]{0.48\textwidth}
         \centering
         \includegraphics[width=\textwidth]{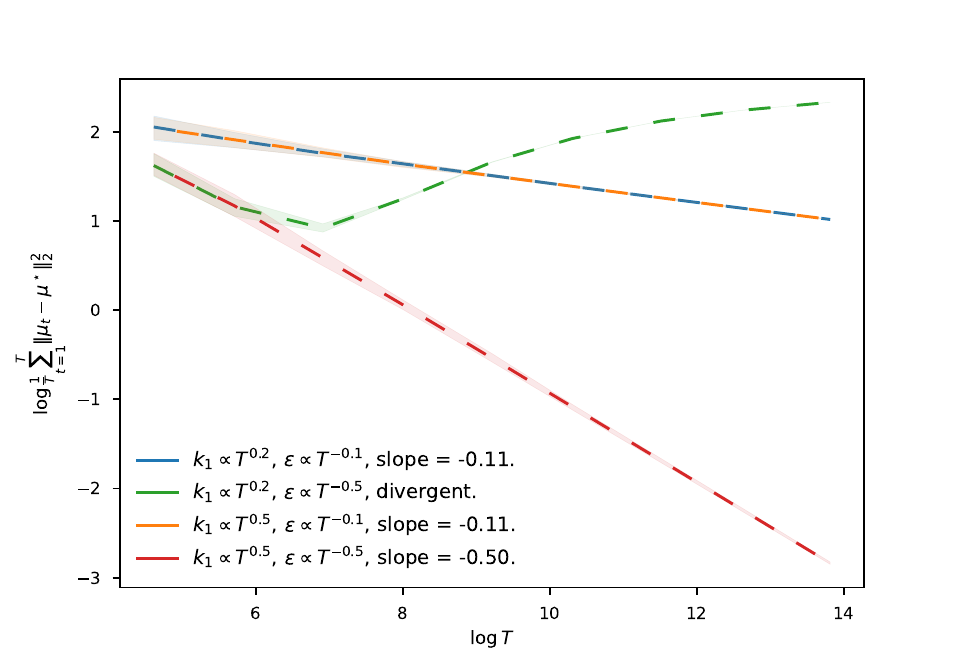}
         \caption{Theorem \ref{thm:G_sim_lear_cv}: Simultaneous learning, with budget gains.}
         \label{fig:cv_mu_with_redistrib}
     \end{subfigure}
        \caption{Numerical validation of the main theorems.
        Figures~\ref{fig:P_vs_H_no_redistrib} and~\ref{fig:P_vs_H_with_redistrib} show the convergence of costs to the minimum with the benefit of hindsight, without and with budget gains due to payment redistribution.
        Figures~\ref{fig:cv_mu_no_redistrib} and~\ref{fig:cv_mu_with_redistrib} show the convergence of multipliers under simultaneous learning, without and with budget gains.
        }
        \Description{This figure gathers 4 different subfigures, which numerically validate Theorems 4.1, 4.2, B.1 and B.2. 
        In subfigure (a), we plot in a log-log space the average regret with respect to hindsight as a function of the time horizon T, for different values of the step size. The slope in the log-log plot indeed corresponds to the values predicted by Theorem B.1, with epsilon equal to one over the square root of T as the optimal value.
        In subfigure (b), we also plot the average regret with strategy G in the stationary competition setting for different values of initial budget and epsilon. While the convergence rate does not align with the theoretical values and indicates some slack in our analysis, the qualitative predictions of Theorem 4.1 are verified. We also experimentally observe the vanishing box problem.
        In subfigures (c) and (d), we plot the distance to the stable multiplier profile for strategy $A$ and $G$ respectively. In subfigure (c), we again verify the prediction of Theorem B.2 quantitatively, whereas subfigure (d) again verifies the qualitative predictions of Theorem 4.2 (indicating some slack in the analysis), and exemplifies the vanishing box problem.}
        \label{fig:verif_thms}
\end{figure*}

\begin{figure*}[!tbh]
     \centering
     \begin{subfigure}[t]{0.48\textwidth}
         \centering
         \includegraphics[width=\textwidth]{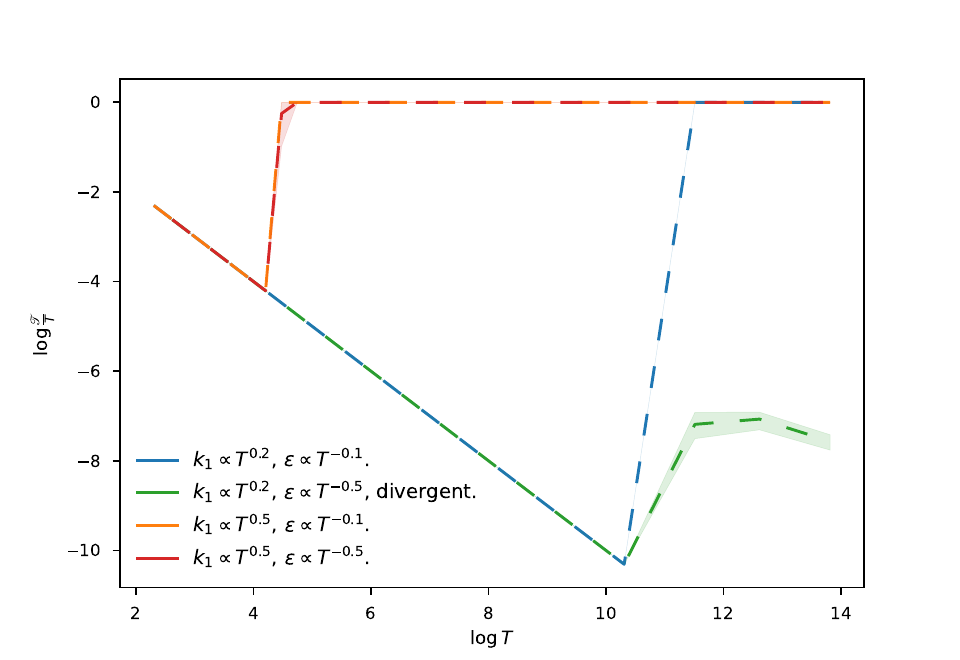}
         \caption{Validation of Assumption~\ref{ass:G-sim-learn-hitting-time}. The hitting time $\underline{\mathscr{T}}$ approaches $T$ asymptotically.}
         \label{fig:hit_time}
     \end{subfigure}
     \hfill
     \begin{subfigure}[t]{0.48\textwidth}
         \centering
        \includegraphics[width=\textwidth]{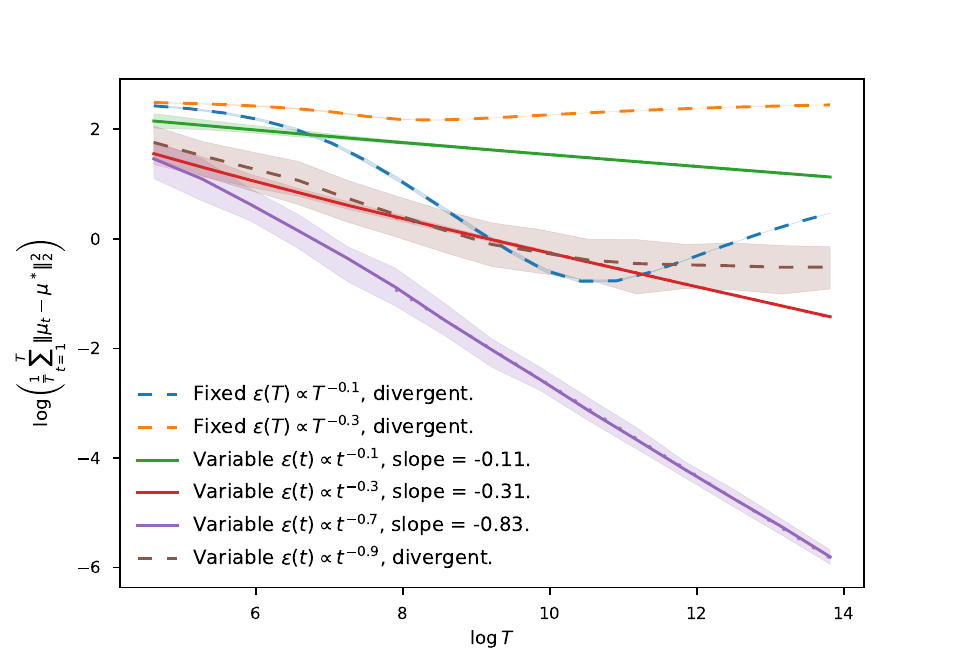}
        \caption{For a fixed initial budget, fixed step sizes $\epsilon$ diverge and variable step sizes $\epsilon_t$ converge under simultaneous learning.}
        \label{fig:fixed_budget}
     \end{subfigure}
     
     \begin{subfigure}[b]{0.48\textwidth}
         \centering
         \includegraphics[width=\textwidth]{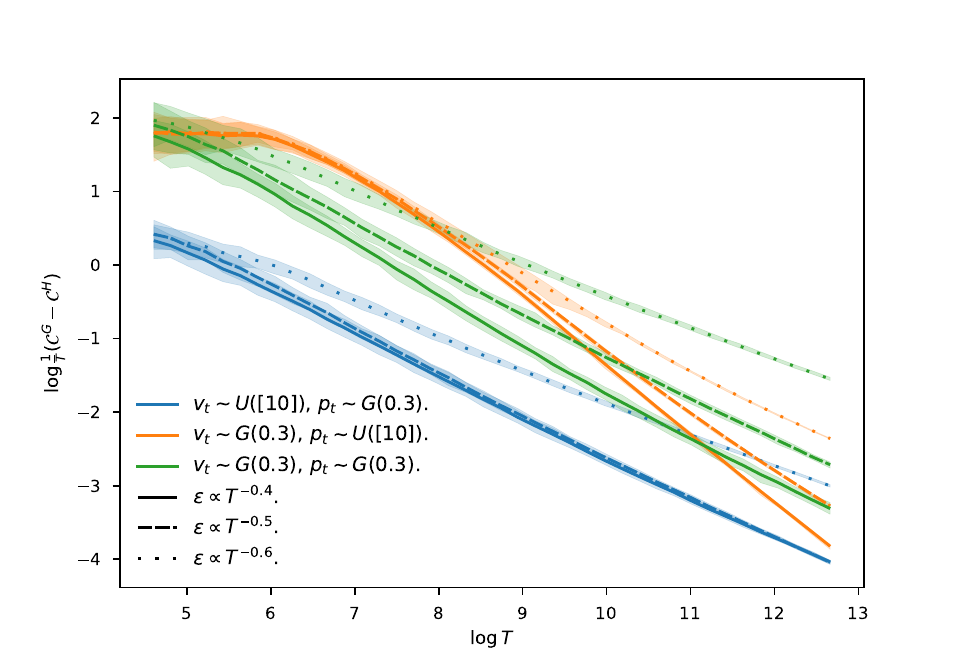}
         \caption{Stationary competition with discrete distributions of valuations and competing bids.}
         \label{fig:non_cont_v_perfs_vs_H}
     \end{subfigure}
     \hfill
     \begin{subfigure}[b]{0.48\textwidth}
         \centering
         \includegraphics[width=\textwidth]{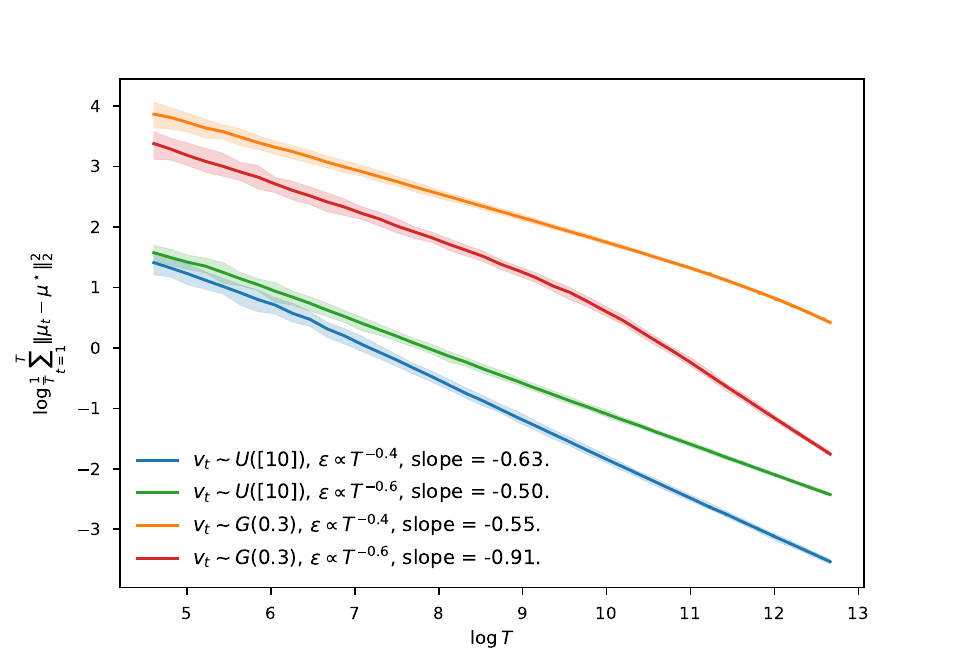}
         \caption{Simultaneous learning with discrete valuation distributions.}
         \label{fig:non_cont_v_cv_mu}
     \end{subfigure}
        \caption{Numerical validation of 
        the absolute continuity of valuations
        and Assumption~\ref{ass:G-sim-learn-hitting-time}, and potential solution for the vanishing box problem (Figure~\ref{fig:fixed_budget}).
        Figure~\ref{fig:hit_time} shows that Assumption~\ref{ass:G-sim-learn-hitting-time-4} is satisfied in practice.
        Figures~\ref{fig:non_cont_v_perfs_vs_H} and~\ref{fig:non_cont_v_cv_mu}  respectively show that the results of Theorems~\ref{thm:G_stat_comp} and~\ref{thm:G_sim_lear_cv} hold also when the distribution of valuations is not continuous.}
        \Description{This figure gathers 4 different subfigures.
        Subfigure (a) plots the logarithm of the hitting time divided by the time horizon T as a function of the logarithm of the time horizon, for different values of the initial budget and gradient time-step. For all values (except one that showcases the bounded box problem), the hitting time is deterministically equal to the time horizon T for T large enough.
        Subfigure (b) shows how variable stepsizes fix the bounded box problem: even for a fixed initial budget, one observes convergence of the average distance to the stable multiplier as T grows large. This is only the case though for an initial budget large enough: the smaller the stepsizes, the larger the initial budget is required to achieve convergence.
        Subfigures (c) and (d) are equivalent to figures 2.b and 2.c respectively, but they use non-continuous distributions for the valuations and the competing bids instead of continuous distributions as required per Theorems 4.1 and 4.2. In all cases, we observe qualitatively convergence (of the regret in (c) and of the average distance to the stable multiplier in (d)), which shows that our results may be extended beyond continuous distributions.
        }
        \label{fig:test_hypoth}
\end{figure*}

\begin{figure*}[!htb]
     \centering
     \begin{subfigure}[t]{0.48\textwidth}
         \centering
         \includegraphics[width=\textwidth]{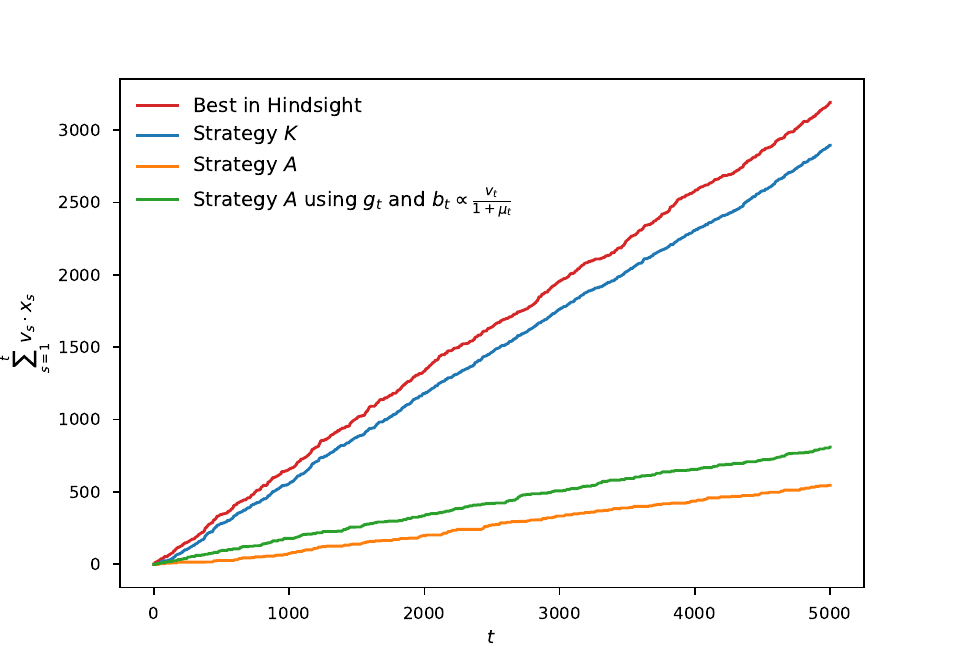}
         \caption{Cumulative value of time saved by winning auctions.}
         \label{fig:comp_strat_VoT}
     \end{subfigure}
     \hfill
     \begin{subfigure}[t]{0.48\textwidth}
         \centering
        \includegraphics[width=\textwidth]{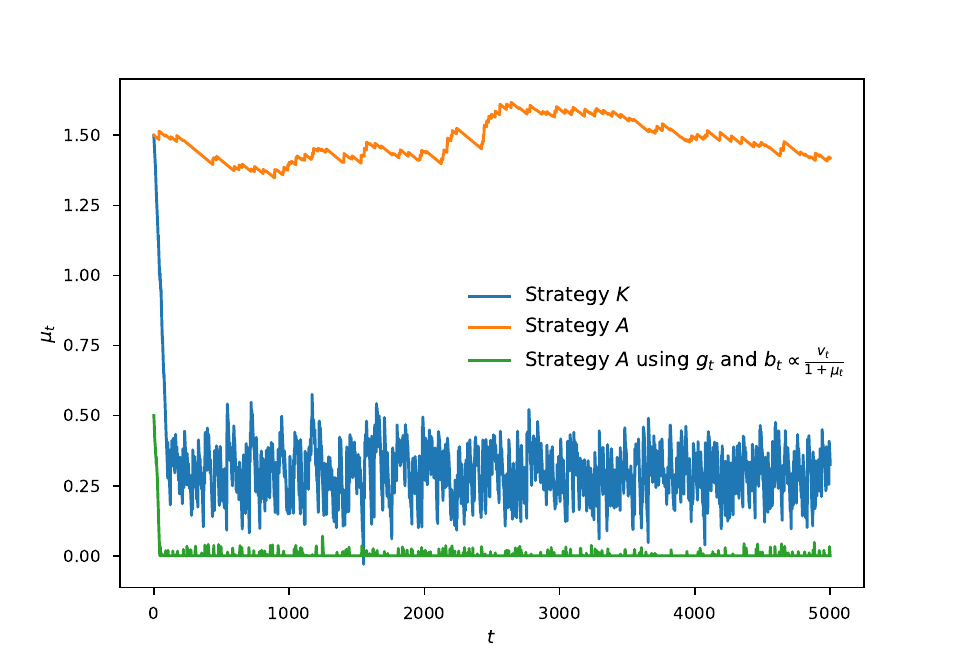}
        \caption{Evolution of the strategies' multiplier throughout an episode.}
        \label{fig:comp_strat_mu}
     \end{subfigure}
     
        \caption{Comparison of different strategies over an individual episode. 
        By ``Best in Hindsight'' we refer to the solution of Equation~\ref{equ:G_lowerbound_cost_H} which allows for temporary violations of the budget throughout the episode.
        By ``Strategy $K$'' and ``Strategy $A$'' we refer to Algorithms~\ref{alg:G} and~\ref{alg:A} respectively.
        By ``Strategy $A$ using $g_t$ and $b_t \propto \frac{v_t}{1+\mu_t}$'' we refer to the extension of the adaptive strategy proposed in \cite{balseiro_learning_2019} for classical monetary settings to handle budget increases. It places bids using $b_{i,t} = \min\{ \Delta v_{i,t} / (1 + \mu_{i,t}) , k_{i,t} \}$ and uses the multiplier update $\mu_{i,t+1} = P_{[0, \overline{\mu} ]} \left( \mu_{i,t} + \epsilon ( z_{i,t}  - g_{i,t} - \rho_i) \right).$
        Subfigure $(a)$ shows that both variations of strategy $A$ perform suboptimally compared to $K$, and subfigure $(b)$ explains their performances. On one hand, Strategy $A$ simply does not take gains into account and converges to a stationary multiplier that only depletes the initial budget. On the other hand, the alteration of Strategy $A$ aims to deplete the initial budget and the gains at the end of the episode, but it cannot achieve it because the denominator $1 + \mu_{i,t}$ in the bid formulation stays larger than one.
        }
        \Description{This figure gathers 2 different subfigures.
        Subfigure (a) plots the cumulative value of time saved by winning auctions as a function of the period $t$ ( and not the horizon capital $T$). All strategies follow linear trends, but strategy $K$ saves a value of time almost three times larger than that of strategy $A$ and strategy $A$ with gain.
        Subfigure (b) plots the evolution of the multiplier $\mu_t$ for the different strategies. While the multiplier of strategy $K$ quickly converges to a value around 0.25, the vanilla strategy $A$ converges to a much higher value, around 1.5, explaining why it performs so poorly. The modification of strategy $A$ however sees its multiplier converge to 0 and plateau there because of the projection to the interval $[0, \overline{\mu}].$ Note moreover that one is added to the value of the multiplier before placing bids, which means that the effective value of the multiplier is effectively stuck at 1.
        }
        \label{fig:comp_strat}
\end{figure*}

Unless specified otherwise, the experiments share the following parameters: valuations are drawn uniformly at random from $[0,1]$; multipliers are bounded to the interval $[0.1, 1000]$; and the travel time saving of the priority road is $\Delta = 5$. 
In the stationary competition setting, the competing bids $d^\gamma_{i,t}$ are also drawn uniformly at random from $[0,1]$, and we consider that the agent cannot set the price, i.e., $d^{\gamma+1}_{i,t} \approx d^\gamma_{i,t}$.
In the simultaneous learning setting, we consider $N=50$ agents and $\gamma = 5$ winners per period. 
The experiment-specific parameters are detailed below.
In the following, $C_\epsilon$ and $C_k$ denote the multiplicative constants such that $\epsilon(T) = C_\epsilon T^x$ and $k_1(T) = C_k T^y$, for the indicated powers $x$ and $y$ in the figures.
\begin{itemize}
    \item \emph{Figure \ref{fig:P_vs_H_no_redistrib}:} the initial multiplier is  $\mu_{i,1} = 10$; the target expenditure rate is $\rho = 0.2$; 
    and $C_\epsilon = 40$;
    
    \item \emph{Figure \ref{fig:P_vs_H_with_redistrib}:} the initial multiplier is $\mu_{i,1} = 5$; the karma gain is $g_{i,t} = d_{i,t}^\gamma / 10$; and $C_\epsilon = 20$, $C_k=3$;

    \item \emph{Figure \ref{fig:cv_mu_no_redistrib}:} the target expenditure rate is $\rho_i = 0.1$;
    the initial multipliers are $\mu_{i,1} = 5$ and $\mu_{i,1} = 6$ for the first and second halves of the population, respectively; and $C_\epsilon = 10$.
    The stationary multiplier profile $\bm{\mu^\star}$ is approximated by the mean profile that simulations with largest $T$ converged to;

    \item \emph{Figure \ref{fig:cv_mu_with_redistrib}:} the initial multipliers are $\mu_{i,1} = 5$ and $\mu_{i,1} = 6$ for the first and second halves of the population, respectively; the stationary multiplier is $\mu^\star_i = 5.5$ by symmetry; $C_\epsilon = 1$ and $C_k =3$;

    \item \emph{Figure \ref{fig:hit_time}:} parameters are the same as Figure \ref{fig:cv_mu_with_redistrib} with the exception of $C_k = 6$;

    \item \emph{Figure \ref{fig:fixed_budget}:} the fixed initial budget is $k_{i,1} =100$; 
    the initial multipliers are $\mu_{i,1} = 5$ and $\mu_{i,1} = 6$ for the first and second halves of the population, respectively; and $C_\epsilon =1$; 

    \item \emph{Figure \ref{fig:non_cont_v_perfs_vs_H}:} the fixed initial budget is $k_{i,1}=100$;
    the initial multiplier is $\mu_{i,1} = 55 > \mu^\star_i$; the karma gain is $g_{i,t} = d_{i,t}^\gamma / 10$; and $C_\epsilon = 20$;
    Valuations and competing bids are drawn either from the discrete uniform distribution $\mathcal{U}(\{1,\dots,10\})$ or the geometric distribution $\mathcal{G}(p)$ with parameter $p=0.3$;

    \item \emph{Figure \ref{fig:non_cont_v_cv_mu}:} the initial multipliers are $\mu_{i,1} = 5$ and $\mu_{i,1} = 6$ for the first and second halves of the population, respectively; the initial budget is $k_{i,1} = 100 \times T^{0.61}$;
    and $C_\epsilon = 1.$
    Valuations are drawn either from the discrete uniform distribution $\mathcal{U}(\{1,\dots, 10\})$ or the geometric distribution $\mathcal{G}(p)$ with parameter $p=0.3$.

    \item \emph{Figure \ref{fig:comp_strat}:} the initial multipliers are $\mu_1 =1.5$ for strategies $K$ and $A$, and $\mu_1 = 0.5$ for the alteration of $A.$ The initial budget is fixed at $k_{i,1} = 200$ and the step-size at $\epsilon_i = 0.01.$ The karma gain is $g_{i,t} = d_{i,t}^\gamma / 20.$ 
\end{itemize}


\section{Full Statements of the Theorems}\label{sec:G_add_ass}

In the following sections~\ref{sec:ass_G_stat}--\ref{sec:ass_G_NE}, we gather all technical assumptions required for the full statement of Theorems~\ref{thm:G_stat_comp}--\ref{thm:G_eps_NE}, respectively.

\subsection{Full Statement of Theorem~\ref{thm:G_stat_comp}}\label{sec:ass_G_stat}

To fully state Theorem~\ref{thm:G_stat_comp}, we require a few technical assumptions which are listed and briefly discussed below.

\begin{assumption}[Differentiability and Concavity]\label{ass:G_stat_comp_L_mono}
    The following conditions hold:
    \begin{enumerate}[label=\ref{ass:G_stat_comp_L_mono}.\arabic*]
    \item \label{ass:G-stat-comp-dual-diff} The expected dual objective $\Psi_i^K(\mu_i)$ is differentiable on $\mathbb{R}_+$ with derivative $\Psi_i^{K'}: \mu_i \mapsto L_i(\mu_i) -  \mu_i G_i^{'}(\mu_i)$;
    
    \item \label{ass:G-differentiable} The expected expenditure $Z_i(\mu_i)$ and the expected gain $G_i(\mu_i)$ are twice differentiable with derivatives bounded by $\overline{Z^{'}}> 0$ and $\overline{G^{'}}> 0$ in the first order, respectively, and $\overline{Z^{''}}> 0$ and $\overline{G^{''}}> 0$ in the second order, respectively;

    \item \label{ass:G-stat-comp-L-decreasing} The expected karma loss $L_i(\mu_i) = Z_i(\mu_i) - G_i(\mu_i)$ is strongly decreasing with parameter $\lambda > 0$ on $\big[\underline{\mu}, \overline{\mu}\big]$.
    \end{enumerate}
\end{assumption}

Assumption~\ref{ass:G_stat_comp_L_mono} is common in the optimization literature~\cite{balseiro2019learning,balseiro2019multiagent} and implies strong concavity of $\Psi_i^0$, which ensures 
the existence of a unique maximizer $\mu_i^{\star0}$ and 
the convergence of gradient-based techniques.
It is guaranteed to hold under mild conditions on the distributions $\mathcal{V}_i$ and $\mathcal{D}^\gamma_i$ which we include in Appendix~\ref{sec:A_stat_comp_proof_imply_differentiability+concavity}.

\begin{assumption}[Parameter design under stationary competition] \label{ass:G-parameters}
    The input parameters of strategy $K$ satisfy:
    \begin{enumerate}[label=\ref{ass:G-parameters}.\arabic*]
        \item \label{ass:G_stat_comp_mu^*} The multiplier bounds are chosen such that
         $\underline{\mu} < \mu_i^{\star0} < \overline{\mu}$;

         \item \label{ass:G_stat_comp_distr_comp_bids} The distribution of competing bids $\mathcal{D}_i$ has support in $\Big[0, \frac{\Delta}{\underline{\mu}}\Big]^2$ and satisfies
        $0 < \rho_i < \big(1 - \frac{\gamma}{N}\big) \: \mathbb{E}_{ \bm{d_i}} \left[ d_i^{\gamma} \right] $;

        \item \label{ass:G_stat_comp_eps} The gradient step size satisfies $\epsilon \leq \min \bigg\{  \frac{1}{2\lambda}, \frac{1}{\overline{Z^{'}} + \overline{G^{'}}} \bigg\}.$
    \end{enumerate}
\end{assumption} 

Assumption~\ref{ass:G-parameters} regards the choice of the \emph{input parameters of strategy $K$} and can be satisfied by design, i.e., by setting $\underline{\mu}$ and $\epsilon$ sufficiently low and $\overline{\mu}$ sufficiently high.


\begin{assumption}[Control of Hitting Time] \label{ass:G_stat_comp_hitting_time}
    The following conditions hold:
    \begin{enumerate}[label=\ref{ass:G_stat_comp_hitting_time}.\arabic*]
      
        
        
        \item The algorithm parameters satisfy $\underline{\mu} < \frac{\Delta \underline{v_i}}{\overline{d_i}^2} $ and $\overline{\mu} > \frac{\Delta}{\underline{d_i}}$, as well as  $\epsilon < \min \left\{  \frac{\Delta \underline{v_i}}{\overline{d_i}^2} - \underline{\mu}, \frac{1}{\overline{d_i}} \Big( \overline{\mu} -\frac{\Delta}{\underline{d_i}}   \Big) \right\}$;\label{ass:G_stat_comp_hitting_time-3}
        
        \item The initial budget satisfies $k_{i,1} > \frac{ \overline{\mu} - \mu_{i,1}}{\epsilon} + \frac{\Delta}{\underline{\mu}}$.\label{ass:G_stat_comp_hitting_time-5}
    \end{enumerate}
\end{assumption}

Assumption~\ref{ass:G_stat_comp_hitting_time} compliments Assumption~\ref{ass:G_stat_comp_hitting_time_main} in controlling the hitting time $\mathscr{T}_i$ deterministically.
Specifically, Assumptions~\ref{ass:G_stat_comp_hitting_time_main} and~\ref{ass:G_stat_comp_hitting_time-3} guarantee that $\mathscr{T}_i^{\underline{\mu}}= \mathscr{T}_i^{\overline{\mu}} = T$, while Assumption~\ref{ass:G_stat_comp_hitting_time-5} implies $\mathscr{T}_i^k \geq \mathscr{T}_i^{\overline{\mu}}.$ Note moreover that Assumptions~\ref{ass:G_stat_comp_limit_T} and \ref{ass:G_stat_comp_hitting_time-5} together imply that $k_{i,1}(T) \xrightarrow[T\to \infty]{} \infty$.


\begin{reptheorem}{thm:G_stat_comp}
Under Assumptions~\ref{ass:G_stat_comp_L_mono} and \ref{ass:G-parameters},
there exists a constant $C \in \mathbb{R}_+$ such that the average expected regret of an agent $i \in \mathcal{N}$ for following strategy $K$ in the stationary competition setting satisfies
\begin{equation*}
\begin{aligned}
      &\frac{1}{T} \Exp_{\bm{v_i}, \bm{d_i}} \left[ \mathcal{C}_i^{K} (\bm{v_i}, \bm{d_i}) - \mathcal{C}_i^H (\bm{v_i}, \bm{d_i})  \right] \\
        &\leq   C \left(\epsilon  +  \frac{ 1 }{\epsilon T} + \frac{ k_{i,1}  }{ T}
        + \frac{\mathbb{E}_{\bm{v_i}, \bm{d_i}} \left[T - \mathscr{T}_i  \right]}{T} + \hat{\varepsilon}\right)
    \end{aligned}
\end{equation*}
Moreover, for the asymptotic framework described by Assumptions~\ref{ass:G_stat_comp_limit_T}, \ref{ass:G_stat_comp_hitting_time_main}, \ref{ass:A_stat_comp_limit_T} and~\ref{ass:G_stat_comp_hitting_time},
strategy $K$ asymptotically converges to an $O(\hat{\varepsilon})$-neighborhood of the optimal expected cost with the benefit of hindsight, i.e., 
$$\lim_{T \to \infty} \: \frac{1}{T} \: \Exp_{\bm{v_i},\bm{d_i}} \left[ \mathcal{C}_i^K (\bm{v_i},\bm{d_i}) - \mathcal{C}_i^H (\bm{v_i},\bm{d_i}) \right] = O(\hat{\varepsilon}).$$
\end{reptheorem}

\subsection{Full Statement of Theorem~\ref{thm:G_sim_lear_cv}}\label{sec:ass_G_sim}

To fully state Theorem~\ref{thm:G_sim_lear_cv},
we require to adapt a few of the previous technical assumptions to the multi-agent setting.

\begin{assumption}[Parameter design under simultaneous learning] \label{ass:G-sim-learn-parameters}
    The following conditions hold:
    \begin{enumerate}[label=\ref{ass:G-sim-learn-parameters}.\arabic*]
        \item \label{ass:G_sim_lear_mu_star} The multiplier bounds and the initial multipliers are chosen such that
         $\underline{\mu} < \mu_i^{\star0} < \overline{\mu}$ for all agents $i\in\mathcal{N}$; 

        \item \label{ass:G_sim_lear_shared_eps} The shared gradient step size of the agents satisfies $\epsilon \leq \frac{1}{2\lambda}$.   
    \end{enumerate}
\end{assumption}

\begin{assumption}[Control of Hitting Time] \label{ass:G-sim-learn-hitting-time}
The following conditions hold for all agents $i \in \mathcal{N}$:
\begin{enumerate}[label=\ref{ass:G-sim-learn-hitting-time}.\arabic*]
    
    \item \label{ass:G-sim-learn-hitting-time-2} The multiplier bounds satisfy $0 < \underline{\mu} < \frac{\underline{v} }{2} \mu_m $ and \\$\overline{\mu} \geq \mu_m  \left(  1 +  \frac{2} {\underline{v}} \left(1 - \frac{\gamma}{N} \right)^{-1} - \frac{ \underline{v} } {2 } \right)$;
    
    \item \label{ass:G-sim-learn-hitting-time-3} The shared gradient step size satisfies $$0 < \epsilon <  \frac{ \mu_m \underline{\mu}}{\Delta} \min\left\{ \left(1 - \frac{\underline{v}}{2} \right)\frac{N}{\gamma} ,\frac{\underline{v}}{2} \left( 1 + (\underline{v} + 1) \frac{\gamma}{N}\right)^{-1} , \frac{1}{\underline{v}} \Big( 1 - \frac{\gamma^2}{N^2} \Big)^{-1} \right\};$$

    \item \label{ass:G-sim-learn-hitting-time-4}  The underlying learning dynamics are such that $$\frac{1}{T} \mathbb{E}_{\bm{v}} \left[  \mathscr{T}_i^{\underline{\mu}}\right] \xrightarrow[T \to \infty, ~\epsilon(T) \to 0]{} 1.$$
    
\end{enumerate}
\end{assumption}

Assumption~\ref{ass:G-sim-learn-parameters} 
is the multi-agent counterpart of Assumption~\ref{ass:G-parameters}, while Assumption~\ref{ass:G-sim-learn-hitting-time} partially replaces and compliments Assumptions~\ref{ass:G_stat_comp_hitting_time_main} and~\ref{ass:G_stat_comp_hitting_time} to control the hitting time $\mathscr{T}_i$ deterministically in the simultaneous adoption setting.


\begin{reptheorem}{thm:G_sim_lear_cv}
Under Assumptions~\ref{ass:G_sim_lear_strong_mono} and~\ref{ass:G-sim-learn-parameters},
there exist constants $C_1$ and $C_2\in \mathbb{R_+}$ such that the average expected distance to the stationary multiplier profile $\bm{\mu^{\star0}} \in \bm{H_{\mu_1}}$ and the strategic competition cost of strategy profile $\bm{K}$ for any agent $i \in \mathcal{N}$ satisfy respectively
\begin{equation*}
\begin{aligned}
    \frac{1}{T} \sum_{t=1}^T \Exp_{\bm{v}} \left[\left\Vert \bm{\mu_t} - \bm{\mu^{\star0}} \right\Vert_2^2 \right] 
     \leq  C_1 N \left( \epsilon + \frac{1}{\epsilon T} 
    +  \frac{\Exp_{\bm{v}} \left[ T - \underline{\mathscr{T}}  \right]}{T} \right),\\
     \frac{1}{T} \mathcal{C}_i^{\bm{K}} - \Psi_i^0 \big(\bm{\mu^{\star0}}\big)
     \leq C_2 \left( N \left( \epsilon^{1/2} +   \frac{1}{\epsilon T}\right) + \frac{ \Exp_{\bm{v}} \left[T - \underline{\mathscr{T}} \right]}{T}\right).
     \end{aligned}
\end{equation*}
Moreover, for 
the asymptotic framework described in Assumptions~\ref{ass:G_stat_comp_hitting_time-1}, \ref{ass:A_stat_comp_limit_T}, 
\ref{ass:G_stat_comp_hitting_time-5}, and~\ref{ass:G-sim-learn-hitting-time},
the multiplier profile $\bm{\mu_t}$ converges in expectation to the stationary profile $\bm{\mu^{\star0}}$, and the average strategic competition cost $\mathcal{C}_i^{\bm{K}}$ converges to the expected dual objective $\Psi^0_i \big(\bm{\mu^{\star0}} \big)$ for all agents $i \in \mathcal{N}$, i.e., 
$\lim\limits_{T \to \infty} \: \frac{1}{T} \sum_{t=1}^T \Exp_{\bm{v}} \left[ \big\Vert \bm{\mu_t} - \bm{\mu^{\star0}} \big\Vert_2^2 \right] = 0 \text{ and } \lim\limits_{T \rightarrow \infty} \: \frac{1}{T} \: \mathcal{C}_i^{\bm{K}} - \Psi_i^0 \big(\bm{\mu^{\star0}} \big) = 0. $
\end{reptheorem}

\subsection{Full Statement of Theorem~\ref{thm:G_eps_NE}.}\label{sec:ass_G_NE}

To fully state Theorem~\ref{thm:G_eps_NE}, we require the following additional technical assumptions.

\begin{assumption}[$\epsilon(T,N)$ and $\bm{k_1}(T,N)$ under Parallel Auctions]\label{ass:G_eps_NE_init}
    The shared gradient step size $\epsilon$ and initial budget $\bm{k_1}$ are functions of the time horizon $T$ and population size $N$ satisfying $N \epsilon \xrightarrow[T,N \to \infty]{} 0$, $\epsilon^{3/2} T\xrightarrow[T,N \to \infty]{} \infty$ and $\overline{k_1} / T \xrightarrow[T,N \to \infty]{} 0$.
\end{assumption}

\begin{assumption}[Control of Hitting Time]\label{ass:G-eps-NE-hitting-time}
The underlying learning dynamics are such that $$\frac{1}{T} \mathbb{E}_{\bm{v}} \left[ \mathscr{T}_i^{\underline{\mu}} + \mathscr{T}_i^{\overline{\mu}}\right] \xrightarrow[T \to \infty, ~\epsilon(T) \to 0
]{} 1.$$
\end{assumption}

Assumption~\ref{ass:G_eps_NE_init} strengthens the standard Assumption~\ref{ass:A_stat_comp_limit_T} by a) requiring that the gradient step size and initial budget vary with $N$ in addition to $T$ and b) requiring $\epsilon = \omega(T^{-2/3})$ instead of $\epsilon =\omega(T^{-1})$. 
Moreover, Assumption~\ref{ass:G-eps-NE-hitting-time} strengthens previous Assumption~\ref{ass:G-sim-learn-hitting-time-4},
which is required because minimum valuations are of little help in the parallel auction setting where agents with the highest and smallest multipliers may never interact.

\begin{reptheorem}{thm:G_eps_NE}
Under Assumptions~\ref{ass:G_sim_lear_strong_mono} and~\ref{ass:G-sim-learn-parameters},
there exists a constant $C\in \mathbb{R_+}$ such that each agent $i \in \mathcal{N}$ can decrease its average strategic competition cost by deviating from strategy $K$ to any strategy $\beta_i \in \mathcal{B}$ by at most
\begin{equation*}
\begin{aligned}
     \frac{1}{T} \left(\mathcal{C}_i^{K} - \mathcal{C}_i^{\beta_i, \bm{K_{-i}}} \right)
     &\leq C \Bigg( \left( \Vert \bm{a_i}\Vert_2 + \frac{M\gamma}{N} \right) \bigg( \sqrt{N\epsilon} \left(1 + \frac{1}{\epsilon^{3/2} T}\right) + \Vert \bm{a_i}\Vert_2 \\
     &+ \frac{\gamma}{\sqrt{N}}  \bigg) 
    + \left( \frac{\gamma}{N} + \frac{\overline{k_1}}{T} \right)
    +   \left( \frac{\overline{k_1}}{T} + \frac{M\gamma}{N} \right) 
    \frac{\mathbb{E}_{\bm{v}, \bm{m}} \left[T - \underline{\mathscr{T}}   \right]}{T} \Bigg)
\end{aligned}
\end{equation*}
Moreover, for the asymptotic framework described by Assumptions~\ref{ass:G_stat_comp_limit_T}, \ref{ass:frac_winner_const}, \ref{ass:A_eps_NE-2}, \ref{ass:G_stat_comp_hitting_time-5}, \ref{ass:G_eps_NE_init} and~\ref{ass:G-eps-NE-hitting-time},
strategy profile $\bm{K}$ constitutes an $\varepsilon$-Nash equilibrium, i.e., it holds for all agents $i \in \mathcal{N}$ that
$$ \lim_{T, N, M \to \infty} \frac{1}{T} \Big( \mathcal{C}_i^{\bm{K}} - \inf_{\beta_i \in \mathcal{B}} \mathcal{C}_i^{\beta_i, \bm{K_{-i}}} \Big) = 0.$$
\end{reptheorem}

\section{Proofs of the main Theorems}
This section contains the 
proofs of the Theorems in Section~\ref{sec:G}. It is organized as follows: first, 
we prove in Section~\ref{sec:G_helpful_lem} a bound on the expected mean squared error $s_t.$  
We then include the proofs of Theorems~\ref{thm:G_stat_comp}, \ref{thm:G_sim_lear_cv} and~\ref{thm:G_eps_NE}  in Sections~\ref{sec:proof_G_stat_comp}, \ref{sec:G_sim_lear_cv_proof} and~\ref{sec:G_eps_NE_proof} respectively. 
 The proofs of the secondary results discussed in Section~\ref{sec:G} are deferred to Section~\ref{sec:suppl_res}.

\subsection{Controlling the expected mean squared error}\label{sec:G_helpful_lem}
We recall that, by definition of the hitting time $\mathscr{T}_i$ in Equation~\eqref{equ:G-hit-time}, it holds for all $1 \leq t\leq \mathscr{T}_i$ that the projected multiplier used to formulate bids in \eqref{equ:G-b} verifies $P_{[\underline{\mu}, \overline{\mu}_i ]} (\mu_{i,t}) = \mu_{i,t}.$
We further define $\underline{\mathscr{T}} = \min_{i \in \mathcal{N}} \mathscr{T}_i$ as the earliest hitting time among the whole population.
For any fixed $t \in [T]$, we consider the expected mean squared error 
$s_t = \mathbb{E}_{\bm{v}} \left[ \big\Vert \bm{\mu_t} - \bm{\mu^{\star 0 }} \big\Vert_2^2  \vert t\leq \underline{\mathscr{T}} \right] =  \sum_{i=1}^N s_{i,t}$, where $s_{i,t} = \mathbb{E}_{\bm{v}} \left[ \big(\mu_{i,t} - \mu_i^{\star 0 }\big)^2 \vert t\leq \underline{\mathscr{T}} \right]$ is the individual expected error of agent $i.$ 
The following Lemma bounds the expected mean squared error in different scenarios.

\begin{lemma} [Upper bound on the Expected Mean Squared Error] ~ \label{lem:G_bound_MSE}
When all agents follow $K$ strategies
, the expected mean squared error $s_t$ satisfies:
\begin{equation*}
    s_t \leq  N \overline{\mu}^2 (1 - 2\lambda \epsilon)^{t-1} + \frac{ N \Delta^2 \epsilon }{2 \lambda \underline{\mu}^2 }  .
\end{equation*}
Taking $N =1$ gives a bound for an agent's individual error in the stationary competition setting.

\end{lemma}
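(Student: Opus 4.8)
The plan is to establish a one-step contraction for $s_t$ and then unroll the resulting geometric recursion. The crucial observation is that, by the definition of the hitting time $\underline{\mathscr{T}}$ in~\eqref{equ:G-hit-time}, on the event $\{t \le \underline{\mathscr{T}}\}$ every agent $j$ places the uncapped bid $b_{j,t} = \Delta v_{j,t}/\mu_{j,t}$ with $\mu_{j,t} \in [\underline{\mu},\overline{\mu}]$, so that all bids (hence the competing bids $d^\gamma$ and the price $p^{\gamma+1}$) are bounded above by $\Delta/\underline{\mu}$; moreover, since strategy $K$ carries no projection on the multiplier itself, the update~\eqref{equ:G-mu} is exactly $\mu_{i,t+1} = \mu_{i,t} + \epsilon(z_{i,t} - g_{i,t})$. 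First I would expand, for each agent $i$ and conditionally on $\{t\le\underline{\mathscr{T}}\}$,
\[
\big(\mu_{i,t+1} - \mu_i^{\star 0}\big)^2 = \big(\mu_{i,t} - \mu_i^{\star 0}\big)^2 + 2\epsilon\big(\mu_{i,t} - \mu_i^{\star 0}\big)(z_{i,t} - g_{i,t}) + \epsilon^2 (z_{i,t} - g_{i,t})^2,
\]
sum over $i \in \mathcal{N}$, and take the expectation given the history through step $t$.

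For the linear term, the conditional expectation of $z_{i,t} - g_{i,t}$ given $\bm{\mu_t}$ is $Z_i(\bm{\mu_t}) - G_i(\bm{\mu_t}) = L_i(\bm{\mu_t})$ by definition of the expected expenditure and gain. Since $\bm{L}(\bm{\mu^{\star0}}) = \bm{0}$, since $\bm{\mu_t} \in \bm{H_{\mu_1}}$ by Property~\eqref{equ:G-mu-hyperplane} and $\bm{\mu_t} \in \bm{U}$ on $\{t\le\underline{\mathscr{T}}\}$, and since $\bm{\mu^{\star0}} \in \bm{U}\cap\bm{H_{\mu_1}}$ (Assumption~\ref{ass:G-sim-learn-parameters}), Assumption~\ref{ass:G_sim_lear_strong_mono} yields $\sum_{i}(\mu_{i,t} - \mu_i^{\star0})L_i(\bm{\mu_t}) = (\bm{\mu_t} - \bm{\mu^{\star0}})^\top(\bm{L}(\bm{\mu_t}) - \bm{L}(\bm{\mu^{\star0}})) \le -\lambda\|\bm{\mu_t} - \bm{\mu^{\star0}}\|_2^2$. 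For the quadratic term, $0 \le z_{i,t}, g_{i,t} \le \Delta/\underline{\mu}$ gives $(z_{i,t}-g_{i,t})^2 \le \Delta^2/\underline{\mu}^2$, hence $\epsilon^2\sum_i\Exp[(z_{i,t}-g_{i,t})^2] \le N\Delta^2\epsilon^2/\underline{\mu}^2$. Combining, and using $\epsilon \le 1/(2\lambda)$ (Assumption~\ref{ass:G-sim-learn-parameters}) so that $0 \le 1 - 2\lambda\epsilon < 1$, produces the recursion $s_{t+1} \le (1 - 2\lambda\epsilon)\, s_t + N\Delta^2\epsilon^2/\underline{\mu}^2$. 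Unrolling from the base case $s_1 \le N\overline{\mu}^2$ (which holds since $\mu_{i,1},\mu_i^{\star0} \in [\underline{\mu},\overline{\mu}]\subset[0,\overline{\mu}]$) and bounding the geometric sum $\sum_{k\ge 0}(1-2\lambda\epsilon)^k = 1/(2\lambda\epsilon)$ gives exactly $s_t \le N\overline{\mu}^2(1-2\lambda\epsilon)^{t-1} + N\Delta^2\epsilon/(2\lambda\underline{\mu}^2)$. The case $N=1$ is handled by the identical computation: the hyperplane constraint $\bm{H_{\mu_1}}$ becomes vacuous, and the scalar strong-decreasingness of $L_i$ (Assumption~\ref{ass:G_stat_comp_L_mono}, with $\epsilon \le 1/(2\lambda)$ from Assumption~\ref{ass:G-parameters}) plays the role of strong monotonicity.

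The main obstacle is the bookkeeping around the random conditioning event $\{t \le \underline{\mathscr{T}}\}$, which varies with $t$ and forms a decreasing family of events. Care is needed to pass legitimately from the conditional one-step inequality to the recursion on the $s_t$: the clean argument notes that $\{t \le \underline{\mathscr{T}}\}$ is measurable with respect to $(\bm{v_1},\dots,\bm{v_{t-1}})$ (it is determined by $\bm{\mu_{1:t}}$ and $\bm{k_{1:t}}$), so that $\bm{v_t}$ retains its law $\bm{\mathcal{V}}$ under this conditioning and the conditional expectation of $z_{i,t}-g_{i,t}$ is indeed $L_i(\bm{\mu_t})$; one then handles the nesting $\{t+1\le\underline{\mathscr{T}}\}\subseteq\{t\le\underline{\mathscr{T}}\}$ and the deterministic bound $\|\bm{\mu_t}-\bm{\mu^{\star0}}\|_2^2 \le N\overline{\mu}^2$ on the good event. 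Everything else — the expansion, the monotonicity estimate, the noise bound, and the geometric unrolling — is routine.
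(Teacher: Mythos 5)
Your proposal is correct and follows essentially the same route as the paper: expand the square of the unprojected update, condition to recognize $L_i(\bm{\mu_t})$, apply the strong monotonicity of $\bm{L}$ at the stationary profile to obtain the $-2\lambda\epsilon\, s_t$ contraction, bound the noise term by $N\Delta^2\epsilon^2/\underline{\mu}^2$ using $0\le z_{i,t},g_{i,t}\le\Delta/\underline{\mu}$, and unroll the geometric recursion from $s_1\le N\overline{\mu}^2$. Your added care about the measurability of the event $\{t\le\underline{\mathscr{T}}\}$ and the need for $\bm{\mu_t}\in\bm{U}\cap\bm{H_{\mu_1}}$ when invoking Assumption~\ref{ass:G_sim_lear_strong_mono} is a welcome tightening of a step the paper treats implicitly, but it does not change the argument.
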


\begin{proof}
For any agent $i \in \mathcal{N}$ and time step $t < \underline{\mathscr{T}} $, the multiplier update rule gives:
 \begin{align*}
      \left( \mu_{i,t+1} - \mu_i^{\star 0} \right) ^2 
      &= \left( \mu_{i,t} + \epsilon (z_{i,t} -  g_{i,t}) -\mu_i^{\star 0} \right)^2 ,\\
      &=  \left(\mu_{i,t} - \mu_i^{\star 0} \right)^2 +2 \epsilon \left(\mu_{i,t} -\mu_i^{\star 0}\right)(z_{i,t} -  g_{i,t} ) \\
      &+ \epsilon^2  ( z_{i,t} -  g_{i,t} )^2.
 \end{align*}
Taking the expectation on previous valuations of all agents (resp. previous valuations and competing prices) gives
\begin{equation}\label{equ:G_MSE_1}
    \begin{aligned}
        s_{i,t+1}
        &\leq  s_{i,t}
        + 2  \epsilon \mathbb{E}_{\bm{v}} \left[  \left(\mu_{i,t} - \mu_i^{\star 0} \right) \left(z_{i,t} -  g_{i,t}  \right) \vert t\leq \underline{\mathscr{T}}\right] \\
        &+ \epsilon^2 \mathbb{E}_{\bm{v}} \left[  ( z_{i,t} -  g_{i,t} )^2 \vert t\leq \mathscr{T}  \right],\\
        &\stackrel{(a)}{=} s_{i,t}
        + 2 \epsilon \mathbb{E}_{\bm{v}} \left[  \left(\mu_{i,t} - \mu_i^{\star 0}\right) \mathbb{E}_{\bm{v}} [z_{i,t} -  g_{i,t} \lvert \bm{\mu_t},  t\leq \mathscr{T}  ]  \vert t\leq \underline{\mathscr{T}} \right] \\
        &+ \epsilon^2 \mathbb{E}_{\bm{v}} \left[  ( z_{i,t} -  g_{i,t} )^2 \vert t\leq \mathscr{T} \right],\\
        &\stackrel{(b)}{=} s_{i,t}
        + 2  \epsilon \mathbb{E}_{\bm{v}} \left[  \left(\mu_{i,t} - \mu_i^{\star 0}\right)  L_i(\bm{\mu_t})   \vert t\leq \underline{\mathscr{T}} \right] \\
        &+ \epsilon^2 \mathbb{E}_{\bm{v}} \left[  ( z_{i,t} - g_{i,t}  )^2\vert t\leq \mathscr{T}  \right].\\
    \end{aligned}  
\end{equation}
Equality $(a)$ holds by linearity of expectation; equality $(b)$ by definition of $L_i(\bm{\mu_t})$ (resp. $L_i(\mu_{i,t})$).

First, we bound the sum over the whole population of the middle term in Equation~\eqref{equ:G_MSE_1}.
\begin{equation}\label{equ:G_MSE_2}
    \begin{aligned}
        &\mathbb{E}_{\bm{v}} \left[\sum_{i=1}^N     \left(\mu_{i,t} - \mu_i^{\star 0} \right) L_i(\bm{\mu_t})  \bigg\vert t\leq \underline{\mathscr{T}}\right] \\
        &=  \mathbb{E}_{\bm{v}} \left[  \sum_{i=1}^N \left(\mu_{i,t} -\mu_i^{\star 0}\right) \left( L_i\big(\bm{\mu_t}\big) - L_i\big(\bm{\mu^{\star 0}}\big) + L_i\big(\bm{\mu^{\star 0}}\big)  \right) \bigg\vert t\leq \underline{\mathscr{T}} \right] ,\\
        &\stackrel{(a)}{\leq} - \mathbb{E}_{\bm{v}} \left[\lambda  \left\Vert \bm{\mu_t} - \bm{\mu^{\star0}} \right\Vert_2^2  \vert t\leq \underline{\mathscr{T}} \right] ,\\
        &= -\lambda s_t.
    \end{aligned}
\end{equation}
Inequality $(a)$ uses the definition of $\bm{\mu^{\star 0}}$ (resp. $\mu_i^{\star 0}$), as well as the $\lambda$-strong monotonicity of the expected expenditure function $\bm{L}$ from Assumption~\ref{ass:G_sim_lear_strong_mono} (resp. Assumption~\ref{ass:G_stat_comp_L_mono}).

On the other hand, as $z_{i,t}$ and $ g_{i,t} $ are positive and bounded by $\Delta / \underline{\mu}$ under Assumption~\ref{ass:G-parameters}, we similarly bound the right-most term of Equation~\eqref{equ:G_MSE_1}:
\begin{equation}\label{equ:G_MSE_3}
    \sum_{i=1}^N  \mathbb{E}_{\bm{v}} \left[  ( z_{i,t}  -  g_{i,t} )^2 \vert t\leq \mathscr{T} \right] 
    \leq  \frac{N \Delta^2} {\underline{\mu}^2}.
\end{equation}

Combining Equations~\eqref{equ:G_MSE_1}, \eqref{equ:G_MSE_2} and~\eqref{equ:G_MSE_3} gives the following recursion
\begin{equation*}
    s_{t+1} 
    \leq (1 - 2\lambda \epsilon ) s_t +  \frac{ N\Delta^2 \epsilon^2}{\underline{\mu}^2}.
\end{equation*}

Since $s_1 \leq N \overline{\mu}^2$ and $ 2\lambda \epsilon \leq 1$ under Assumption~\ref{ass:G-parameters}, \cite[Lemma C.4] {balseiro2019learning} finally yields the desired result, i.e.,
\begin{equation*}
    s_t \leq  N \overline{\mu}^2  (1-2\lambda \epsilon )^{t-1} 
    + \frac{ N\Delta^2 \epsilon}{2\lambda \underline{\mu}^2} .
\end{equation*}

\end{proof}

However, if a single agent $i$ in $\mathcal{N}$ places bids with a strategy $\beta \in \mathcal{B}$ different from $K$ in a parallel auction setting, we consider the individual expected mean squared errors of all other agents.
For the purpose of the analysis, we allow this strategy to use \emph{complete information}, i.e., it can access the budgets $\bm{k_{s}}$, auctions $\bm{m_{s}}$, valuations $\bm{v_{s}}$, bids $\bm{b_{s}}$, and expenditures $\bm{z_{s}}$ of all agents at all previous time steps $s < t$, and in addition, it has knowledge of the valuation distributions $\bm{\mathcal{V}}$.
We denote the set of strategies that use complete information as $\mathcal{B}^{CI} \supset \mathcal{B}$ and suppose that $\beta_i \in \mathcal{B}^{CI}$.
To keep notations consistent, we take the convention $\mu_{i,t} = \mu_i^\star$ for all $t\in \mathbb{N}$, ensuring that $s_{i,t} = 0.$

\begin{lemma} [Expected Mean Squared Error with Unilateral Deviation] ~ \label{lem:G_bound_MSE_eps_NE}
If all agents follow strategy $K$ but agent $i$ follows strategy $\beta \in \mathcal{B}^{CI}$, the expected mean squared error of the population verifies:
   \begin{equation*}
    s_t 
    \leq  N \overline{\mu}^2  (1 - \lambda \epsilon )^{t-1} 
    + \frac{N \Delta^2 \epsilon}{\lambda \underline{\mu}^2} 
    + \frac{\Delta^2 }{\lambda ^2 \underline{\mu}^2} \left(\Vert \bm{a_i}\Vert_2 
    + \frac{ \gamma}{ \sqrt{N} } \right)^2.
\end{equation*}

\end{lemma}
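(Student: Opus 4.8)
The plan is to revisit the recursion from the proof of Lemma~\ref{lem:G_bound_MSE}, but now accounting for the fact that one agent $i$ deviates, so that the remaining agents no longer perfectly descend on $\bm{L}$. Fix $t < \underline{\mathscr{T}}$. For each agent $j \neq i$ (agent $i$ contributes nothing, by the convention $\mu_{i,t} = \mu_i^\star$, $s_{i,t}=0$), the same expansion of $(\mu_{j,t+1} - \mu_j^{\star 0})^2$ as in~\eqref{equ:G_MSE_1} holds, and taking conditional expectations yields the cross term $2\epsilon \mathbb{E}_{\bm{v}}[(\mu_{j,t} - \mu_j^{\star 0})\,\mathbb{E}_{\bm{v}}[z_{j,t} - g_{j,t} \mid \bm{\mu_t}, \cdots]]$. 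The crucial point is that $\mathbb{E}_{\bm{v}}[z_{j,t} - g_{j,t} \mid \bm{\mu_t}]$ is no longer $L_j(\bm{\mu_t})$: it equals $L_j$ evaluated at the "effective" multiplier profile where agent $i$'s bid $b_{i,t}$ is arbitrary rather than $\Delta v_i / \mu_i^{\star 0}$. So I would write $\mathbb{E}_{\bm{v}}[z_{j,t} - g_{j,t} \mid \bm{\mu_t}] = L_j(\bm{\mu_t}) + e_{j,t}$, where $e_{j,t}$ captures the perturbation due to agent $i$'s deviation affecting (a) whether $j$ wins — only when $i$ and $j$ are in the same auction, contributing something controlled by the matching probability $a_{i,j}$ — and (b) the redistributed gain $g_{j,t}$, which changes by at most $\tfrac{\gamma}{N}$ times the change in the aggregate price, itself bounded by $\Delta/\underline{\mu}$, across the $M$ auctions; the $\gamma/N$ and $\gamma/\sqrt{N}$ terms in the bound come from here.

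Next I would bound $\|\bm{e}_t\|$ (the vector over $j \neq i$). The expenditure-channel part is supported on $\{j : m_j = m_i\}$ and has squared norm $\lesssim (\Delta/\underline{\mu})^2 \sum_{j} a_{i,j}^2 = (\Delta/\underline{\mu})^2 \|\bm{a_i}\|_2^2$ in expectation; the gain-channel part affects every $j$ by at most $\tfrac{\gamma}{N}\cdot\tfrac{\Delta}{\underline{\mu}}$, so its norm over the $N$ coordinates is $\lesssim \tfrac{\gamma}{N}\cdot\tfrac{\Delta}{\underline{\mu}}\cdot\sqrt{N} = \tfrac{\gamma}{\sqrt N}\cdot\tfrac{\Delta}{\underline{\mu}}$. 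Hence $\|\bm{e}_t\|_2 \lesssim \tfrac{\Delta}{\underline{\mu}}\big(\|\bm{a_i}\|_2 + \tfrac{\gamma}{\sqrt N}\big)$. Then I apply $\lambda$-strong monotonicity of $\bm{L}$ on $\bm{U}\cap\bm{H_{\mu_1}}$ exactly as in~\eqref{equ:G_MSE_2} to the $L_j(\bm{\mu_t})$ part, while the error part is handled by Cauchy–Schwarz and Young's inequality: $2\epsilon\langle \bm{\mu_t} - \bm{\mu^{\star 0}}, \bm{e}_t\rangle \leq \lambda\epsilon\|\bm{\mu_t}-\bm{\mu^{\star 0}}\|_2^2 + \tfrac{\epsilon}{\lambda}\|\bm{e}_t\|_2^2$. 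The first summand is absorbed into the contraction (leaving $1 - \lambda\epsilon$ in place of $1 - 2\lambda\epsilon$), and the second contributes a constant forcing term $\tfrac{\epsilon}{\lambda}\cdot\tfrac{\Delta^2}{\underline{\mu}^2}(\|\bm{a_i}\|_2 + \tfrac{\gamma}{\sqrt N})^2$. The second-order term $\epsilon^2\mathbb{E}[(z_{j,t}-g_{j,t})^2]$ is still bounded by $\epsilon^2 \tfrac{\Delta^2}{\underline{\mu}^2}$ per agent as in~\eqref{equ:G_MSE_3}, independent of the deviation.

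Collecting these into a scalar recursion for $s_t = \sum_{j} s_{j,t}$ gives
\begin{equation*}
s_{t+1} \leq (1 - \lambda\epsilon)\,s_t + \frac{N\Delta^2\epsilon^2}{\underline{\mu}^2} + \frac{\Delta^2\epsilon}{\lambda\underline{\mu}^2}\Big(\|\bm{a_i}\|_2 + \frac{\gamma}{\sqrt N}\Big)^2,
\end{equation*}
and since $s_1 \leq N\overline{\mu}^2$ and $\lambda\epsilon \leq 1$ (from Assumption~\ref{ass:G-sim-learn-parameters}), unrolling the geometric recursion — as in \cite[Lemma C.4]{balseiro2019learning} — with steady-state value (forcing term)$/(\lambda\epsilon)$ yields precisely
\begin{equation*}
s_t \leq N\overline{\mu}^2(1-\lambda\epsilon)^{t-1} + \frac{N\Delta^2\epsilon}{\lambda\underline{\mu}^2} + \frac{\Delta^2}{\lambda^2\underline{\mu}^2}\Big(\|\bm{a_i}\|_2 + \frac{\gamma}{\sqrt N}\Big)^2.
\end{equation*}
The main obstacle I anticipate is the careful bookkeeping in the perturbation term $e_{j,t}$: one must argue that agent $i$'s unilateral deviation (even with complete information, $\beta_i \in \mathcal{B}^{CI}$) can change agent $j$'s \emph{expected} next-step expenditure-minus-gain only through the two channels above, that the expenditure channel is genuinely restricted to co-matched agents so the $\|\bm{a_i}\|_2$ (rather than a crude $\sqrt{N}$) bound is legitimate, and that the aggregate-redistribution structure is what makes the gain-channel perturbation uniformly $O(\gamma/N)$ per agent rather than $O(1)$. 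The rest is a mechanical adaptation of the Lemma~\ref{lem:G_bound_MSE} argument with $2\lambda\epsilon$ weakened to $\lambda\epsilon$ to create room for the Young's-inequality split.
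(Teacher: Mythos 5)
Your proposal is correct and follows essentially the same route as the paper: the same convention $\mu_{i,t}=\mu_i^{\star 0}$, the same add-and-subtract of $L_j(\bm{\mu_t})$, the same two-channel bound on the perturbation (expenditure channel restricted to co-matched agents giving $\frac{\Delta}{\underline{\mu}}a_{i,j}$, aggregate-redistribution gain channel giving $\frac{\gamma\Delta}{N\underline{\mu}}$ per agent), and the same use of $\lambda$-strong monotonicity. The only difference is in the final step: the paper keeps the cross term as $2\epsilon\,c\,s_t^{1/2}$ and invokes the specialized recursion lemma \cite[Lemma C.5]{balseiro2019learning}, whereas you absorb it via Young's inequality into a plain geometric recursion with contraction factor $1-\lambda\epsilon$; both yield the identical constants in the stated bound.
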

\begin{proof}
We now turn to \emph{the case where all agents but $i$ follow strategy $K.$} Note that $\mu_{i,t}$ is a priory not defined since agent $i$ does not follow $K$; for convenience, we take the convention $\mu_{i,t} = \mu_i^{\star 0 }$ for all $t \in \mathbb{N}.$ 
    For an agent $j \neq i$, Equation~\eqref{equ:G_MSE_1} $(a)$ still holds and gives
    \begin{equation}\label{equ:G_MSE_epsNE_1}
    \begin{aligned}
        s_{j,t+1}
        &\stackrel{(a)}{\leq} s_{j,t}
        + 2  \epsilon \mathbb{E}_{\bm{v}, \bm{m}} \left[  \left(\mu_{j,t} - \mu_j^{\star 0}\right)  \mathbb{E}_{\bm{v}, \bm{m}} \left[z_{j,t}^\beta -  g_{j,t}^\beta \Big\lvert \bm{\mu_t} \right]   \Big\vert t\leq \underline{\mathscr{T}}\right] \\
        &+ \epsilon^2 \mathbb{E}_{\bm{v}, \bm{m}} \left[  \left( z_{j,t}^\beta   -  g_{j,t}^\beta \right)^2 \Big\vert t\leq \mathscr{T} \right] \\
        &\stackrel{(b)}{\leq} s_{j,t}
        + 2 \epsilon \mathbb{E}_{\bm{v}, \bm{m}} \left[  \left(\mu_{j,t} - \mu_j^{\star 0}\right) L_j(\bm{\mu_t})  \vert t\leq \underline{\mathscr{T}} \right] \\
        &+2 \epsilon \mathbb{E}_{\bm{v}, \bm{m}} \bigg[   \left \lvert \mathbb{E}_{\bm{v}, \bm{m}} \left[z_{j,t}^\beta  - g_{j,t}^\beta \big\lvert \bm{\mu_t} \right] - L_j(\bm{\mu_t}) \right\lvert \cdot \left\lvert \mu_{j,t} - \mu_j^{\star 0} \right\lvert  \Big\vert t\leq \underline{\mathscr{T}} \bigg] \\
        &+ \epsilon^2 \mathbb{E}_{\bm{v}, \bm{m}} \left[  \left( z_{j,t}^\beta   -  g_{j,t}^\beta \right)^2 \Big\vert t\leq \mathscr{T} \right]  .
    \end{aligned}  
\end{equation}
Inequality $(a)$ directly follows from inequality~\eqref{equ:G_MSE_1} $(a)$; inequality $(b)$ follows from adding and subtracting $L_j(\bm{\mu_t})$ in the second term of $(a)$, and noting that $xy \leq \lvert x \lvert \cdot \lvert y \lvert$ for $x,y \in \mathbb{R}.$
   
As Equations~\eqref{equ:G_MSE_2} and~\eqref{equ:G_MSE_3} hold when summing over all agents $j\neq i$, it only remains to show what happens for the term in $\mathbb{E}_{\bm{v}, \bm{m}} \left[z_{j,t}^\beta - g_{j,t}^\beta \Big\lvert \bm{\mu_t} \right].$ 
Let $d_{j,t}^{\gamma,K} = \gamma\textsuperscript{th}\mbox{-}\max_{k:k\neq j} \left\{ \mathds{1}\{ m_{k,t} = m_{j,t}\}  \frac{\Delta v_{k,t}}{\mu_{k,t}} \right\} $ denote the competing price in the imaginary setting where agent $i$ follows strategy $K$, as opposed to the real competing bid $d_{j,t}^{\gamma, \beta}.$
\begin{equation*}\label{equ:G_MSE_epsNE_2}
    \begin{aligned}
        &\left\vert \mathbb{E}_{\bm{v}, \bm{m}} \left[z_{j,t}^{\beta} - g_{i,t}^{\beta} \big\vert \bm{\mu_t} \right] - L_j(\bm{\mu_t}) \right\vert \\
        &\stackrel{(a)}{=} \Bigg \vert \mathbb {E}_{\bm{v}, \bm{m}} \bigg[ \mathds{1}\left\{m_{j,t} = m_{i,t} \right\} \bigg(
        d_{j,t}^{\gamma,\beta} \mathds{1}\left\{ \frac{\Delta v_{j,t}}{\mu_{j,t}} \geq d_{j,t}^{\gamma,\beta} \right\} \\
        &- d_{j,t}^{\gamma,K} \mathds{1}\left\{ \frac{\Delta v_{j,t}}{\mu_{j,t}} \geq d_{j,t}^{\gamma,K} \right\} \bigg) +   g_{j,t}^{K} - g_{j,t}^{\beta}  \bigg] \Bigg \vert ,\\
        &\stackrel{(b)}{\leq} \frac{\Delta}{\underline{\mu}} a_{i,j} +  \Big\vert \frac{\gamma}{N} \left(p_{m_j,t}^{K} -p_{m_j,t}^{\beta} \right) \Big\vert, \\
        &\stackrel{(c)}{\leq} \frac{\Delta}{\underline{\mu}} a_{i,j} +  \frac{\gamma\Delta }{N \underline{\mu} } .
    \end{aligned}
\end{equation*}
Equality $(a)$ uses that $d_j^{\gamma,\beta} = d_j^{\gamma,K}$ when $i$ plays in a different auction from $j$; inequality $(b)$ holds by rewriting  $g_t = \frac{\gamma}{N} \sum_{m=1}^M p_{m,t}$, and noting that  $0 \leq d_j^\gamma \mathds{1}\left\{ \frac{\Delta v_{j,t}}{\mu_{j,t}} \geq d_{j,t}^{\gamma} \right\} \leq \frac{\Delta}{\underline{\mu}}.$   Inequality $(c)$  uses that the price $p_m$ of any auction is bounded $0 \leq p_m \leq \frac{\Delta}{\underline{\mu}}.$

Summing the previous equation
over all $j\neq i$ gives
\begin{equation}\label{equ:G_MSE_epsNE_3}
\begin{aligned}
     \mathbb{E}_{\bm{v}, \bm{m}} \Bigg[  &\sum_{j\neq i}  \big\vert \mu_{j,t} - \mu_j^{\star 0} \big \vert \cdot \left\vert \mathbb{E}_{\bm{v}, \bm{m}} \left[ z_{j,t}^{\beta} - g_{i,t}^{\beta} \big\vert \bm{\mu_t} \right] - L_j(\bm{\mu_t}) \right\vert \bigg\vert t\leq \underline{\mathscr{T}} \Bigg] \\
    &\leq  \frac{\Delta}{\underline{\mu}}  \mathbb{E}_{\bm{v}, \bm{m}} \left[ \sum_{j\neq i}  a_{i,j} \big\vert \mu_{j,t} - \mu_j^{\star 0} \big\vert \bigg\vert t\leq \underline{\mathscr{T}}\right] \\
    &+ \frac{\gamma\Delta }{N \underline{\mu} } \mathbb{E}_{\bm{v}, \bm{m}} \Big[ \big\Vert \bm{\mu_t} -  \bm{\mu^{\star0}} \big\Vert_1 \vert t\leq \underline{\mathscr{T}}\Big] ,\\
    &\stackrel{(a)}{\leq} \left( \frac{\Delta}{\underline{\mu}} \Vert \bm{a_i}\Vert_2 +  \frac{ \gamma \Delta }{\sqrt{N}\underline{\mu}}  \right)
    \mathbb{E}_{\bm{v}, \bm{m}} \Big[ \big\Vert \bm{\mu_t} -  \bm{\mu^{\star0}} \big\Vert_2 \vert t\leq \underline{\mathscr{T}}\Big] ,\\
    &\stackrel{(b)}{\leq} \left( \frac{\Delta}{\underline{\mu}} \Vert \bm{a_i}\Vert_2 + \frac{\gamma\Delta }{\sqrt{N} \underline{\mu} } \right)  s_t^{1/2},
\end{aligned}
\end{equation}
where inequality $(a)$ follows from Cauchy-Schwartz and the relation between $\Vert \cdot \Vert_1$ and $\Vert \cdot \Vert_2$,  and $(b)$ from Jensen's inequality.

Combining Equations~\eqref{equ:G_MSE_2}, \eqref{equ:G_MSE_3}, \eqref{equ:G_MSE_epsNE_1} and~\eqref{equ:G_MSE_epsNE_3} gives the following recursion, i.e.,
\begin{equation*}
    s_{t+1} 
    \leq (1 - 2\lambda \epsilon ) s_t 
    + 2 \epsilon \left( \frac{\Delta}{\underline{\mu}} \Vert \bm{a_i}\Vert_2 + \frac{\gamma\Delta }{\sqrt{N} \underline{\mu} } \right)   s_t^{1/2}
    +   \frac{N\Delta^2 \epsilon^2}{\underline{\mu}^2} .
\end{equation*}

Since $s_1 \leq N \overline{\mu}^2 $ and $ 2\lambda \epsilon \leq 1$ under Assumption~\ref{ass:G-sim-learn-parameters}, \cite[Lemma C.5] {balseiro2019learning} finally yields the desired result:
\begin{equation*}
    s_t 
    \leq  s_1  (1 - \lambda \epsilon )^{t-1} 
    + \frac{ N \Delta^2 \epsilon }{\lambda \underline{\mu}^2} 
    + \frac{\Delta^2 }{\lambda^2 \underline{\mu}^2} \left(\Vert \bm{a_i}\Vert_2 
    + \frac{ \gamma}{ \sqrt{N} } \right)^2.
\end{equation*}

\end{proof}

\subsection{Proof of Theorem~\ref{thm:G_stat_comp}}\label{sec:proof_G_stat_comp}

In this section, we first prove Theorem \ref{thm:G_stat_comp} by considering the perspective of a single agent $i\in \mathcal{N}$. We drop the subscript $i$ for simplicity of notation.

\subsubsection{Lower Bound on Performances in Hindsight.}

Let $\Psi^H$ denote the expected Lagrangian dual function associated with the lower bound on the total cost in hindsight $\underline{\mathcal{C}}^{H}$, defined for all $\mu\geq 0$ as
\begin{equation*}
    \begin{aligned}
        \Psi^H(\mu) 
        &= \mathbb{E}_{v, d} \Big[\delta_{t}^H \big(v_{t}, d_{t}^\gamma, \mu \big) \Big] \\
        &=  \mathbb{E}_{v, d} \Big[v  - \mu \left(\rho +  \frac{\gamma}{N} d^\gamma \right) - (\Delta v - \mu d^\gamma )^+\Big].
    \end{aligned}
\end{equation*}
Let $\mu^{\star H}$ denote its maximizer (the unicity is implied by Assumption~\ref{ass:G_stat_comp_L_mono}).
We lower-bound the lowest cost in hindsight using $\Psi^H$ in the following manner:
\begin{equation}\label{equ:G_stat_comp_1}
 \begin{aligned}
     &\mathbb{E}_{\bm{v}, \bm{d}} \left[\mathcal{C}^{H} (\bm{v}, \bm{d})  \right] 
     \stackrel{(a)}{\geq} \mathbb{E}_{\bm{v}, \bm{d}} \left[\underline{\mathcal{C}}^{H} (\bm{v}, \bm{d})  \right] ,\\
     &\stackrel{(b)}{\geq} \mathbb{E}_{\bm{v}, \bm{d}} \left[ \sup_{\mu\geq 0} \sum_{t=1}^T v_{t} -\mu \left(\rho +  \frac{\gamma}{N} d_t^\gamma \right) -\left(\Delta v_{t} - \mu d_{t}^\gamma \right)^+ \right], \\
     &\stackrel{(c)}{\geq}  \sup_{\mu\geq 0} \mathbb{E}_{\bm{v}, \bm{d}} \left[\sum_{t=1}^T v_{t}  -\mu \left(\rho +  \frac{\gamma}{N} d_t^\gamma \right) - \left(\Delta v_{t} - \mu d_{t}^\gamma\right)^+ \right], \\
     &\stackrel{(d)}{=}  \sup_{\mu \geq 0} \sum_{t=1}^T \Psi^H (\mu),\\
    & := T \Psi^H \big(\mu^{\star H}\big).
 \end{aligned}
 \end{equation}
    Inequality $(a)$ follows from Equation~\eqref{equ:G_lowerbound_cost_H} and inequality $(b)$ from Equation~\eqref{equ:G_stat_comp_hindsight-dual}. 
    Inequality $(c)$ holds since \\$\mathbb{E}_X [\sup_y f_y(X)] \geq \sup_y \mathbb{E}_X[ f_y(X)]$. Indeed, $\mathbb{E}_X[\sup_y f_y(X)] \geq \mathbb{E}_X[f_y(X)]$ holds for all $y$, hence taking $\sup_y$ on the right hand side gives the desired result.
    Finally, inequality $(d)$  uses the linearity of expectation.

  \subsubsection{Upper Bound on the Expected Total Cost and Cost-per-Auction.}\label{sec:proof_G_stat_comp_tot_cost}

  We upper-bound the total cost of strategy $K$ by considering that the worst always happens after budget depletion. 
    \begin{equation}\label{equ:G_stat_comp_2}
      \begin{aligned}
          \mathbb{E}_{\bm{v},\bm{d}}\left[ \mathcal{C}^K(\bm{v},\bm{d}) \right] 
        &\stackrel{(a)}{\leq} \mathbb{E}_{\bm{v},\bm{d}} \left[ \sum_{t=1}^{ \mathscr{T}} c_t^K \right]  +  \mathbb{E}_{\bm{v},\bm{d}} \left[T - \mathscr{T}  \right]
      \end{aligned}
  \end{equation}
  Inequality $(a)$ is obtained by assuming that the agent always gets maximum utility $v_t =1$ after time $\mathscr{T}$, but never manages to access the fast road.

For the rest of the proof, we only consider $t\leq \mathscr{T}$, where bids $b_{i,t}$ in Equation~\eqref{equ:G-b} are exactly equal to $\frac{\Delta v_{i,t}}{\mu_{i,t}}.$ 
We first rewrite the cost-per-period in the following manner:
\begin{equation}
\label{equ:G_stat_comp_2.5}
\begin{aligned}
    c_t^K 
    &= v_t \left(1 - \Delta \mathds{1}\{ \Delta v_t > \mu_t d_t^\gamma\}\right) ,\\
    &= v_t  - \mu_t g_t - (\Delta v_t - \mu_t d_t^\gamma)^+ +\mu_t \left(g_t  - d_t^\gamma \mathds{1}\{\Delta v_t > \mu_t d_t^\gamma \}\right).
\end{aligned}
\end{equation}
Noting that the first three terms are reminiscent of the dual objective, while the two last ones correspond to the period's karma losses $z_t - g_t$, we further define  $C(\mu) = \mathbb{E}_{v, d}[c^K \lvert \mu]= \Psi^0(\mu) - \mu L(\mu)$ the expected cost-per-period. 
Note that Assumption~\ref{ass:G_stat_comp_L_mono} ensures it is twice differentiable with derivatives $C^{'}(\mu) = - \mu \left( L^{'}(\mu) + G^{'}(\mu) \right) = -\mu Z^{'}(\mu)$ and $C^{''}(\mu) = -\left(Z^{'}(\mu)+ \mu Z^{''}(\mu) \right).$
Note moreover that $C(\mu^{\star 0}) = \Psi^0(\mu^{\star 0})$ by definition of $\mu^{\star 0}.$
Taking expectation 
and making a Taylor expansion in $\mu^{\star0}$ gives the following for some $\zeta$ between $\mu_t$ and $\mu^{\star0}$:
\begin{equation}\label{equ:G_stat_comp_4}
    \begin{aligned}
        \mathbb{E}_{\bm{v}, \bm{d}} \left[ c_t^K \big\vert t\leq \mathscr{T}\right] 
        &= \mathbb{E}_{\bm{v}, \bm{d}}\left[ C(\mu_t) \vert t\leq \mathscr{T} \right],\\
        &= C\big(\mu^{\star 0}\big) + \mathbb{E}_{\bm{v}, \bm{d}}\left[\mu_t - \mu^{\star 0}\vert t\leq \mathscr{T}\right] C^{'}\left(\mu^{\star 0}\right) \\
        &+ \mathbb{E}_{\bm{v}, \bm{d}}\left[\left(\mu_t -\mu^{\star0}\right)^2 \vert t\leq \mathscr{T}\right] \frac{C^{''}(\zeta)}{2},\\
        &\leq \Psi^0\big(\mu^{\star0}\big) +  \left\lvert \mathbb{E}_{\bm{v}, \bm{d}} \left[\mu_t - \mu^{\star0}\vert t\leq \mathscr{T}\right] \right\lvert \mu^{\star0} \overline{Z^{'}} \\
        &+ \mathbb{E}_{\bm{v}, \bm{d}} \left[\left(\mu_t - \mu^{\star0}\right)^2\vert t\leq \mathscr{T}\right] \frac{\overline{\mu}\overline{Z^{''}} + \overline{Z^{'}}}{2},\\
       &= \Psi^0\big(\mu^{\star0}\big) +  r_t \overline{Z^{'}}+ s_t \frac{\overline{\mu}\overline{Z^{''}} + \overline{Z^{'}}}{2},
    \end{aligned}
\end{equation}
where $s_t$ stands for the mean squared error $s_t = \mathbb{E}_{\bm{v}, \bm{d}} \left[(\mu_t-\mu^{\star0})^2 \vert t\leq \mathscr{T} \right] $ and $r_t$ for the absolute mean error $r_t = \mu^{\star0} \left\lvert\mathbb{E}_{\bm{v}, \bm{d}}[\mu_t -\mu^{\star 0} \vert t\leq \mathscr{T} ] \right \lvert $, defined for all $t \in [T].$
 
\subsubsection{Upper Bound on the Absolute Mean Error.} 
Recall that \\$P_{[\underline{\mu}, \overline{\mu}_i ]} (\mu_{t+1}) = \mu_{t+1}$ for all $t < \mathscr{T}$ by definition of the hitting time $\mathscr{T}$, and the multiplier is directly used to shed bids in Equation~\eqref{equ:G-b}.
Taking expectation on the update rule $ \mu_{t+1}= \mu_t + \epsilon (z_t - g_t)$ hence gives the expected loss function, i.e.,
\begin{equation*}
    \begin{aligned}
        \mathbb{E}_{\bm{v}, \bm{d}} \left[ \mu_{t+1} - \mu_{t} \vert t+1 \leq \mathscr{T} \right] = \epsilon L(\mu_t).
    \end{aligned}
\end{equation*}

Reordering terms, subtracting and multiplying by $\mu^{\star 0}$, and making a Taylor expansion in $\mu^{\star0}$ then gives the following:
\begin{equation*}
    \begin{aligned}
        \mu^{\star 0} \mathbb{E}_{\bm{v}, \bm{d}} \left[ \mu_{t+1} - \mu^{\star 0} \vert t+1 \leq \mathscr{T} \right] 
        &= \mu^{\star 0} \mathbb{E}_{\bm{v}, \bm{d}} \left[\mu_t - \mu^{\star 0} \vert t \leq \mathscr{T} \right] \\
        &+ \epsilon \mu^{\star 0} \Big( L^{'}\big(\mu^{\star 0}\big) \mathbb{E}_{\bm{v}, \bm{d}} \left[\mu_t - \mu^{\star 0} \vert t \leq \mathscr{T} \right] \\
        &+ \frac{1}{2} L^{''}(\zeta) \mathbb{E}_{\bm{v}, \bm{d}} \left[ (\mu_t - \mu^{\star 0})^2 \vert t \leq \mathscr{T} \right]  \Big)  .
    \end{aligned}
\end{equation*}
The equality follows from Taylor's Theorem for some $\zeta$ between $\mu_t$ and $\mu^\star$ because $L$ is twice differentiable under Assumption~\ref{ass:G_stat_comp_L_mono}.

Taking absolute values then gives the following 
recursion, i.e.,
\begin{align*}
    r_{t+1} 
       &\leq \left\lvert 1 + \epsilon L^{'}\big(\mu^{\star 0}\big) \right\lvert r_t +  \epsilon \frac{\mu^{\star 0}}{2}     \left(\overline{Z^{''}} + \overline{G^{''}}\right) s_t  , \\
        &\stackrel{(a)}{=} \left( 1 + \epsilon L^{'}\big(\mu^{\star 0}\big) \right) r_t +  \epsilon \frac{\mu^{\star 0}}{2}     \left(\overline{Z^{''}} + \overline{G^{''}}\right) s_t  , \\
       &\stackrel{(b)}{\leq} (1 - \epsilon \lambda ) r_t +  \epsilon   \frac{ \overline{\mu}}{2}  \left(\overline{Z^{''}} + \overline{G^{''}}\right)   s_t  .
\end{align*}
Note that we have  $- \left(\overline{Z}^{'} + \overline{G}^{'} \right) \leq L^{'}(\mu) \leq -\lambda$ for all $\mu \geq 0$ under Assumption~\ref{ass:G_stat_comp_L_mono}. 
Hence Equality $(a)$ holds because $1 + \epsilon L^{'}\big(\mu^{\star 0}\big) \geq 1- \epsilon \left(\overline{Z}^{'} + \overline{G}^{'}\right) \geq 0 $ under Assumption~\ref{ass:G_stat_comp_eps}. Inequality $(b)$ on the other hand uses that $1 + \epsilon L^{'}(\mu^{\star 0}) \leq 1 - \epsilon \lambda$ and the bound on $\mu^{\star 0}$ from Assumption~\ref{ass:G_stat_comp_mu^*}.

Since $\lambda \epsilon \leq 1$  and $r_1 \leq \overline{\mu}^2$ under Assumption~\ref{ass:G-parameters}, \cite[Lemma C.4]{balseiro2019learning} yields
\begin{equation*}
    r_t 
    \leq \overline{\mu}^2 \left(1 - \epsilon \lambda \right)^{t-1} + \epsilon \frac{\overline{\mu}}{2} \left(\overline{Z^{''}} + \overline{G^{''}}\right)  \sum_{j=1}^{t-1} (1 - \epsilon\lambda)^{t-1-j} s_j ,
\end{equation*}

Noting that the partial sums in $(1-\epsilon \lambda)^t$ are smaller than their limit $1/(\epsilon \lambda)$, we finally get the following upper bound on the total absolute error, i.e.,
\begin{equation} \label{equ:G_stat_comp_6}
    \begin{aligned}
         \sum_{t=1}^{\mathscr{T} } r_t 
        &\leq \overline{\mu}^2 \sum_{t=1}^{\mathscr{T} } (1-\epsilon \lambda)^{t-1} + \epsilon \frac{\overline{\mu}}{2} \left(\overline{Z^{''}} + \overline{G^{''}}\right)\sum_{t=1}^{\mathscr{T} } \sum_{j=1}^{t-1} (1-\epsilon\lambda)^{t-1-j}  s_j  ,\\
        &\leq \frac{\overline{\mu}^2}{\epsilon\lambda} + \epsilon \frac{\overline{\mu}}{2} \left(\overline{Z^{''}} + \overline{G^{''}}\right) \sum_{j=1}^{\mathscr{T} -1}  s_j  \sum_{t=0}^{\mathscr{T} -1-j}  (1-\epsilon\lambda)^{t} ,\\
        &\leq \frac{\overline{\mu}^2}{\epsilon\lambda} 
        + \frac{\overline{\mu}}{2\lambda} \left(\overline{Z^{''}} + \overline{G^{''}}\right) \sum_{j=1}^{\mathscr{T}} s_j.
    \end{aligned}
\end{equation}

The same trick combined with Lemma~\ref{lem:G_bound_MSE} also gives
\begin{equation}\label{equ:G_stat_comp_6.5}
    \begin{aligned}
    \sum_{j=1}^{\mathscr{T}} s_j 
    &\leq   \mathscr{T} \frac{ \Delta^2 \epsilon}{2 \lambda \underline{\mu}^2  } + \overline{\mu}^2  \sum_{j=1}^{\mathscr{T}} (1 - 2\lambda \epsilon)^{t-1},\\
    &\leq   \mathscr{T} \frac{ \Delta^2 \epsilon}{2 \lambda \underline{\mu}^2  } + \frac{\overline{\mu}^2 }{2 \lambda \epsilon} .
    \end{aligned}
\end{equation}

\subsubsection{Bound on the Difference of Expected Dual Objectives.}

We now need to bound the difference $ \left\vert\Psi^0 \big(\mu^{\star0}\big) - \Psi^H\big(\mu^{\star H}\big)\right\vert.$
We first relate the properties known for $\Psi^0$ to $\Psi^H.$

By linearity of the differentiability operator, we first get that $\Psi^H$ is differentiable under Assumption~\ref{ass:G_stat_comp_L_mono} since $d^\gamma$ is bounded under Assumption~\ref{ass:G_stat_comp_distr_comp_bids}. 
Its derivative is defined for all $\mu\geq 0$ by $\Psi^{'H}(\mu) = Z(\mu) - \frac{  \gamma}{N}  \mathbb{E}_{ \bm{d}}\left[d^{\gamma}\right] - \rho = L^H(\mu) - \rho$, where we recognize the analogous expected karma loss function $L^H.$ 

Since  $-G^{'}$ gives a positive contribution to the derivative of $ L$ for all $\mu \geq 0$ when the distribution of valuation $\bm{\mathcal{V}}$ is absolutely continuous
(c.f. Equation~\eqref{equ:G_stat_compt_diff_conc_0}), Assumption~\ref{ass:G_stat_comp_L_mono} also implies that $Z$ is strictly decreasing with parameter $\lambda$, hence $L^H$ is both continuous and $\lambda$-strictly decreasing. 
Finally, note that Assumption~\ref{ass:G_stat_comp_distr_comp_bids} implies $L^H(0) = \left(1 - \frac{ \gamma}{N}\right) \mathbb{E}_{\bm{v}, \bm{d}}[ d^{\gamma}] >\rho$; moreover, we have $L^H \big(\mu^{\star0} \big) = L \big(\mu^{\star0}\big) - \frac{ \gamma}{N} \mathbb{E}_{\bm{v}, \bm{d}}\left[d^{\gamma} - p^{\gamma+1}(\mu^{\star0})\right] \leq 0 < \rho $ since $d^{\gamma} \geq p^{\gamma+1}\big(\mu^{\star0}\big)$ and $\rho$ is strictly positive by Assumption~\ref{ass:G_stat_comp_distr_comp_bids}.

The intermediate value theorem hence implies that $\mu^{\star H}$ the maximizer of $\Psi^H$ is such that $L^H\big(\mu^{\star H}\big) = \rho$ lives in the open interval $ ]0,\mu^{\star0}[.$
We can then bound the distance between $\mu^{\star0}$ and $\mu^{\star H}$ using the strong monotonicity of $L^H$, i.e.,
\begin{equation}\label{equ:G_stat_comp_7}
\begin{aligned}
    0 \leq \mu^{\star0} - \mu^{\star H} 
    &\leq \frac{1}{\lambda} \left( L^H \big(\mu^{\star H } \big) - L^H \big(\mu^{\star 0}\big) \right) ,\\
    & = \frac{1}{\lambda} \left(\rho - \frac{ \gamma}{\lambda N} \mathbb{E}_{ \bm{d}} \left[d^{\gamma} - p_t^{\gamma+1}\big(\mu^{\star0}\big) \right] \right) , \\
    &\leq \frac{1}{\lambda} (\rho + \hat{\varepsilon}).
\end{aligned}
\end{equation}
The last inequality holds since $d^{\gamma +1} \leq p_t^{\gamma+1} \big(\mu^{\star0} \big) \leq d^{\gamma}.$

We now turn to bound the difference of dual functions. First, note that: 
\begin{equation}\label{equ:G_stat_comp_10}
    \begin{aligned}
         &\mathbb{E}_{\bm{v}, \bm{d}} \left[ \left(\Delta v - \mu^{\star H} d^\gamma \right)^+ - \left(\Delta v - \mu^{\star0} d^\gamma \right)^+\right] \\
         &= \mathbb{E}_{\bm{v}, \bm{d}} \left[ \left(\mu^{\star 0}  - \mu^{\star H}\right) d^\gamma \mathds{1} \left\{ \frac{\Delta v}{d^\gamma} \geq \mu^{\star0} \geq \mu^{\star H} \right\} \right],\\
         &+ \mathbb{E}_{\bm{v}, \bm{d}} \left[ \left(\Delta v - \mu^{\star H} d^\gamma \right)^+\mathds{1} \left\{ \mu^{\star0}  > \frac{ \Delta v}{d^\gamma} \geq \mu^{\star H}  \right\} \right] ,\\
         &\leq \frac{\Delta }{\underline{\mu}} \left(\mu^{\star0} - \mu^{\star H} \right) + \Delta  \mathbb{P} \left[ \mu^{\star 0}  > \frac{ \Delta v}{d^\gamma} \geq \mu^{\star H} \right].
    \end{aligned}
\end{equation}
We express this probability explicitly by conditioning on the value of $d^\gamma$. Let $H:d^\gamma \mapsto H(d^\gamma)$ denote the cumulative probability function of the competing bid $d^\gamma.$
\begin{equation}\label{equ:G_stat_comp_11}
\begin{aligned}
    \mathbb{P} \left[\mu^{\star 0}  > \frac{ \Delta v}{d^\gamma} \geq \mu^{\star H} \right]
    &= \int_0^{\Delta /\underline{\mu}} \int_{y \mu^{\star H} / \Delta}^{y \mu^{\star 0} / \Delta} \nu(x) dH(y) dx  \\
    &\stackrel{(a)}{\leq } \frac{\overline{\nu} }{\Delta} \left(\mu^{\star 0} - \mu^{\star H} \right) \int_0^{\Delta /\underline{\mu}} y dH(y)  \\
    &\leq  \frac{\overline{\nu} }{\underline{\mu}} \left(\mu^{\star 0} - \mu^{\star H} \right)
\end{aligned}
\end{equation} 
Inequality $(a)$ uses the change of variable $z=\frac{x}{y}$ as well as the absolute continuity of the distribution of valuations $\bm{\mathcal{V}}$.

Together, Equations~\eqref{equ:G_stat_comp_7}, \eqref{equ:G_stat_comp_10} and~\eqref{equ:G_stat_comp_11} finally give the following bound on the difference of expected dual objective functions, i.e.,
\begin{equation} \label{equ:G_stat_comp_12}
    \begin{aligned}
        \left\vert \Psi^0\big(\mu^{\star0}\big) - \Psi^H \big(\mu^{\star H}\big) \right \vert
        &= \bigg\vert \mathbb{E}_{\bm{v}, \bm{d}} \bigg[ \left(\Delta v - \mu^{\star H} d^\gamma \right)^+ - \left(\Delta v - \mu^{\star 0} d^\gamma \right)^+ \\
        &+   \frac{\gamma}{N} \left( \mu^{\star H} d^\gamma - \mu^{\star0} p^{\gamma+1} \big(\mu^{\star0}\big) \right)   + \mu^{\star H} \rho \bigg]\bigg \vert ,\\
        &\stackrel{(a)}{\leq } \frac{\Delta}{\underline{\mu}} (1+ \overline{\nu}) \big(\mu^{\star 0} - \mu^{\star H} \big) 
        + \frac{\gamma}{N} \Big( \big( \mu^{\star0}- \mu^{\star H} \big)\\
        &\times \mathbb{E}_{\bm{d}} \left[ d^\gamma  \right] + \mu^{\star 0} \mathbb{E}_{\bm{v}, \bm{d}} \left[ d^\gamma - p^{\gamma+1} \big(\mu^{\star0}\big) \right]   \Big) + \mu^{\star H} \rho ,\\
        &\stackrel{(b)}{\leq } \left( \overline{\mu} + \frac{\Delta}{\lambda \underline{\mu}} \left(1+ \overline{\nu} + \frac{ \gamma }{ N }\right)  \right) (\rho + \hat{\varepsilon}).
    \end{aligned}
\end{equation}
Inequality $(a)$ follows from adding and removing $\mu^{\star 0} d^\gamma$, using the linearity of expectation, the triangle inequality and Equations~\eqref{equ:G_stat_comp_10} and~\eqref{equ:G_stat_comp_11};
inequality $(b)$ uses Assumptions~\ref{ass:G_stat_comp_mu^*} and \ref{ass:G_stat_comp_distr_comp_bids}, as well as Equation~\eqref{equ:G_stat_comp_7}.

\subsubsection{Intermediate Conclusion.}

Together, Equations~\eqref{equ:G_stat_comp_1}, \eqref{equ:G_stat_comp_2}, \eqref{equ:G_stat_comp_4}, \eqref{equ:G_stat_comp_6}, 
\eqref{equ:G_stat_comp_6.5}, and 
~\eqref{equ:G_stat_comp_12} finally lead to the following, i.e.,
\begin{equation*}
    \begin{aligned}
        &\mathbb{E}_{\bm{v}, \bm{d}} \left[ \mathcal{C}^{K} (\bm{v}, \bm{d}) - \mathcal{C}^H (\bm{v}, \bm{d})  \right] \\
        &\leq \mathbb{E}_{\bm{v}, \bm{d}} \left[ \sum_{t=1}^{ \mathscr{T}} c_t^K \right]  
        +   \mathbb{E}_{\bm{v}, \bm{d}} \left[T - \mathscr{T}  \right] - T \Psi^H(\mu^{\star H}) ,\\
        & \leq
         T \left( \Psi^0 \big(\mu^{\star 0}\big) - \Psi^H \big(\mu^{\star H}\right) \big)
        +  \mathbb{E}_{\bm{v}, \bm{d}} \left[T - \mathscr{T}  \right] 
        + \frac{ \overline{\mu}^2 \overline{Z^{'}}}{\epsilon\lambda}  \\
        &+ \left(\frac{\overline{\mu} \overline{Z^{'}}}{2\lambda} \left(\overline{Z^{''}} + \overline{G^{''}}\right) + \frac{\overline{\mu}\overline{Z^{''}} + \overline{Z^{'}}}{2} \right) \mathbb{E}_{\bm{v}, \bm{d}} \left[\sum_{t=1}^{ \mathscr{T}}  s_t \right],\\ 
        & \leq
        T \left( \overline{\mu} + \frac{\Delta}{\lambda \underline{\mu}} \left(1+ \overline{\nu} + \frac{ \gamma }{ N }\right)  \right) (\rho + \hat{\varepsilon} )
        +  \mathbb{E}_{\bm{v}, \bm{d}} \left[T - \mathscr{T}  \right] 
        + \frac{ \overline{\mu}^2 \overline{Z^{'}}}{\epsilon\lambda} \\
        &+ \left(\frac{\overline{\mu} \overline{Z^{'}}}{2\lambda} \left(\overline{Z^{''}} + \overline{G^{''}}\right) + \frac{\overline{\mu}\overline{Z^{''}} + \overline{Z^{'}}}{2} \right) \left(T \frac{ \Delta^2 \epsilon}{2 \lambda \underline{\mu}^2  } + \frac{\overline{\mu}^2 }{2 \lambda \epsilon} \right).
    \end{aligned}
\end{equation*}

Rewriting $\rho = k_1/T$, we hence proved the existence of a constant $C$ in $\mathbb{R}$ such that
\begin{equation*}
        \dfrac{1}{T} \mathbb{E}_{\bm{v}, \bm{d}} \left[ \mathcal{C}^{K} (\bm{v}, \bm{d}) - \mathcal{C}^H (\bm{v}, \bm{d})  \right] 
        \leq  C \left(\epsilon  +  \frac{ 1 + k_1 \epsilon  }{\epsilon T} 
        + \frac{\mathbb{E}_{\bm{v}, \bm{d}} \left[T - \mathscr{T}  \right]}{T} + \hat{\varepsilon} \right).
\end{equation*}

\subsubsection{Control of Hitting Time.}
Note again that the hitting time $\mathscr{T}$ defined in Equation~\eqref{equ:G-hit-time} is a stronger notion than the karma depletion time considered in the artificial currency setting, c.f. Section~\ref{sec:A}.
We finally prove in the following that $\mathscr{T} =T$ under Assumption~\ref{ass:G_stat_comp_hitting_time}.

\smallskip
We first show that $\mathscr{T}^{\overline{\mu}} = T$, i.e., that $\mu_t \leq \overline{\mu}$ for all $1\leq t \leq T$. 
By contradiction, suppose there exists a first time $2 \leq t \leq T$ where $\mu_t > \overline{\mu}$. By the pigeonhole principle, $\mu_{t-1}$ must belong in $\left [ \overline{\mu} - \epsilon \overline{d}, \overline{\mu}\right]$ and the agent must have won the auction. The latter is however not possible because we have: $$b_{t-1} = \frac{\Delta v_{t-1}}{\mu_{t-1}} \leq \frac{\Delta }{\overline{\mu} - \epsilon \overline{d}} < \underline{d},$$ where the last inequality follows from $\epsilon <  \frac{1}{\overline{d}}\left( \overline{\mu} -\frac{\Delta}{\underline{d}} \right)$
in Assumption~\ref{ass:G_stat_comp_hitting_time}.

\smallskip
Similarly, we prove that $\mathscr{T}^{\underline{\mu}} = T$, i.e. $\mu_t \geq \underline{\mu}$ for all $1\leq t \leq T$.
Indeed, suppose $\mu_t$ belongs in $\left[ \underline{\mu}, \underline{\mu} + \epsilon \overline{d} \right]$. We then have:
$$b_t = \frac{\Delta v_{t}}{\mu_{t}} \geq \frac{\Delta \underline{v}}{\underline{\mu} + \epsilon \overline{d}} > \overline{d},$$ where the last inequality follows from $ \epsilon <   \frac{\Delta \underline{v}}{\overline{d}^2} - \underline{\mu}$ in Assumption~\ref{ass:G_stat_comp_hitting_time}. Hence, the auction is lost, and $\mu_{t+1} > \mu_t.$ The pigeonhole principle then implies the desired result.

\smallskip
We finally prove that $\mathscr{T}^{k} = T$ and that the budget is never depleted. Indeed, suppose $t$ is such that $k_t \leq \Delta / \underline{\mu}.$ Then we have: 
$$\mu_t  =  \mu_1 + \epsilon( k_1 - k_t) > \overline{\mu},$$ using $k_{i,1} > \frac{ \overline{\mu} - \mu_{i,1}}{\epsilon} + \frac{\Delta}{\underline{\mu}} $ in Assumption~\ref{ass:G_stat_comp_hitting_time}. This is impossible since we previously proved that $\mu_t \leq \overline{\mu}$ for all $1\leq t \leq T$. 

Given the definition of $\mathscr{T}_i$ in Equation~\eqref{equ:G-hit-time}, these three properties together imply that $\mathscr{T}_i = T.$

\subsection{Proof of Theorem~\ref{thm:G_sim_lear_cv}}\label{sec:G_sim_lear_cv_proof}

We start the proof of Theorem~\ref{thm:G_sim_lear_cv} by showing that the multiplier profile $\bm{\mu_t}$  converges on average to the stationary multiplier $\bm{\mu^{\star0}}.$ 

\subsubsection{Convergence on Average of the Multiplier Profile.}

Summing Lemma~\ref{lem:G_bound_MSE} gives:
\begin{equation*}
    \begin{aligned}
        &\frac{1}{T} \sum_{t=1}^T \mathbb{E}_{\bm{v}} \left[ \left\Vert \bm{\mu_t} - \bm{\mu^{\star0}} \right\Vert_2^2 \right]\\
        &= \frac{1}{T} \mathbb{E}_{\bm{v}} \left[ \sum_{t=1}^T  \left\Vert \bm{\mu_t} - \bm{\mu^{\star0}} \right\Vert_2^2 \mathds{1}\{t\leq \underline{\mathscr{T}} \} + \left\Vert \bm{\mu_t} - \bm{\mu^{\star0}} \right\Vert_2^2 \mathds{1}\{t > \underline{\mathscr{T}} \} \right],\\
        &\stackrel{(a)}{\leq} \frac{1}{T} \mathbb{E}_{\bm{v}} \left[ \sum_{t=1}^{\underline{\mathscr{T}}} s_t \right] + \frac{N \overline{\mu}^2}{T} \mathbb{E}_{\bm{v}} \left[ T - \underline{\mathscr{T}}  \right], \\ 
        &\stackrel{(b)}{\leq} \frac{1}{T}   \sum_{t=1}^{\underline{\mathscr{T}}} \left[ N \overline{\mu}^2   (1 - 2\lambda \epsilon )^{t-1}    + \frac{N \Delta^2 \epsilon}{2\lambda \underline{\mu}^2}  \right]
        + \frac{N \overline{\mu}^2}{T} \mathbb{E}_{\bm{v}} \left[ T - \underline{\mathscr{T}}  \right], \\ 
        &\stackrel{(c)}{\leq} \frac{N \overline{\mu}^2 }{2\lambda \epsilon T} 
        + \frac{N \Delta^2 \epsilon}{2\lambda \underline{\mu}^2}
        + \frac{N \overline{\mu}^2}{T} \mathbb{E}_{\bm{v}} \left[ T - \underline{\mathscr{T}}  \right] .
    \end{aligned}
\end{equation*}
    Inequality $(a)$ holds by assuming that the error is maximal after time $\underline{\mathscr{T}}$;
    inequality $(b)$ uses Lemma~\ref{lem:G_bound_MSE}, and  inequality $(c)$ holds by bounding the partial geometric series in $1- 2\lambda \epsilon$ by its limit $1/(2\lambda \epsilon).$
    
    Hence there exist some constant $C_1$ in $\mathbb{R_+}$ such that
    \begin{equation*}
    \frac{1}{T} \sum_{t=1}^T \mathbb{E}_{\bm{v}} \left[\big\Vert \bm{\mu_t} - \bm{\mu^{\star0}} \big\Vert_2^2 \right] 
    \leq  C_1 N \left( \epsilon + \frac{1}{\epsilon T} 
    +  \frac{\mathbb{E}_{\bm{v}} \left[ T - \underline{\mathscr{T}}  \right]}{T} \right).
    \end{equation*}

In order to prove the second bound of Theorem~\ref{thm:G_sim_lear_cv}, we need to establish the local Lipschitz continuity of the expected loss function $L_i$ and the expected dual function $\Psi_i^0.$ We start by showing the continuity of the expected gain function $G.$

\subsubsection{Lipschitz Continuity of the Expected Gain}\label{sec:G_sim_lear_lipschitz_G}

We consider the more general setting with $M$ parallel auctions. Choosing the probability of agent $i$ and $j$ to play in the same auction $a_{i,j} =1$ equal to one for all pairs gives the setting considered in Section~\ref{sec:G_sim_lear}.
We consider the competing price $d_i^\gamma = \gamma\textsuperscript{th}\mbox{-}\max_{j:j\neq i} \big\{ \mathds{1}\{ m_j = m_i\}  \Delta v_j / \mu_j \big\}$, as well as the gain $g_i = \frac{\gamma}{N} \sum_{m=1}^M p_m^{\gamma+1}$, where $p_m^{\gamma+1} = {\gamma+1}\textsuperscript{th} $ $\mbox{-}\max_i \big\{ \mathds{1}\{ m_i = m\}  \Delta v_i / \mu_i \big\}$ denotes the price of the $m \textsuperscript{th}$ auction.
Note that the gains $g_i $ are the same for everyone, hence we drop the subscript $i.$ We first study the expected gain function $G.$

For any $m \in [M]$ and realized vectors  $\bm{v}$ and $\bm{m}$, the function  $\bm{\mu} \mapsto p_{m,t}^{\gamma+1}$ is differentiable in $\mu_i$ outside of sets of measure zero 
since the distribution of valuations $\bm{\mathcal{V}}$ is absolutely continuous.
Note that the derivative is non-null only when agent $i$ is the price setter in auction $m$, and that it is bounded by $\Delta / \underline{\mu}^2$ which is integrable over $[\underline{\mu}, \overline{\mu}].$ Leibniz's integral rule hence implies that $K$ is differentiable with respect to $\mu_i$ and verifies:

\begin{equation*}
\begin{aligned}
    \left\vert \frac{\partial G}{\partial \mu_i} (\bm{\mu})  \right\vert
    &= \left\vert - \mathbb{E}_{\bm{v}, \bm{m}} \left[\frac{\gamma}{N}  \sum_{m=1}^M \frac{\Delta v_i}{\mu_i^2} \mathds{1} \left\{m_i = m \right \} \mathds{1} \left\{ d_i^{\gamma} > \frac{\Delta v_i}{ \mu_i} > d_i^{\gamma+1} \right\} \right] \right\vert ,\\
    &=  \mathbb{E}_{\bm{v}, \bm{m}} \left[\frac{\gamma}{N}   \frac{\Delta v_i}{\mu_i^2} \mathds{1} \left\{ d_i^{\gamma} > \frac{\Delta v_i}{ \mu_i} > d_i^{\gamma+1} \right\} \right] .
\end{aligned}
\end{equation*}
Note that $\mathds{1} \left\{ d_i^{\gamma} > \frac{\Delta v_i}{ \mu_i} > d_i^{\gamma+1} \right\}$ is exactly equal to one when agent $i$ sets the price for its auction. Since it can only happen for $M$ agents simultaneously, we can bound the norm of the Jacobian of $K$ as follows, i.e.,
\begin{equation*}
\begin{aligned}
    \Vert J_G (\bm{\mu}) \Vert_2
    &\leq \Vert J_G (\bm{\mu}) \Vert_1 ,\\
    &=  \sum_{i\in\mathcal{N}} \mathbb{E}_{\bm{v}, \bm{m}} \left[    \frac{\gamma \Delta v_i}{N \mu_i^2}  \mathds{1} \left\{  d_i^{\gamma} > \frac{\Delta v_i}{ \mu_i} > d_i^{\gamma+1} \right\} \right] ,\\
    &\stackrel{(a)}{\leq}  \frac{\gamma \Delta}{N \underline{\mu}^2}  \mathbb{E}_{\bm{v}, \bm{m}} \left[  \sum_{i\in\mathcal{N}} \mathds{1} \left\{  d_i^{\gamma} > \frac{\Delta v_i}{ \mu_i} > d_i^{\gamma+1} \right\} \right] , \\
    &\stackrel{(b)}{\leq} \frac{M \gamma \Delta}{N \underline{\mu}^2}.
\end{aligned}
\end{equation*} 
Inequality $(a)$ uses the linearity of expectation as well as the fact that $\bm{\mu} \in \bm{U}$; inequality $(b)$ uses that at most $M$ agents can be price setters simultaneously.

This finally allows to establish that $K$ is locally Lipschitz continuous:
\begin{equation}\label{equ:G_sim_lear_lipschitz_cont_G}
    \begin{aligned}
        \frac{\vert G(\bm{\mu}) -  G(\bm{\mu'}) \vert}{ \Vert \bm{\mu} - \bm{\mu'} \Vert_2} 
        &\leq \sup_{\bm{\mu}\in\bm{U}} \frac{ \Vert J_G (\bm{\mu}) \bm{\mu} \Vert_2 } {\Vert \bm{\mu} \Vert _2}, \\
        &\stackrel{(a)}{\leq}  \sup_{\bm{\mu}\in\bm{U}} \Vert J_G (\bm{\mu}) \Vert_2 ,\\
        &\leq \frac{M \gamma \Delta}{N \underline{\mu}^2} .
    \end{aligned}
\end{equation}
Inequality $(a)$ uses Cauchy-Schwartz inequality.

\subsubsection{Lipschitz Continuity of the Expected Dual Objective}\label{sec:G_sim_lear_lipschitz_psi}

Again, we consider the more general setting with $M$ parallel auctions. Choosing the probability of agent $i$ and $j$ to play in the same auction $a_{i,j} =1$ equal to one for all pairs gives the setting considered in Section~\ref{sec:G_sim_lear}.
For an agent $i \in \mathcal{N}$, we rewrite the dual function $\Psi_i^0$ using the expected gain function $G_i$ and the dual function $\Psi_i$ from the setting without gains, i.e., $\Psi_i(\bm{\mu}) = \Exp_{\bm{v}} \left[v_i  -\mu_i \rho_i - (\Delta v_i - \mu_i d_i^\gamma )^+\right].$ 
We indeed have $\Psi_i^0(\bm{\mu}) = \Psi_i(\bm{\mu}) + \mu_i ( \rho_i - G(\bm{\mu}))$, hence the Lipschitz continuity of $\Psi_i^0$ on $\bm{U}$ can be deduced from that of $\Psi_i$ and $K$. We thus focus on bounding the derivatives of $\Psi_i$.

For any realized vectors  $\bm{v} = (v_j)_{j=1}^N$ and $\bm{m} = (m_j)_{j=1}^N$, the function  $\bm{\mu} \mapsto( \Delta v_i - \mu_i d_i^\gamma )^+$ is differentiable in $\mu_i$ with derivative bounded by $\Delta / \underline{\mu}$, except in the set $\{ (\bm{v},\bm{m}) : \Delta v_i = \mu_i d_i^\gamma \}$ of measure zero 
since the distribution of valuation $\bm{\mathcal{V}}$ is absolutely continuous.
Leibniz's integral rule hence implies:
\begin{equation*}
    \frac{\partial \Psi_i}{\partial \mu_i} (\bm{\mu}) =  \mathbb{E}_{\bm{v}, \bm{m}} \left[ d_i^\gamma \mathds{1}\{\Delta v_i \geq \mu_i d_i^\gamma \} \right] - \rho_i  .
\end{equation*}
Hence Assumption~\ref{ass:G_sim_lear_mu_star} ensures the derivative is bounded, i.e.,
\begin{equation}\label{G:lipschitz_cont_psi_1}
    \left\lvert \frac{\partial \Psi_i}{\partial \mu_i} (\bm{\mu})\right \lvert 
    \leq \frac{\Delta}{\underline{\mu}}.
\end{equation}

Furthermore, consider $j\neq i$ and let \\
$d_{i,j}^\gamma = \gamma\textsuperscript{th}\mbox{-}\max_{\ell \neq i,j } \left\{ \mathds{1} \{ m_\ell = m_i\} \Delta v_\ell / \mu_\ell\right\}$ be the competing price for $i$ before $j$ places its own bid.
We then rewrite:
\begin{equation*}
    \left( \Delta v_i - \mu_i d_i^\gamma \right)^+ =
    \begin{cases}
        \left( \Delta v_i - \mu_i \frac{\Delta v_j} {\mu_j} \right)^+  &\text{ if } m_j = m_i \text{ and }  d_{i,j}^{\gamma-1} \geq \frac{\Delta v_j}{ \mu_j} \geq d_{i,j}^\gamma, \\
        \left( \Delta v_i - \mu_i d_{i,j}^\gamma \right)^+  &\text{ otherwise.}
    \end{cases}
\end{equation*} 
The function $ \bm{\mu} \mapsto (\Delta v_i - \mu_i d_i^\gamma )^+$ is hence differentiable in $\mu_j$ with derivative bounded by $\Delta \overline{\mu}/ \underline{\mu}^2$ outside of the sets \\
$\left\{ (\bm{v},\bm{m}) : d_{i,j}^{\gamma-1} > \dfrac{\Delta v_i}{\mu_i} = \dfrac{\Delta v_j}{\mu_j} > d_{i,j}^\gamma,~ m_i = m_j \right\} $,\\
$\left\{ (\bm{v},\bm{m}) : \dfrac{\Delta v_i}{\mu_i} \geq \dfrac{\Delta v_j}{\mu_j} = d_{i,j}^{\gamma-1},~ m_i = m_j  \right\}$ 
and \\$\left\{ (\bm{v},\bm{m}) : \dfrac{\Delta v_i}{\mu_i} \geq \dfrac{\Delta v_j}{\mu_j} = d_{i,j}^\gamma,~ m_i = m_j  \right\}$, which are all of measure zero by absolute continuity of the distribution of valuation $\bm{\mathcal{V}}.$ Leibniz's integral rule then implies that $\Psi_i$ is differentiable with respect to $\mu_j$, i.e., 
 \begin{equation*}
         \dfrac{\partial \Psi_i}{\partial \mu_j} (\bm{\mu})
         = \mathbb{E}_{\bm{v}, \bm{m}} \left[ \dfrac{\Delta v_j \mu_i}{\mu_j^2} 
         \mathds{1}\left\{ \frac{v_i}{\mu_i} \geq \frac{v_j}{\mu_j},~ d_{i,j}^{\gamma-1} \geq  \dfrac{\Delta v_j}{\mu_j} \geq d_{i,j}^\gamma,~ m_i = m_j \right\}   \right].
 \end{equation*}
 Using that $v_i/\mu_i \geq v_j/\mu_j$ in the indicator function, we can bound the derivative as follows, i.e., 
 \begin{equation}\label{G:lipschitz_cont_psi_2}
         \left \lvert \dfrac{\partial \Psi_i}{\partial \mu_j} (\bm{\mu})  \right \lvert
         \leq  \mathbb{E}_{\bm{v}, \bm{m}} \left[ \dfrac{\Delta v_i}{\mu_j} \mathds{1}\left\{ m_i = m_j \right\}   \right] \\
         \leq \frac{\Delta }{\underline{\mu}} a_{i,j}.
 \end{equation}

 Together, Equations~\eqref{equ:G_sim_lear_lipschitz_cont_G}, \eqref{G:lipschitz_cont_psi_1}and~\eqref{G:lipschitz_cont_psi_2}  finally imply the Lipschitz continuity of $\Psi_i^0$ on $\bm{U}.$ 
\begin{equation}\label{equ:G_lipschitz_psi}
\begin{aligned}
    \left\lvert \Psi_i^0(\bm{\mu}) - \Psi_i^0(\bm{\mu'}) \right\lvert 
    &\leq \left\lvert \Psi_i(\bm{\mu}) - \Psi_i(\bm{\mu'}) \right\lvert +  \left\lvert \mu_i G(\bm{\mu}) - \mu_i' G(\bm{\mu'}) \right \vert \\
    &+ \rho_i \lvert \mu_i - \mu_i'\lvert ,\\
    &\stackrel{(a)}{\leq} \frac{\Delta}{\underline{\mu}} \left( \lvert \mu_i - \mu_i'\lvert + \sum_{j\neq i} a_{i,j} \lvert \mu_j - \mu_j'\lvert \right)
    + \mu_i \left\lvert G(\bm{\mu}) -  G(\bm{\mu'}) \right \vert \\
    &+ \left\vert G(\bm{\mu'}) \right\vert \cdot \left\vert \mu_i - \mu'_i \right\vert  + \rho_i \lvert \mu_i - \mu_i'\lvert , \\
    &\stackrel{(b)}{\leq}  \frac{\Delta}{\underline{\mu}} \left( \Vert \bm{a_i} \Vert_2 +   \frac{M\gamma  \overline{\mu} }{N \underline{\mu}} \right)    \Vert \bm{\mu} - \bm{\mu'} \Vert_2\\
     &+ \frac{\Delta}{\underline{\mu}} \left( 1 + \rho_i +  \frac{M \gamma  }{N }  \right)\left\vert \mu_i - \mu_i' \right\vert .
\end{aligned}  
\end{equation}
Inequality $(a)$ follows from Equations~\eqref{G:lipschitz_cont_psi_1} and~\eqref{G:lipschitz_cont_psi_2}, as well as by adding and substracting $\mu_i G(\bm{\mu'})$ and using the triangle inequality. Inequality $(b)$ uses Equation~\eqref{equ:G_sim_lear_lipschitz_cont_G}, the bound $G(\bm{\mu'}) \leq \frac{M\gamma \Delta}{N\underline{\mu}}$ for $\bm{\mu'} \in \bm{U}$, as well as Cauchy-Schwartz inequality.

In the case of a single auction, we hence define the Lipschitz constant for $\Psi_i^0$ such that $\left\lvert \Psi_i^0(\bm{\mu}) - \Psi_i^0(\bm{\mu'}) \right\lvert \leq \sqrt{N} \mathcal{L}_\Psi  \Vert \bm{\mu} - \bm{\mu'} \Vert_2$, i.e., $$\mathcal{L}_\Psi =  \frac{\Delta}{\underline{\mu}} \left( 1 + \frac{1}{\sqrt{N}} \left(\frac{ \gamma \overline{\mu}}{N \underline{\mu}} + \rho_i \right)\right).$$

\subsubsection{Bound on the Derivative of the Cumulative Distribution Function of Competing Prices.}

In order to prove the Lipschitz continuity of $L_i$, we first bound the derivative of the cumulative distribution function of competing prices $H_i(x, \bm{\mu_{-i}})  = \mathbb{P} \left[ d_i^\gamma \leq x \right].$ Since \cite[Lemma C.2]{balseiro2019learning} proposes such a bound when $\gamma = 1$, we only consider the case where $\gamma >1.$

Let $\mathscr{V}_j$ denote the cumulative distribution function of valuations associated with the density of valuations $\nu_j$ of agent $j \in \mathcal{N}$, and let $\overline{\mathscr{V}}_j$ denote the function $y \mapsto 1-\mathscr{V}_j(y).$
We further define $\mathcal{M}_i \in 2^{\mathcal{N} \setminus \{i\}}$ as the set of agents different from $i$ that play in the same auction as $i.$
For a fixed $\bm{\mu_{-i}} \in \mathbb{R}_+^{N-1}$, we write the cumulative distribution of competing prices by conditioning on each agent $j \neq i$ being the price setter for agent $i$, an event that we denote  by $\{j \to i\}.$
\begin{equation}\label{equ:G_lipschitz_H}
    \begin{aligned}
        &H_i(x, \bm{\mu_{-i}}) 
        = \mathbb{P} \left[ d_i^\gamma \leq x \right]\\
        &\stackrel{(a)}{=} \sum_{\mathcal{M}_i } \mathbb{P} \left[ \mathcal{M}_i \right] \mathbb{P} \left[ d_i^\gamma \leq x \lvert \mathcal{M}_i \right] \\
        &\stackrel{(b)}{=} \sum_{\substack{\mathcal{M}_i \\ \lvert \mathcal{M}_i \lvert < \gamma}} \mathbb{P} \left[ \mathcal{M}_i \right] 
        + \sum_{\substack{\mathcal{M}_i \\ \lvert \mathcal{M}_i \lvert \geq \gamma}} \sum_{p \in \mathcal{M}_i} \mathbb{P} \left[ \mathcal{M}_i \right]\\
        &\times \mathbb{P} \left[  \left\{ \frac{\Delta v_p} {\mu_p} \leq x \right\} \cap \{p \to i\}\Big\lvert \mathcal{M}_i \right]
        \\
        &\stackrel{(c)}{=} \sum_{\substack{\mathcal{M}_i \\ \lvert \mathcal{M}_i \lvert < \gamma}} \mathbb{P} \left[ \mathcal{M}_i \right]
        + \sum_{\substack{\mathcal{M}_i \\ \lvert \mathcal{M}_i \lvert \geq \gamma}} \sum_{p \in \mathcal{M}_i} \sum_{\substack{W \subset \mathcal{M}_i\setminus \{p\} \\ \lvert W \lvert = \gamma-1 }} 
        \mathbb{P} \left[ \mathcal{M}_i \right] \\
        &\times \mathbb{P} \left[ \left\{\frac{\Delta v_p}{ \mu_p} \leq x \right\} 
        \bigcap_{w \in W} \left\{ \frac{v_w }{ \mu_w} \geq \frac{v_p}{\mu_p} \right\} 
        \bigcap_{\substack{\ell \in \mathcal{M}_i \setminus W\\ \ell \neq p}} \left\{ \frac{v_\ell}{\mu_\ell} \leq \frac{v_p}{\mu_p} \right\} \Bigg\lvert \mathcal{M}_i \right] \\
        &\stackrel{(d)}{=} \sum_{\substack{\mathcal{M}_i \\ \lvert \mathcal{M}_i \lvert < \gamma}} \prod_{k\in \mathcal{M}_i} a_{i,k}
        + \sum_{\substack{\mathcal{M}_i \\ \lvert \mathcal{M}_i \lvert \geq \gamma}} \sum_{p \in \mathcal{M}_i} \sum_{\substack{W \subset \mathcal{M}_i\setminus \{p\} \\ \lvert W \lvert = \gamma-1 }} 
        \prod_{k\in \mathcal{M}_i} a_{i,k} \\
        & \times \int_0^{x \mu_p / \Delta} \nu_p(y) 
        \prod_{w \in W} \overline{\mathscr{V}}_w  \left( \frac{y \mu_w}{\mu_p} \right) 
        \prod_{\substack{\ell \in \mathcal{M}_i \setminus W\\ \ell \neq p}} \mathscr{V}_\ell  \left( \frac{y \mu_\ell}{\mu_p} \right)  dy\\
    \end{aligned}
\end{equation}
Equality $(a)$, $(b)$ and $(c)$ each follow from the Law of total probability, by partitioning on realizations of $M_i$, then price setters $p$, and finally sets of auction winners $W.$ Equality $(d)$ holds since valuations $\bm{v} = (v_j)_{j=1}^N$ and auctions $\bm{m} = (m_i)_{i=1}^N$ are drawn independently across agents.

Now consider an agent $j \neq i.$ For each term of the sums, note that the integrand is differentiable with respect to $\mu_j$ almost everywhere with derivative bounded by $ y \overline{\nu} \overline{\mu} / \underline {\mu}^2 $, which is integrable on $[0, x \overline{\mu} / \Delta]$ for all $x \geq 0$.  Leibniz's integration rule hence implies that $H_i(x, \bm{\mu_{-i}})$ is differentiable in $\mu_j$, i.e.,

\begin{equation}\label{equ:G_lipschitz_deriv_H}
    \frac{\partial  H_i(x, \bm{\mu_{-i}}) }{\partial \mu_j} =  \sum_{\substack{\mathcal{M}_i \\ \lvert \mathcal{M}_i \lvert \geq \gamma \\ j \in \mathcal{M}_i}} 
    A(\mathcal{M}_i) + B(\mathcal{M}_i) + C(\mathcal{M}_i),
\end{equation}
where we regroup in $A(\mathcal{M}_i)$ the terms where $j \in W$, in $B(\mathcal{M}_i)$ the terms verifying $j\notin W \cup \{ p\}$, and in $C(\mathcal{M}_i)$ the terms where $p=j$.
The expressions for $A(\mathcal{M}_i)$ and $B(\mathcal{M}_i)$ are the following, i.e.,
\begin{equation*}
    \begin{aligned}
        A(\mathcal{M}_i) 
        &= -  \sum_{\substack{p \in \mathcal{M}_i \\ p \neq j}} \sum_{\substack{W \subset \mathcal{M}_i\setminus \{p\} \\ \lvert W \lvert = \gamma-1  \\ j\in W}} \prod_{\substack{k\in \mathcal{M}_i}} a_{i,k}
        \int_0^{x \mu_p / \Delta} 
          \frac{y}{\mu_p} \nu_j \left( \frac{y \mu_j}{\mu_p} \right) \nu_p(y)\\
        &\times\prod_{\substack{w \in W \\ w \neq j }} \overline{\mathscr{V}}_w  \left( \frac{y \mu_w}{\mu_p} \right)
        \prod_{\substack{\ell \in \mathcal{M}_i \setminus W\\ \ell \neq p}} \mathscr{V}_\ell  \left( \frac{y \mu_\ell}{\mu_p} \right) dy,\\
        B(\mathcal{M}_i) 
        &=  \sum_{\substack{p \in \mathcal{M}_i \\ p \neq j}} 
        \sum_{\substack{W \subset \mathcal{M}_i\setminus \{p\} \\ \lvert W \lvert = \gamma-1 \\ j \notin W}}
        \prod_{\substack{k\in \mathcal{M}_i}} a_{i,k}
        \int_0^{x \mu_p / \Delta} 
        \frac{y}{\mu_p} \nu_j \left( \frac{y \mu_j}{\mu_p} \right) \nu_p(y)\\
        &\times\prod_{w \in W} \overline{\mathscr{V}}_w  \left( \frac{y \mu_w}{\mu_p} \right)
        \prod_{\substack{\ell \in \mathcal{M}_i \setminus W\\ \ell \neq j, p}} \mathscr{V}_\ell  \left( \frac{y \mu_\ell}{\mu_p} \right) dy .
    \end{aligned}
\end{equation*}

Moreover, we use the product rule in the expression of $C(\mathcal{M}_i)$ and further regroup in $C_1(\mathcal{M}_i)$ the terms of the product rule where the running index $k$ belongs to $W$, and in $C_2(\mathcal{M}_i)$ those where $k$ belongs to $\mathcal{M}_i \setminus W \cup \{j\}.$

\begin{equation*}
    \begin{aligned}
        C(\mathcal{M}_i) 
        &= \big( C_1(\mathcal{M}_i) + C_2(\mathcal{M}_i) \big) \prod_{\substack{k\in \mathcal{M}_i}} a_{i,k}, \\
        C_1(\mathcal{M}_i) 
        &=  \sum_{\substack{k \in\mathcal{M}_i \\ k \neq j}} \sum_{\substack{W \subset \mathcal{M}_i\setminus \{j\} \\ \lvert W \lvert = \gamma-1 \\ k \in W}}
        \int_0^{x \mu_j / \Delta} 
        \frac{y \mu_k}{\mu_j^2} \nu_k \left( \frac{y \mu_k}{\mu_j} \right) \nu_j(y) \\
        &\times\prod_{\substack{w \in W \\ w\neq k}} \overline{\mathscr{V}}_w  \left( \frac{y \mu_w}{\mu_j} \right)
        \prod_{\substack{\ell \in \mathcal{M}_i \setminus W\\ \ell \neq j}} \mathscr{V}_\ell  \left( \frac{y \mu_\ell}{\mu_j} \right) dy,\\
        C_2(\mathcal{M}_i) 
        &=- \sum_{\substack{k \in\mathcal{M}_i \\ k \neq j}} \sum_{\substack{W \subset \mathcal{M}_i\setminus \{j\} \\ \lvert W \lvert = \gamma-1 \\ k \notin W}}
        \int_0^{x \mu_j / \Delta} 
        \frac{y \mu_k}{\mu_j^2} \nu_k \left( \frac{y \mu_k}{\mu_j} \right) \nu_j(y) \\
        &\times \prod_{\substack{w \in W }} \overline{\mathscr{V}}_w  \left( \frac{y \mu_w}{\mu_j} \right)
        \prod_{\substack{\ell \in \mathcal{M}_i \setminus W\\ \ell \neq j,k}} \mathscr{V}_\ell  \left( \frac{y \mu_\ell}{\mu_j} \right) dy .\\
    \end{aligned}
\end{equation*}


We now proceed to bound the sum of the terms in $A(\mathcal{M}_i)$ in Equation~\eqref{equ:G_lipschitz_deriv_H}. First, we write the cumulative distribution of $d_{i,j}^{\gamma-1} = \gamma -1 \textsuperscript{th}\mbox{-}\max_{\ell \neq i,j } \left\{ \mathds{1} \{ m_\ell = m_i\} \Delta v_\ell / \mu_\ell \right\}$ in a similar manner as in Equation~\eqref{equ:G_lipschitz_H}. 
\begin{equation}\label{eq:G_lipschitz_expression_d_i,j_gamma-1}
    \begin{aligned}
        \mathbb{P} \left[ d_{i,j}^{\gamma-1} \leq x \right]
        &= \sum_{\substack{\mathcal{M}_i \\ \lvert \mathcal{M}_i \lvert < \gamma -1  \\ j \notin M_i}} \prod_{k\in \mathcal{M}_i} a_{i,k}
        + \sum_{\substack{\mathcal{M}_i \\ \lvert \mathcal{M}_i \lvert \geq \gamma-1   \\ j \notin M_i}} \sum_{p \in \mathcal{M}_i} \sum_{\substack{W \subset \mathcal{M}_i\setminus \{p\} \\ \lvert W \lvert = \gamma - 2 }} \\ 
        &\prod_{k\in \mathcal{M}_i} a_{i,k}  \int_0^{x \overline{\mu}_p / \Delta} \nu_p(y) 
        \prod_{w \in W} \overline{\mathscr{V}}_w  \left( \frac{y \mu_w}{\mu_p} \right) \\ 
        &\times \prod_{\substack{\ell \in \mathcal{M}_i \setminus W\\ \ell \neq p}} \mathscr{V}_\ell  \left( \frac{y \mu_\ell}{\mu_p} \right)  dy\\
    \end{aligned}
\end{equation}

Now consider the term $S$ in the expression of $A$ defined by the realization $(\mathcal{M}_i, p,W)$, and denote $\mathcal{M}_i' = \mathcal{M}_i \setminus \{j \}$ as well as $W' = W \setminus \{j\}.$ Note that $(\mathcal{M}_i',p,W')$ defines a term $S'$ in Equation~\eqref{eq:G_lipschitz_expression_d_i,j_gamma-1}. 
Looking at the expression of $S$ and $S'$, we verify that multiplying the integrand of $S$ by a factor $- a_{i,j} \frac{y }{\mu_p} \nu_j\left( \frac{y \mu_j }{\mu_p}\right)$ gives the integrand of $S'$.
Since $y \leq x \mu_p / \Delta$, this factor is bounded by $a_{i,j} x \overline{\nu} /\Delta $ by absolute continuity of the distribution of valuations $\bm{\mathcal{V}}$,
in turn bounded by $a_{i,j} \overline{\nu} / \underline{\mu}$ since $x \leq \Delta / \underline{\mu}.$ 
Since this bound holds for all terms $S$, it holds for the sum. Noting that $\mathbb{P} \left[ d_{i,j}^{\gamma-1} \leq x \right] \leq 1$ gives the following bound on $A$,
\begin{equation}\label{equ:G_lipschitz_bound_A}
    \Bigg\lvert \sum_{\substack{\mathcal{M}_i \\ \lvert \mathcal{M}_i \lvert \geq \gamma \\ j \in \mathcal{M}_i}} 
    A(\mathcal{M}_i)\Bigg\lvert
    \leq a_{i,j} \frac{\overline{\nu}}{\underline{\mu}} \mathbb{P} \left[ d_{i,j}^{\gamma-1} \leq x \right] 
    \leq a_{i,j} \frac{\overline{\nu}}{\underline{\mu}}.
\end{equation}

Furthermore, note that rewriting $k=p$ and making the change of variable $z = y \mu_p / \mu_j$ gives the following expression for $C_1(\mathcal{M}_i)$, i.e.,
\begin{equation*}
    \begin{aligned}
    C_1(\mathcal{M}_i) 
    &=  \sum_{\substack{p \in\mathcal{M}_i \\ p \neq j}} \sum_{\substack{W \subset \mathcal{M}_i\setminus \{j\} \\ \lvert W \lvert = \gamma-1 \\ p \in W}}
    \int_0^{x \mu_p / \Delta} 
    \frac{z }{\mu_p} \nu_p (z) \nu_j \left( \frac{z \mu_j}{\mu_p} \right) \\
    &\times \prod_{\substack{w \in W \\ w\neq p}} \overline{\mathscr{V}}_w  \left( \frac{z \mu_w}{\mu_p} \right)
    \prod_{\substack{\ell \in \mathcal{M}_i \setminus W\\ \ell \neq j}} \mathscr{V}_\ell  \left( \frac{z \mu_\ell}{\mu_p} \right) dz.
    \end{aligned}
\end{equation*}

Now consider the term $S$ in the expression of $C_1$ defined by the realization $(\mathcal{M}_i, p, W)$, and denote $\mathcal{M}_i' = \mathcal{M}_i' \setminus \{j \}$ as well as $W' = W \setminus \{p\}.$ Note that $(\mathcal{M}_i,p,W')$ also defines a term $S'$ in Equation~\eqref{eq:G_lipschitz_expression_d_i,j_gamma-1}. The rest of the argument is similar as for $A$ and we obtain:
\begin{equation}\label{equ:G_lipschitz_bound_C_1}
    \Bigg\lvert \sum_{\substack{\mathcal{M}_i \\ \lvert \mathcal{M}_i \lvert \geq \gamma \\ j \in \mathcal{M}_i}} 
    C_1(\mathcal{M}_i)  \prod_{\substack{k\in \mathcal{M}_i}} a_{i,k} \Bigg\lvert
    \leq a_{i,j} \frac{\overline{\nu}}{\underline{\mu}}.
\end{equation}

We now tackle the terms in $B(\mathcal{M}_i)$ in Equation~\eqref{equ:G_lipschitz_deriv_H}. First, note that $B(\mathcal{M}_i) = 0$ whenever $\lvert \mathcal{M}_i \lvert = \gamma$, since there are no set $W$ of size $\gamma - 1$ that does not contain either $j$ or $p.$ If we now consider the term $S$ in the expression of $B$ defined by the realization $(\mathcal{M}_i, p,W)$verifying $\lvert \mathcal{M}_i \lvert \geq \gamma +1$, and denote $\mathcal{M}_i' = \mathcal{M}_i \setminus \{j \}$ as well as $W' = W \setminus \{j\}$, it appears that $(\mathcal{M}_i',p,W')$ defines a term $S'$ in the expression of $\mathbb{P} \left[ d_{i,j}^{\gamma-1} \leq x \right]$, written similarly as in Equation~\eqref{eq:G_lipschitz_expression_d_i,j_gamma-1}. 
Once again looking at the expression of $S$ and $S'$, we verify that multiplying the integrand of $S$ by a factor $ a_{i,j} \frac{y }{\mu_p} \nu_j\left( \frac{y \mu_j }{\mu_p}\right)$ gives the integrand of $S'$. Following the same argument as for $A$, we obtain the following bound for terms in $B$, i.e.,
\begin{equation}\label{equ:G_lipschitz_bound_B}
    \Bigg\lvert \sum_{\substack{\mathcal{M}_i \\ \lvert \mathcal{M}_i \lvert \geq \gamma \\ j \in \mathcal{M}_i}} 
    B(\mathcal{M}_i)\Bigg\lvert 
    \leq a_{i,j} \frac{\overline{\nu}}{\underline{\mu}}.
\end{equation}

We do the same munipulation for $C_2$ than for $C_1$, i.e., we rewrite $k=p$ and make the change of variable $z = y \mu_p / \mu_j$, which gives,
\begin{equation*}
\begin{aligned}
    C_2(\mathcal{M}_i) 
    &=  \sum_{\substack{p \in\mathcal{M}_i \\ p \neq j}} \sum_{\substack{W \subset \mathcal{M}_i\setminus \{j\} \\ \lvert W \lvert = \gamma-1 \\ p \notin W}}
    \int_0^{x \mu_p / \Delta} 
    \frac{z }{\mu_p} \nu_p (z) \nu_j \left( \frac{z \mu_j}{\mu_p} \right) \\
    &\times \prod_{\substack{w \in W }} \overline{\mathscr{V}}_w  \left( \frac{z \mu_w}{\mu_p} \right)
    \prod_{\substack{\ell \in \mathcal{M}_i \setminus W\\ \ell \neq j,p}} \mathscr{V}_\ell  \left( \frac{z \mu_\ell}{\mu_p} \right) dz.
    \end{aligned}
\end{equation*}

The same argument as for $B$ holds, and we may only consider the cases where $\lvert \mathcal{M}_i\lvert \geq \gamma+1.$
For terms of $C_2$ defined by realizations $(\mathcal{M}_i, p, W)$ verifying this condition, we denote $\mathcal{M}_i' = \mathcal{M}_i \setminus \{j \}$. We again verify that $(\mathcal{M}_i',p,W)$ defines a term $S'$ of $\mathbb{P} \left[ d_{i,j}^{\gamma-1} \leq x \right]$ (see Equation~\eqref{eq:G_lipschitz_expression_d_i,j_gamma-1} and replace $\gamma-1$ by $\gamma$).
    The rest of the argument is similar as before and we obtain a bound on terms in $C_2$, i.e.,
\begin{equation}\label{equ:G_lipschitz_bound_C_2}
    \Bigg\lvert \sum_{\substack{\mathcal{M}_i \\ \lvert \mathcal{M}_i \lvert \geq \gamma \\ j \in \mathcal{M}_i}} 
    C_2(\mathcal{M}_i)  \prod_{\substack{k\in \mathcal{M}_i}} a_{i,k} \Bigg\lvert
    \leq a_{i,j} \frac{\overline{\nu}}{\underline{\mu}}.
\end{equation}

Combining Equations~\eqref{equ:G_lipschitz_deriv_H}, \eqref{equ:G_lipschitz_bound_A}, \eqref{equ:G_lipschitz_bound_C_1} , \eqref{equ:G_lipschitz_bound_B} and~\eqref{equ:G_lipschitz_bound_C_2}  finally gives a bound on the derivatives of $H_i$, i.e.,
\begin{equation}\label{equ:G_lipschitz_bound_deriv_H}
    \left \lvert \frac{\partial  H_i }{\partial \mu_j} (x, \bm{\mu_{-i}}) \right \lvert
    \leq 4 a_{i,j} \frac{\overline{\nu}}{\underline{\mu}}.
\end{equation}

\subsubsection{Lipschitz Continuity of the Expected Loss}

We use the result of the previous paragraphs to prove the Lipschtiz continuity of agent $i$'s expected loss function $L_i(\bm{\mu}) = Z_i(\bm{\mu}) -  G(\bm{\mu}).$ Since we already established that of the expected gain in Section~\ref{sec:G_sim_lear_lipschitz_G}, it only remains to study the 
expected expenditure function $Z_i(\bm{\mu}) = \mathbb{E}_{\bm{v}, \bm{m}} \left[ d_i^{\gamma} \mathds{1}\{\Delta v_i  > \mu_i d_i^\gamma \}  \right]$, where \\$d_i^\gamma = \gamma\textsuperscript{th}\mbox{-}\max_{j:j\neq i} \left\{ \mathds{1}\{ m_j = m_i\}  \Delta v_j / \mu_j \right\}.$


As values are drawn independently across agents, and since $\nu_i$ is null outside $[0,1]$, we can write:
\begin{equation}\label{equ:G_lipschitz_value_Z}
    Z_i(\bm{\mu}) = \int_0^{\Delta / \mu_i} x \left(1- \mathscr{V}_i\left( \frac{x \mu_i} {\Delta} \right) \right) dH_i(x, \bm{\mu_{-i}}).
\end{equation}
The function $\mu_i \mapsto x \left(1- \mathscr{V}_i\left( x \mu_i/\Delta\right) \right)$ is differentiable almost everywhere with derivative bounded by $\Delta ^2 / \underline{\mu}^2 \overline{\nu}$ since the distribution of valuations $\bm{\mathcal{V}}$ is absolutely continuous.
Leibniz's integral rule then implies that the expenditure function is differentiable with respect to $\mu_i$, i.e.,
\begin{equation*}
    \frac{\partial Z_i}{\partial \mu_i}(\bm{\mu})
    = - \int_0^{\Delta / \mu_i} \frac{x^2}{\Delta}  \nu_i \left( \frac{x \mu_i}{\Delta} \right)   dH_i(x, \bm{\mu_{-i}}).
\end{equation*}
Bounding $x^2 / \Delta  \nu_i \left( x \mu_i/\Delta \right)  $ by $ \Delta \overline{\nu} / \underline{\mu}^2 $ under the absolute continuity of the distribution of valuations $\bm{\mathcal{V}}$
and integrating over the competing prices gives the following bound, i.e.,
\begin{equation}\label{equ:G_lipschitz_bound_dZ_dmu_i}
    \left\lvert\frac{\partial Z_i}{\partial \mu_i} (\bm{\mu}) \right \lvert
   \leq \frac{\Delta \overline{\nu}}{\underline{\mu}^2}.
\end{equation}

Furthermore, an integration by part on Equation~\eqref{equ:G_lipschitz_value_Z} gives the following alternate expression for $Z_i(\bm{\mu})$
\begin{equation*}
    Z_i(\bm{\mu}) = \int_0^{\Delta / \mu_i} \left( \frac{x \mu_i}{\Delta} \nu_i\left( \frac{x \mu_i}{\Delta}\right) - \overline{\mathscr{V}}_i\left( \frac{x \mu_i} {\Delta} \right) \right) H_i(x, \bm{\mu_{-i}}) dx.
\end{equation*}
Let $j \neq i.$ Since $H_i(x, \bm{\mu_{-i}})$ is differentiable almost everywhere with respect to $\mu_j$  with derivative bounded according to Equation~\eqref{equ:G_lipschitz_bound_deriv_H}, Leibniz's integral rule implies that:
\begin{equation*}
    \frac{ \partial Z_i}{\partial \mu_j} (\bm{\mu}) = \int_0^{\Delta / \mu_i} 
    \left( \frac{x \mu_i}{\Delta} \nu_i\left( \frac{x \mu_i}{\Delta}\right)
    - \overline{\mathscr{V}}_i\left( \frac{x \mu_i} {\Delta} \right) \right) \frac{\partial H_i}{\partial \mu_j} (x, \bm{\mu_{-i}}) dx.
\end{equation*}
Using Equation~\eqref{equ:G_lipschitz_bound_deriv_H}, bounding $\overline{\mathscr{V}}_i\left( \frac{x \mu_i} {\Delta} \right)$ by one on one hand, and bounding $x$ by $\Delta / \mu_i$ before integrating over $\nu_i$ on the other hand,
we get the following bound on the derivative of $Z_i$, i.e.,
\begin{equation} \label{equ:G_lipschitz_bound_dZ_dmu_j}
    \left\lvert \frac{ \partial Z_i}{\partial \mu_j} (\bm{\mu}) \right \lvert 
    \leq  8 a_{i,j} \frac{\Delta \overline{\nu}}{\underline{\mu}^2} .
\end{equation}

Together, Equations~\eqref{equ:G_sim_lear_lipschitz_cont_G}, ~\eqref{equ:G_lipschitz_bound_dZ_dmu_i} and \eqref{equ:G_lipschitz_bound_dZ_dmu_j}  imply for all $\bm{\mu}$, $\bm{\mu'}$ in $\bm{U}$:
\begin{equation}\label{equ:G_lipschitz_L}
    \begin{aligned}
         \big\lvert L_i(\bm{\mu}) - L_i \left(\bm{\mu'}\right) \big\lvert 
         &\leq \big\lvert Z_i(\bm{\mu}) - Z_i \left(\bm{\mu'}\right) \big\lvert +  \big\lvert G(\bm{\mu}) - G \left(\bm{\mu'}\right) \big\vert ,\\
        &\stackrel{(a)}{\leq}   \frac{\Delta \overline{\nu}}{\underline{\mu}^2} \left(  \lvert \mu_i - \mu_i'\lvert
        + \sum_{j\neq i} 8 a_{i,j}   \lvert \mu_j - \mu_j'\lvert  \right) \\
        &+  \frac{M \gamma \Delta}{N \underline{\mu}^2} \Vert \bm{\mu} - \bm{\mu'} \Vert_2 ,\\
        &\stackrel{(b)}{\leq}  \left(\frac{ 8\Delta \overline{\nu}}{\underline{\mu}^2} \Vert \bm{a_i} \Vert_2  + \frac{M \gamma \Delta}{N \underline{\mu}^2}\right) \Vert \bm{\mu} - \bm{\mu'} \Vert_2 \\
        &+  \frac{ \Delta \overline{\nu}}{\underline{\mu}^2} \lvert \mu_i - \mu_i'\lvert.
    \end{aligned}
\end{equation}
Inequality $(a)$ follows from Equations~\eqref{equ:G_sim_lear_lipschitz_cont_G}, ~\eqref{equ:G_lipschitz_bound_dZ_dmu_i} and \eqref{equ:G_lipschitz_bound_dZ_dmu_j}, and inequality $(b)$ uses Cauchy Schwartz inequality.

In the case of a single auction, we hence define the Lipschitz constant for $L_i$
such that $ \big\lvert L_i(\bm{\mu}) - L_i \left(\bm{\mu'}\right) \big\lvert \leq \sqrt{N} \mathcal{L}_L \Vert \bm{\mu} - \bm{\mu'} \Vert_2$, i.e.,
$$\mathcal{L}_L =   \frac{ 8\Delta \overline{\nu}}{\underline{\mu}^2}  +  \frac{ \gamma \Delta}{\sqrt{N}\underline{\mu}^2}  .$$

\subsubsection{Upper Bound on the Total Expected Cost}

We now return to the main proof and use the Lipschitz continuity of $\Psi_i^0$ and $L_i$ to bound the total expected cost of strategy $K.$

Similarly as in the proof of Theorem~\ref{thm:G_stat_comp} (c.f. Section~\ref{sec:proof_G_stat_comp_tot_cost})
we bound the expected total cost of agent $i \in \mathcal{N}$ by considering it always gets maximum valuation after time $\underline{\mathscr{T}}$ but never accesses the priority road.
  \begin{equation}\label{equ:G_sim_lear_1}
      \begin{aligned}
            \mathcal{C}_i^{K} 
            &\leq \mathbb{E}_{\bm{v}}\left[ \sum_{t=1}^{\underline{\mathscr{T}}} c_{i,t}^K \right]  +   \mathbb{E}_{\bm{v}} \left[T - \underline{\mathscr{T}}   \right].
      \end{aligned}
  \end{equation}

   We now turn to bound the cost-per-period $c_{i,t}^K.$ Note that its expression in Equation~\eqref{equ:G_stat_comp_2.5} still holds in the multi-agent setting. For $t\leq \underline{\mathscr{T}}$, we again take expectation and condition on $\bm{\mu_t}$, recognizing the dual objective $\Psi_i^0$ and the loss $L_i.$
\begin{equation} \label{equ:G_sim_lear_2}
\begin{aligned}
      \mathbb{E}_{\bm{v}} \left[c_{i,t}^K \big\vert t\leq \underline{\mathscr{T}} \right]
       &= \mathbb{E}_{\bm{v}} \left[ \Psi_i^0  (\bm{\mu_t})  - \mu_{i,t} L_i (\bm{\mu_t}) \vert t\leq \underline{\mathscr{T}} \right],  
\end{aligned}
\end{equation} 

Using the Lipschitz continuity of $\Psi_i^0$ in Equation~\eqref{equ:G_lipschitz_psi} and that of $L_i$ in Equation~\eqref{equ:G_lipschitz_L}, we finally express the cost-per-period in terms of $\bm{\mu^{\star0}}.$
\begin{equation} \label{equ:G_sim_lear_5}
\begin{aligned}
      \mathbb{E}_{\bm{v}} \left[c_{i,t}^K \big\vert t\leq \underline{\mathscr{T}} \right]
       &\leq \Psi_i^0 \big(\bm{\mu^{\star0}} \big) + \left( \mathcal{L}_\Psi + \overline{\mu} \mathcal{L}_L \right) \sqrt{N} s_t^{1/2}. 
\end{aligned}
\end{equation}

Moreover, Lemma~\ref{lem:G_bound_MSE} combined with $\sqrt{x+y} \leq \sqrt{x} + \sqrt{y}$ allows to bound the series in $s_t^{1/2}$ as follows, i.e.,
\begin{equation}\label{equ:G_sim_lear_6}
    \begin{aligned}
        \mathbb{E}_{\bm{v}} \left[ \sum_{t=1}^{\underline{\mathscr{T}}} s_t^{1/2} \right]
        &\leq \sqrt{N} \left( \frac{ \Delta \epsilon^{1/2} }{\sqrt{2 \lambda} \underline{\mu} } +  \overline{\mu} \sum_{t=1}^{T} (1 - 2\lambda \epsilon)^{(t-1)/2} \right) \\
        &\stackrel{(a)}{\leq} \sqrt{N} \left( \frac{ \Delta \epsilon^{1/2} }{\sqrt{2 \lambda} \underline{\mu} } +  \frac{2\overline{\mu}}{\lambda \epsilon}  \right) \\
    \end{aligned}
\end{equation}
Inequality $(a)$ holds by bounding the geometric sum by the series' limit and noting that $ \frac{1}{1 - (1-\lambda \epsilon)^{1/2}} \leq \frac{2}{\lambda \epsilon} $ since $1 - (1-x)^{1/2} \geq x/2$ for $x\in[0,1]$.

\subsubsection{Intermediate Conclusion.}

Combining Equations~\eqref{equ:G_sim_lear_1}, \eqref{equ:G_sim_lear_5} and~\eqref{equ:G_sim_lear_6} finally gives:
\begin{align*}
    &\frac{1}{T} \mathcal{C}_i^K - \Psi_i^0 \big(\bm{\mu^{\star0}}\big) \\
    &\leq \frac{1}{T} \mathbb{E}_{\bm{v}} \left[ \sum_{t=1}^{\underline{\mathscr{T}}} c_{i,t} \right]  
    +  \frac{1}{T} \mathbb{E}_{\bm{v}} \left[T - \underline{\mathscr{T}}   \right] - \Psi_i^0 \big(\bm{\mu^{\star0}}\big) ,\\
    &\leq \left( \mathcal{L}_\Psi + \overline{\mu} \mathcal{L}_L \right) \frac{\sqrt{N}}{T} \mathbb{E}_{\bm{v}} \left[ \sum_{t=1}^{\underline{\mathscr{T}}} s_t^{1/2} \right] +  \frac{1}{T} \mathbb{E}_{\bm{v}} \left[T - \underline{\mathscr{T}}   \right] ,\\
    &\leq \left( \mathcal{L}_\Psi + \overline{\mu} \mathcal{L}_L \right) N 
    \left(  \frac{\Delta \epsilon^{1/2} }{\sqrt{2\lambda } \underline{\mu}} + \frac{2 \overline{\mu}}{\lambda \epsilon T} \right) 
     + \frac{1}{T} \mathbb{E}_{\bm{v}} \left[T - \underline{\mathscr{T}}   \right].
\end{align*}

 Rewriting $\underline{\rho} = \underline{k_1}/T$,  this proves the existence of a constant $C_2$ in $\mathbb{R}_+$ such that
\begin{equation*}
    \frac{1}{T} \mathcal{C}_i^K - \Psi_i^0 \big(\bm{\mu^{\star0}} \big)
    \leq C_2 \left( N \left( \epsilon^{1/2} +   \frac{1}{\epsilon T}\right) + \frac{ \mathbb{E}_{\bm{v}} \left[T - \underline{\mathscr{T}} \right]}{T}\right).
\end{equation*}

\subsubsection{Control of Hitting Time.}

It only remains to show that the term $\frac{1}{T} \mathbb{E}_{\bm{v}} \left[T - \underline{\mathscr{T}} \right]$ converges to zero under Assumption~\ref{ass:G-sim-learn-hitting-time}. 
Note that Assumptions~\ref{ass:G_stat_comp_hitting_time-5} and~\ref{ass:G-sim-learn-hitting-time-4} respectively imply $\mathscr{T}_i^k = T$ and $\mathscr{T}_i^{\underline{\mu}} = T$. 

We prove in the following that the multiplier profile $\bm{\mu}_t$ remains in the set $\prod_{i=1}^N [ -\overline{\mu}, \overline{\mu}]$ at all time $1\leq t \leq T$, effectively implying  $\underline{\mathscr{T}}^{\overline{\mu}} = T$, and in turn $\underline{\mathscr{T}} = T$. 
\smallskip

For $t \in \mathbb{N}$, we define $\overline{\mu}_t = \max_{i \in \mathcal{N}} \mu_{i,t}$ and   $\underline{\mu_t} = \min_{i \in \mathcal{N}} \mu_{i,t}$ . For a certain realization $\bm{\mu}_{t+1}$, we retrospectively consider a "big" agent $b \in \mathcal{N}$ such that $\mu_{b,t+1} = \overline{\mu}_{t+1}$ (not necessarily unique). We similarly consider a "small" agent $s$ satisfying $\mu_{s,t+1} = \underline{\mu_{t+1}}.$
Finally, we denote by $\mathcal{W}_t$ the set of auction winners, by $\mathcal{L}_t$ the set of the losers, and $\sum_{i= 1}^N \mu_{i,1} = N \mu_m$ the constant sum of the multipliers.
To prove the result, we show by induction that the difference $\overline{\mu}_{t} - \underline{\mu_{t}} $ remains bounded for all $1 \leq t \leq T.$ 

We proceed by exhaustion and consider at first the case where it does not happen that agent $b$ wins the auction while agent $s$ loses it.

\begin{proposition}
    Suppose it is not the case that $b \in \mathcal{W}_t$ and $s \in \mathcal{L}_t$. 
    Then $\overline{\mu}_{t+1} - \underline{\mu_{t+1}} \leq \overline{\mu}_{t} - \underline{\mu_{t}} .$
\end{proposition}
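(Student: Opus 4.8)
The plan is to exploit the purely additive structure of the multiplier update, which carries no projection. Write $p := p_t^{\gamma+1}$ for the generalized second price at time $t$ and $g_t := \frac{\gamma}{N} p$ for the common karma gain, so that the update $\mu_{i,t+1} = \mu_{i,t} + \epsilon(z_{i,t} - g_{i,t})$ decomposes into a term $-\epsilon g_t$ that is identical for all agents and a term $+\epsilon z_{i,t}$ that is nonzero only for auction winners. The single fact I would invoke about the auction is that in a generalized second-price format every winner pays exactly the price, so (using the no-ties assumption) $z_{i,t} = p$ for all $i \in \mathcal{W}_t$ and $z_{i,t} = 0$ for all $i \in \mathcal{L}_t$; equivalently, removing a top-$\gamma$ bid from the profile makes the $\gamma$-th highest of the remaining bids equal to the $(\gamma+1)$-th highest of the full profile. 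The consequence is that the increments of $\mu_b$ and $\mu_s$ from $t$ to $t+1$ differ only through whether $b$, respectively $s$, is a winner.

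I would then run the three-way case analysis that the hypothesis leaves open, namely (i) $b,s \in \mathcal{L}_t$, (ii) $b \in \mathcal{L}_t$ and $s \in \mathcal{W}_t$, and (iii) $b,s \in \mathcal{W}_t$; the excluded case is exactly $b \in \mathcal{W}_t$, $s \in \mathcal{L}_t$. In cases (i) and (iii) both multipliers move by the same amount ($-\epsilon g_t$ and $-\epsilon g_t + \epsilon p$ respectively), so $\mu_{b,t+1} - \mu_{s,t+1} = \mu_{b,t} - \mu_{s,t}$. In case (ii) one gets $\mu_{b,t+1} - \mu_{s,t+1} = (\mu_{b,t} - \mu_{s,t}) - \epsilon p \le \mu_{b,t} - \mu_{s,t}$ because $p \ge 0$. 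Hence in all three cases $\overline{\mu}_{t+1} - \underline{\mu_{t+1}} = \mu_{b,t+1} - \mu_{s,t+1} \le \mu_{b,t} - \mu_{s,t}$.

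To finish, I would use the definitions $\mu_{b,t} \le \overline{\mu}_t$ and $\mu_{s,t} \ge \underline{\mu_t}$, which give $\mu_{b,t} - \mu_{s,t} \le \overline{\mu}_t - \underline{\mu_t}$ and therefore the stated inequality. There is no genuine obstacle: the only points to be careful about are that the common term $-\epsilon g_t$ cancels in the difference $\mu_{b,t+1} - \mu_{s,t+1}$, and the elementary identity $z_{i,t} = p_t^{\gamma+1}$ for every winner; the case where the spread can actually increase — agent $b$ winning while agent $s$ loses — is precisely what is excluded here and is treated in the next proposition.
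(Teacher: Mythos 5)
Your proposal is correct and follows essentially the same route as the paper: the same three-way case split on the memberships of $b$ and $s$ in $\mathcal{W}_t$ versus $\mathcal{L}_t$, the same observation that the update is purely additive with a common gain term $-\epsilon g_t$ and a winner-only payment equal to the price $p_t^{\gamma+1}$, and the same final step $\mu_{b,t}-\mu_{s,t}\le \overline{\mu}_t-\underline{\mu_t}$. The only cosmetic difference is that in the case $b\in\mathcal{L}_t$, $s\in\mathcal{W}_t$ you bound the decrease of the spread simply by $\epsilon p\ge 0$, whereas the paper inserts an explicit positive lower bound on the price; both suffice for the claimed non-strict inequality.
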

\begin{proof}
    By exhaustion, we need to consider the following two cases.
    \begin{enumerate}
        \item If  $b$ and $s$ are both in $\mathcal{L}_t$ or both in $\mathcal{W}_t$, we have:
            \begin{equation*}
                \begin{aligned}
                    \overline{\mu}_{t+1} - \underline{\mu_{t+1}} 
                    &\stackrel{(a)}{=} \mu_{b,t+1} - \mu_{s,t+1} ,\\
                    &\stackrel{(b)}{=} \mu_{b,t} - \mu_{s,t} ,\\
                    &\stackrel{(c)}{\leq} \overline{\mu}_{t} - \underline{\mu_{t}}.
                \end{aligned}
            \end{equation*}
            Equality $(a)$ holds by definition of $b$ and $s$. 
            Equality $(b)$ uses that variations in multiplier only depend on the affiliation of an agent to $\mathcal{W}_t$ or to $\mathcal{L}_t$, i.e.,  all $w \in \mathcal{W}_t$ increase their multiplier by the same value $\epsilon p_t$, while all $\ell \in \mathcal{L}_t$ decrease their multiplier by a common $\epsilon g_t$.
            Inequality $(c)$ finally follows from the definition of $\overline{\mu}_t$ and $\underline{\mu_{t}}.$
            
        \item If $b \in \mathcal{L}_t$ and $s \in \mathcal{W}_t$, we have:
            \begin{equation*}
                \begin{aligned}
                    \overline{\mu}_{t+1} - \underline{\mu_{t+1}} 
                    &\stackrel{(a)}{=} \mu_{b,t+1} - \mu_{s,t+1} ,\\
                    &\stackrel{(b)}{\leq} \mu_{b,t} - \mu_{s,t} - \epsilon \left( 1 + \frac{M \gamma} {N} \right) \frac{\Delta \underline{v}}{\overline{\mu}} ,\\
                    &\stackrel{(c)}{<} \overline{\mu}_{t} - \underline{\mu_{t}}.
                \end{aligned}
            \end{equation*}
            Equality $(a)$ holds by definition of $b$ and $s$, equality $(b)$ uses the minimum variation step of multipliers for the price $p_t = \Delta \underline{v} / \overline{\mu}.$
            Inequality $(c)$ finally follows from the definition of  $\overline{\mu}_t$ and $\underline{\mu_{t}}.$
    \end{enumerate}
\end{proof}
We now turn to analyze the remaining case where agent $b$ wins the auction while agent $s$ loses it.

\begin{proposition}
    Suppose that $b \in \mathcal{W}_t$ and $s \in \mathcal{L}_t$. Then $\overline{\mu}_{t} - \underline{\mu_{t}} \leq \mu_m  \left(  \frac{2} {\underline{v}} \left(1 - \frac{\gamma}{N} \right)^{-1} - \frac{ \underline{v} } {2 }\right) .$
\end{proposition}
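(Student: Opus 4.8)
The plan is to bound $\overline{\mu}_t$ from above and $\underline{\mu_t}$ trivially from below, using the constant-sum property $\sum_{i}\mu_{i,t}=N\mu_m$ together with the order structure of the auction at time $t$. Throughout I work in the regime $t\le\underline{\mathscr{T}}$ relevant to this part of the proof, in which every bid equals $b_{i,t}=\Delta v_{i,t}/\mu_{i,t}$ with all $\mu_{i,t}\in[\underline{\mu},\overline{\mu}]$ and $v_{i,t}\in[\underline{v},1]$; in particular $p_t\le\Delta/\underline{\mu}$ and $\underline{\mu_t}\ge\underline{\mu}>0$.

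The first step is an order inequality. Since in a tie-free auction every winner outbids every loser, and since by hypothesis $b\in\mathcal{W}_t$, every loser $\ell\in\mathcal{L}_t$ satisfies $\Delta v_{\ell,t}/\mu_{\ell,t}<\Delta v_{b,t}/\mu_{b,t}\le\Delta/\mu_{b,t}$, hence $\mu_{\ell,t}>\underline{v}\,\mu_{b,t}$. The same argument applied to \emph{any} winner $w\in\mathcal{W}_t$ gives $\mu_{\ell,t}>\underline{v}\,\mu_{w,t}$ for all $\ell\in\mathcal{L}_t$, and summing the constant-sum identity (agent $w$ contributes $\mu_{w,t}$, the $N-\gamma$ losers each contribute more than $\underline{v}\mu_{w,t}$, and the other winners contribute nonnegatively) yields $\mu_{w,t}\le N\mu_m/(1+(N-\gamma)\underline{v})\le\mu_m/\big((1-\gamma/N)\underline{v}\big)$.

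Next I split on the agent $B^\star$ attaining $\overline{\mu}_t=\max_i\mu_{i,t}$. If $B^\star\in\mathcal{W}_t$, the bound above with $w=B^\star$ gives $\overline{\mu}_t\le\mu_m/\big((1-\gamma/N)\underline{v}\big)$ directly. If instead $B^\star\in\mathcal{L}_t$, then $\mu_{B^\star,t+1}=\mu_{B^\star,t}-\epsilon\tfrac{\gamma}{N}p_t$, while $b\in\mathcal{W}_t$ gives $\mu_{b,t+1}=\mu_{b,t}+\epsilon(1-\tfrac{\gamma}{N})p_t$; since $b$ attains the maximum at $t+1$ we have $\mu_{b,t+1}\ge\mu_{B^\star,t+1}$, which rearranges to $\overline{\mu}_t=\mu_{B^\star,t}\le\mu_{b,t}+\epsilon p_t$, and bounding $\mu_{b,t}$ by the winner estimate and $p_t\le\Delta/\underline{\mu}$ gives $\overline{\mu}_t\le\mu_m/\big((1-\gamma/N)\underline{v}\big)+\epsilon\Delta/\underline{\mu}$. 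In both cases, using $\underline{\mu_t}\ge 0$, we get $\overline{\mu}_t-\underline{\mu_t}\le\mu_m/\big((1-\gamma/N)\underline{v}\big)+\epsilon\Delta/\underline{\mu}$, so it suffices to check $\epsilon\Delta/\underline{\mu}\le\mu_m\big(\tfrac{1}{(1-\gamma/N)\underline{v}}-\tfrac{\underline{v}}{2}\big)$. Assumption~\ref{ass:G-sim-learn-hitting-time-3} gives $\epsilon\Delta/\underline{\mu}<\mu_m\cdot\tfrac{\underline{v}}{2}\big(1+(\underline{v}+1)\tfrac{\gamma}{N}\big)^{-1}$, and a short algebraic check—writing $a:=\underline{v}^2(1-\gamma/N)\in(0,1)$, the desired $\tfrac{\underline{v}}{2}\big(1+(\underline{v}+1)\tfrac{\gamma}{N}\big)^{-1}\le\tfrac{1}{(1-\gamma/N)\underline{v}}-\tfrac{\underline{v}}{2}$ reduces to $a\le(2-a)\big(1+(\underline{v}+1)\tfrac{\gamma}{N}\big)$, which holds since the right side is at least $2-a\ge a$—closes the estimate to exactly $\mu_m\big(\tfrac{2}{\underline{v}(1-\gamma/N)}-\tfrac{\underline{v}}{2}\big)$.

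The step I expect to be the main obstacle is the sub-case where the time-$t$ maximizer is an auction loser: the constant-sum argument then says nothing about its own multiplier, so one must route through the winner $b$ of the ``bad'' configuration and absorb the single-update discrepancy $\epsilon p_t$—this is precisely what produces the factor $2$ in the stated bound and dictates the form of the $\epsilon$-bound in Assumption~\ref{ass:G-sim-learn-hitting-time-3}. The remaining bookkeeping (combining this proposition with the no-growth statement of the preceding proposition, together with the base case $t=1$ fixed by the initialization, to run the induction and conclude $\underline{\mathscr{T}}=T$) is routine.
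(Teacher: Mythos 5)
Your proof is correct and rests on the same two ingredients as the paper's: the ordering of winners' and losers' bids, which gives $\mu_{\ell,t} \geq \underline{v}\,\mu_{w,t}$ for every loser $\ell$ and winner $w$, and the conservation of $\sum_i \mu_{i,t} = N\mu_m$, which converts that ordering into the upper bound $\mu_{w,t} \leq N\mu_m/(1+(N-\gamma)\underline{v}) \leq \mu_m/\big(\underline{v}(1-\gamma/N)\big)$ on any winner. Where the routes differ is in how the slack is allocated. The paper first proves the nontrivial lower bound $\min\{\underline{\mu_t},\underline{\mu_{t+1}}\} \geq \tfrac{\underline{v}}{2}\mu_m$ (via $\mu_{b,t}\geq \mu_m - \epsilon\gamma\Delta/(N\underline{\mu})$, since $b$ maximizes at $t+1$ and the maximum dominates the mean, then propagating through the losers) --- this is where the $-\tfrac{\underline{v}}{2}\mu_m$ in the stated constant is actually earned --- and pairs it with the looser upper bound $\tfrac{2}{\underline{v}}\mu_m(1-\gamma/N)^{-1}$. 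You instead keep the tighter upper bound $\tfrac{1}{\underline{v}}\mu_m(1-\gamma/N)^{-1}+\epsilon\Delta/\underline{\mu}$ (your case split on whether the time-$t$ maximizer $B^\star$ wins or loses, routing the losing case through $b$ via $\mu_{B^\star,t}\leq\mu_{b,t}+\epsilon p_t$, is a cleaner rendering of the paper's terse claim that both $\overline{\mu}_t$ and $\overline{\mu}_{t+1}$ lie within $\epsilon(1+\gamma/N)\Delta/\underline{\mu}$ of $\mu_{b,t}$) and use a trivial lower bound on $\underline{\mu_t}$; your algebraic check that Assumption~\ref{ass:G-sim-learn-hitting-time-3} absorbs $\epsilon\Delta/\underline{\mu}$ into $\mu_m\big(\tfrac{1}{\underline{v}}(1-\gamma/N)^{-1}-\tfrac{\underline{v}}{2}\big)$ is correct, and both routes land on the same constant.

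One point to tighten: your standing premise that $\mu_{i,t}\in[\underline{\mu},\overline{\mu}]$ for all $i$ does real work in three places --- the non-negativity of the other winners in the conservation step, the inactivity of the projection $P_{[\underline{\mu},\overline{\mu}]}$ in every bid, and the final subtraction $\overline{\mu}_t-\underline{\mu_t}\leq\overline{\mu}_t$. Since this proposition is itself a step in proving $\underline{\mathscr{T}}=T$, conditioning on $t\leq\underline{\mathscr{T}}$ risks circularity; the paper avoids it by working with $\max\{\mu_{\ell,t},\underline{\mu}\}$ in the bid relations and deriving internally that $\underline{v}\,\mu_{b,t}>\underline{\mu}$, hence that the projection is inactive and all time-$t$ multipliers exceed $\tfrac{\underline{v}}{2}\mu_m$ (its final conclusion only places the multipliers in $[-\overline{\mu},\overline{\mu}]$, with the lower guard time supplied by Assumption~\ref{ass:G-sim-learn-hitting-time-4}). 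Your argument survives if you state explicitly that the premise is licensed by Assumption~\ref{ass:G-sim-learn-hitting-time-4} together with the induction hypothesis $\overline{\mu}_s\leq\overline{\mu}$ for $s\leq t$, but as written that justification is missing.
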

\begin{proof}
    We consider an hypothetical framework where $\overline{\mu} = \infty$, and iteratively bound $\overline{\mu}_{t+1}$, $\underline{\mu_{t+1}} $, $\overline{\mu}_{t}$ and $\underline{\mu_{t}} $ .
    
    We first note that $ \mu_{b,t}$ cannot be too far below $\mu_m$, hence it is greater than $\underline{\mu}$:
  \begin{equation} \label{equ:_____1}
      \begin{aligned}
          \mu_{b,t} 
          &\stackrel{(a)}{\geq} \mu_{b,t+1} - \epsilon \frac{ \gamma}{N} \frac{\Delta }{\underline{\mu}} ,\\
          &\stackrel{(b)}{=} \overline{\mu}_{t+1} - \epsilon \frac{ \gamma}{N} \frac{\Delta }{\underline{\mu}} ,\\
          &\stackrel{(c)}{\geq} \mu_m - \epsilon \frac{ \gamma}{N} \frac{\Delta }{\underline{\mu}}, \\
          &\stackrel{(d)}{\geq} \frac{ \underline{v} } {2 } \mu_m  , \\
          &\stackrel{(e)}{>} \underline{\mu}  .
      \end{aligned}
  \end{equation}
  Equality $(a)$ uses that the maximum multiplier increase is $\epsilon \frac{ \gamma}{N} \frac{\Delta \overline{v}}{\underline{\mu}}$; equality $(b)$ follows from the definition of $b$;  inequality $(c)$ uses that the maximum must be greater than the mean; inequality $(d)$ and $(e)$ respectively follow from Assumptions~\ref{ass:G-sim-learn-hitting-time-3} and~\ref{ass:G-sim-learn-hitting-time-2}.

    We then note that for all $\ell \in \mathcal{L}_t$, we must have: 
    \begin{equation}\label{equ:relation_bids_w&l}
        \frac{ \Delta \underline{v}}{\max \{\mu_{\ell,t}, \underline{\mu} \} } 
        \leq  b_{\ell,t} \leq b_{b,t} \leq
        \frac{ \Delta }{\max \{\mu_{b,t}, \underline{\mu} \} }  = \frac{ \Delta}{\mu_{b,t} } .
    \end{equation}
    
    Using the above, we next show that $\max \{\mu_{\ell,t}, \underline{\mu} \}  = \mu_{\ell,t}$ holds for all $\ell \in \mathcal{L}_t$.  
    
\begin{equation}\label{equ:_____2}
    \begin{aligned}
        \max \{\mu_{\ell,t}, \underline{\mu} \} 
        &\stackrel{(a)}{\geq} \underline{v} \mu_{b,t} ,\\
        &\stackrel{(b)}{\geq} \underline{v} \left( \mu_m - \epsilon \frac{ \gamma}{N} \frac{\Delta }{\underline{\mu}} \right) ,\\
        &\stackrel{(c)}{\geq} \frac{\underline{v}}{2} \mu_m  ,\\
        &\stackrel{(d)}{\geq} \underline{\mu} .
    \end{aligned}
\end{equation}
 Inequalities $(a)$ and $(b)$ use the inequalities in Equations~\eqref{equ:relation_bids_w&l} and~\eqref{equ:_____1} respectively; inequalities $(c)$ and $(d)$ respectively follow from Assumptions~\ref{ass:G-sim-learn-hitting-time-3} and~\ref{ass:G-sim-learn-hitting-time-2}.

As Equation~\eqref{equ:_____2} holds in particular for $s$, we propagate this lower-bound to $\underline{\mu_{t}}$ and $\underline{\mu_{t+1}}$:
 \begin{equation} \label{equ:_____2.5}
     \begin{aligned}
        \min\{ \underline{\mu_{t}}, \underline{\mu_{t+1}} \}
        &\stackrel{(a)}{\geq} \mu_{s,t} - \epsilon \left( 1 + \frac{ \gamma}{N} \right)\frac{\Delta }{\underline{\mu}} ,\\
        &\stackrel{(b)}{\geq} \underline{v} \mu_m - \epsilon \left( 1+ \left( 1 + \underline{v} \right) \frac{ \gamma}{N}  \right)  \frac{\Delta }{\underline{\mu}}  ,\\
        &\stackrel{(c)}{\geq}  \frac{\underline{v}}{2} \mu_m  ,\\
        &\stackrel{(d)}{>} \underline{\mu} .
     \end{aligned}
 \end{equation}
 Equality $(a)$ uses that both $ \underline{\mu_{t}}$ and $ \underline{\mu_{t+1}}$  are at most $\epsilon \left( 1 + \frac{ \gamma}{N} \right) \frac{\Delta }{\underline{\mu}}$ away from $\mu_{s,t}$ by definition of $s$; inequality $(b)$ makes use of the the lower-bound in Equation~\eqref{equ:_____2} $(b)$, inequalities $(c)$ and $(d)$ 
 respectively follow from Assumptions~\ref{ass:G-sim-learn-hitting-time-3} and~\ref{ass:G-sim-learn-hitting-time-2}.

We now proceed to upper-bound $ \overline{\mu}_t$ and $ \overline{\mu}_{t+1}$. By conservation of the sum of multipliers, we have:

\begin{equation}\label{equ:_____3}
    \begin{aligned}
        N \mu_m &= \mu_{b,t} +  \sum_{\ell \in \mathcal{L}_t} \mu_{\ell,t}  + \sum_{w \in \mathcal{W}_t \setminus \{ b\} } \mu_{w,t} ,\\
        &\geq \mu_{b,t} \left(1 + (N-\gamma) \underline{v} \right) + (\gamma - 1) \underline{\mu}.
    \end{aligned}
\end{equation}
The inequality follows from Equation~\eqref{equ:_____2} $(a)$,  as well as the definition of $\underline{\mu_t}$ and Equation~\eqref{equ:_____2.5}. 

It then follows that:
\begin{equation}
    \begin{aligned}
        \max\{ \overline{\mu}_t, \overline{\mu}_{t+1} \}
        &\stackrel{(a)}{\leq}  \mu_{b,t} + \epsilon \left( 1 + \frac{ \gamma}{N} \right)\frac{\Delta}{\underline{\mu}}  ,\\ 
        &\stackrel{(b)}{\leq} \frac{ N \mu_m - (\gamma - 1) \underline{\mu} } {1 + (N-\gamma) \underline{v}} + \epsilon \left( 1 + \frac{ \gamma}{N} \right)\frac{\Delta }{\underline{\mu}} , \\ 
        & < \frac{1}{\underline{v}}   \mu_m \left(1 - \frac{\gamma}{N} \right)^{-1} + \epsilon \left( 1 + \frac{ \gamma}{N} \right)\frac{\Delta }{\underline{\mu}} ,\\
         &\stackrel{(c)}{\leq} \frac{2}{\underline{v}}   \mu_m \left(1 - \frac{\gamma}{N} \right)^{-1}.
    \end{aligned}
\end{equation}
Equality $(a)$ uses that both $ \overline{\mu}_{t}$ and $ \overline{\mu}_{t+1}$ are at most $\epsilon \left( 1 + \frac{ \gamma}{N} \right) \frac{\Delta }{\underline{\mu}}$ away from $\mu_{b,t}$ by definition of $b$; inequality $(b)$ follows from Equation~\eqref{equ:_____3} and inequality $(c)$  from Assumption~\ref{ass:G-sim-learn-hitting-time-3}.

This finally proves that $ \overline{\mu}_{t+1} - \underline{\mu_{t+1}} \leq \mu_m  \left(  \frac{2} {\underline{v}} \left(1 - \frac{\gamma}{N} \right)^{-1} - \frac{ \underline{v} } {2 }\right) .$
\end{proof}

Together, the two propositions show the induction and $\overline{\mu}_{t} - \underline{\mu_{t}} \leq \mu_m  \left(  \frac{2} {\underline{v}} \left(1 - \frac{\gamma}{N} \right)^{-1} - \frac{ \underline{v} } {2 }\right) $ for all $1 \leq t\leq T.$ 
Besides, since $\underline{\mu_t} \leq \mu_m \leq \overline{\mu}_t$, we finally get
$$\mu_t \in \left[\mu_m  \left(  1 +  \frac{ \underline{v} } {2 } - \frac{2} {\underline{v}} \left(1 - \frac{\gamma}{N} \right)^{-1}\right) , \mu_m  \left(  1 +  \frac{2} {\underline{v}} \left(1 - \frac{\gamma}{N} \right)^{-1} - \frac{ \underline{v} } {2 } \right) \right].$$ 
Noting that the latter interval is part of  $[-\overline{\mu}, \overline{\mu} ] $ under Assumption~\ref{ass:G-sim-learn-hitting-time-2} concludes the proof.

\subsection{Proof of Theorem~\ref{thm:G_eps_NE}}\label{sec:G_eps_NE_proof}

In this section, we consider that all agents follow strategy $K$ except for agent $i$ in $\mathcal{N}$ following strategy $\beta \in \mathcal{B}^{CI}$ (we refer the reader to the discussion before Lemma~\ref{lem:G_bound_MSE_eps_NE} for the definition of $\mathcal{B}^{CI}$). To simplify notations, we drop the superscript $K_{-i}$ and write $c_i^\beta$, $b_i^\beta$, $z_i^\beta$, and $g_i^\beta$ for the respective costs, bids, expenditures, and gains of agent $i$. This is not to be confused with $c_i^{K}$, $b_i^{K}$, $z_i^{K}$ and $g_i^{K}$, which are the hypothetical variables of agent $i$, had it followed a $K$ strategy (but all other agents still placing bids according to $K$ strategies). Importantly, we use the superscripts $\beta$ or $K$ for the prices $(p_m)_{m=1}^M$ of the different auctions instead of the superscript $\gamma+1.$ 
Since agent $i$ does not follow $K$, $\mu_{i,t}$ is a priory not defined. For convenience, we take the convention $\mu_{i,t} = \mu_i^{\star 0 }$ for all $t \in \mathbb{N}.$

\subsubsection{Lower Bound on the Total Expected Cost under Strategy $\beta.$}
We lower-bound the total expected cost suffered by agent $i$ for following strategy $\beta$ by considering it always gets a null cost after time $\underline{\mathscr{T}}$ and by using that strategy $\beta$ respects the budget constraint on average. 
\begin{equation}\label{equ:G_eps_NE_1}
   \begin{aligned}
     \mathcal{C}_i^{\beta, K_{-i}} 
     &\geq \mathbb{E}_{\bm{v}, \bm{m}} \left[ \sum_{t=1}^{ \underline{\mathscr{T}}} c_{i,t}^{\beta}  \right],\\
     &\stackrel{(a)}{\geq} \mathbb{E}_{\bm{v}, \bm{m}} \left[ \sum_{t=1}^{ \underline{\mathscr{T}}} c_{i,t}^{\beta}  \right] 
     + \mu_i^{\star 0} \mathbb{E}_{\bm{v}, \bm{m}} \left[ \sum_{t=1}^{ T} z_{i,t}^{\beta} - \rho_i - g_{i,t}^{\beta} \right], \\
     &\stackrel{(b)}{\geq} \mathbb{E}_{\bm{v}, \bm{m}} \left[ \sum_{t=1}^{ \underline{\mathscr{T}}}   c_{i,t}^{\beta} 
     +\mu_i^{\star 0} \left( z_{i,t}^{\beta} - g_{i,t}^{\beta} - \rho_i \right) \right]\\
     &- \left(  \frac{ M \gamma\Delta}{N \underline{\mu}} + \rho_i\right) \mathbb{E}_{\bm{v}, \bm{m}} \left[ T - \underline{\mathscr{T}}\right].
   \end{aligned}
   \end{equation}
   Inequality $(a)$ uses that strategy $\beta$ respects the budget constraint on average, and inequality $(b)$ holds since $g_{i,t}^{\beta} \leq \frac{ M \gamma\Delta}{N \underline{\mu}}  .$

 We now bound the Lagrangian cost of agent $i$ for the $t\textsuperscript{th}$ period.
   \begin{equation}\label{equ:G_eps_NE_2}
   \begin{aligned}
        &c_{i,t}^{\beta} + \mu_i^{\star0} \left( z_{i,t}^{\beta} - g_{i,t}^\beta - \rho_i \right) \\
        &= v_{i,t} - \mathds{1} \left\{ b_{i,t}^\beta \geq d_{i,t}^\gamma \right\} \left( \Delta v_{i,t} - \mu_i^{\star0} d_{i,t}^\gamma \right) - \mu_i^{\star0} \left( \rho_i + g_{i,t}^\beta \right) ,\\
        &\geq v_{i,t} - \left( \Delta v_{i,t} - \mu_i^{\star0} d_{i,t}^\gamma \right)^+ - \mu_i^{\star0} \left( \rho_i + g_{i,t}^\beta \right).
   \end{aligned}
   \end{equation}
    Taking expectation 
    , recognizing the dual objective $\Psi_i^0$ and using its Lipschitz continuity gives
\begin{equation}\label{equ:G_eps_NE_3}
   \begin{aligned}
        &\mathbb{E}_{\bm{v}, \bm{m}} \left[c_{i,t}^{\beta} + \mu_i^{\star0} \left( z_{i,t}^{\beta} -  g_{i,t}^\beta - \rho_i \right) \Big\vert t\leq \underline{\mathscr{T}}\right] \\
        & \stackrel{(a)}{\geq} \mathbb{E}_{\bm{v}, \bm{m}} \left[ \Psi^0_i ( \bm{\mu_t}) \vert t\leq \underline{\mathscr{T}} \right] 
        + \mu_i^{\star 0}  \left(\mathbb{E}_{\bm{v}, \bm{m}} \left[ g_{i,t}^K - g_{i,t}^\beta \Big\vert t\leq \underline{\mathscr{T}}\right]  - \rho_i \right),\\
        & \stackrel{(b)}{\geq} \mathbb{E}_{\bm{v}, \bm{m}} \left[ \Psi_i^0 \big( \bm{\mu^{\star0}} \big) -  \frac{\Delta}{\underline{\mu}} \left( \Vert \bm{a_i} \Vert_2 +   \frac{M\gamma  \overline{\mu} }{N \underline{\mu}} \right)    \left\Vert \bm{\mu_t} - \bm{\mu^{\star0}} \right\Vert_2 \Big\vert t\leq \underline{\mathscr{T}}\right] \\
        &+   \mu_i^{\star 0} \left( \frac{\gamma}{N} \mathbb{E}_{\bm{v}, \bm{m}} \left[ p_{m_i,t}^{K} - p_{m_i,t}^{\beta} \Big\vert t\leq \underline{\mathscr{T}}\right] -\rho_i \right) , \\
        &\stackrel{(c)}{\geq} \Psi_i^0 \big( \bm{\mu^{\star0}} \big) -  \frac{\Delta}{\underline{\mu}} \left( \Vert \bm{a_i} \Vert_2 +  \frac{M\gamma  \overline{\mu} }{N \underline{\mu}} \right)    s_t^{1/2} -  \overline{\mu} \left( \frac{\gamma\Delta}{N \underline{\mu}} +\rho_i \right).
   \end{aligned}
   \end{equation}
    Inequality $(a)$ follows from Equation~\eqref{equ:G_eps_NE_2} by adding and subtracting $g_{i,t}^{K}$ and using that $\mu_{i,t} = \mu_i^{\star0}$ by convention.
    Inequality $(b)$ uses the Lipschitz continuity of $\Psi_i^0$ in Equation~\eqref{equ:G_lipschitz_psi} and the expression of $g_t = \frac{\gamma}{N} \sum_{m=1}^M p_{m,t}$; 
    inequality $(c)$ finally uses Jensen's inequality, as well as the bounds on the price $p_m$ of any auction $0 \leq p_m \leq \Delta / \underline{\mu}.$

    Together, Equations~\eqref{equ:G_eps_NE_1} and~\eqref{equ:G_eps_NE_3} finally give 
    \begin{equation}\label{equ:G_eps_NE_8}
        \begin{aligned}
         \mathcal{C}_i^{\beta, K_{-i}} 
        &\geq T \Psi_i^0 \big( \bm{\mu^{\star0}}\big) - \frac{\Delta}{\underline{\mu}} \left( \Vert \bm{a_i} \Vert_2 +  \frac{M\gamma  \overline{\mu} }{N \underline{\mu}} \right)  \mathbb{E}_{\bm{v}, \bm{m}} \left[\sum_{t=1}^{ \underline{\mathscr{T}}} s_t^{1/2} \right] \\
        &- \overline{\mu} T \left( \frac{\gamma\Delta  }{N \underline{\mu}} + \overline{\rho} \right)
        - \left(  \frac{ M \gamma\Delta}{N \underline{\mu}} + \overline{\rho} \right) \mathbb{E}_{\bm{v}, \bm{m}} \left[ T - \underline{\mathscr{T}}\right] .
       \end{aligned}
    \end{equation}

\subsubsection{Extension of the Proof of Theorem~\ref{thm:G_sim_lear_cv} to a Parallel Auction Setting.}

Since Equation~\eqref{equ:G_sim_lear_2} still holds for the cost-per-period of strategy $K$ in the parallel auctions setting, we again use the Lipschitz continuity of the dual and expenditure functions to bound its different terms.
Using the more careful result from Equation~\eqref{equ:G_lipschitz_psi} with $\mu_{i,t} =\mu_i^{\star0}$  gives however
\begin{equation*}
        \Psi_i^0(\bm{\mu_t}) 
        \leq \Psi_i^0 \big(\bm{\mu^{\star0}}\big) + \frac{\Delta}{\underline{\mu}} \left( \Vert \bm{a_i} \Vert_2 +  \frac{M\gamma  \overline{\mu} }{N \underline{\mu}} \right)    \left\Vert \bm{\mu_t} - \bm{\mu^{\star0}} \right\Vert_2.
 \end{equation*}
 Similarly, using the Lipschitz continuity of $L_i$ from Equation~\eqref{equ:G_lipschitz_L}, we get
 \begin{equation*}
    \begin{aligned}
        -\mu_{i,t} L_i (\bm{\mu_t})  
        &\leq \overline{\mu} \left(\frac{ 8\Delta \overline{\nu}}{\underline{\mu}^2} \Vert \bm{a_i} \Vert_2  + \frac{M \gamma \Delta}{N \underline{\mu}^2}\right) \left\Vert \bm{\mu_t} - \bm{\mu^{\star0}} \right\Vert_2 .
    \end{aligned}
\end{equation*}

These two bounds, together with Equation~\eqref{equ:G_sim_lear_2} and Jensen's inequality, finally lead to the following expression, i.e.,
\begin{equation*}
      \begin{aligned}
       \mathbb{E}_{\bm{v}, \bm{m}} \left[c_{i,t}^K \big\vert t\leq \underline{\mathscr{T}} \right] 
       &\leq \Psi_i^0 \big(\bm{\mu^{\star0}}\big) +  \left( \frac{\Delta}{\underline{\mu}}  \left(1 + \frac{8  \overline{\mu} \overline{\nu}}{\underline{\mu}} \right)  \Vert \bm{a_i} \Vert_2 + \frac{2M \gamma \Delta \overline{\mu} }{N \underline{\mu}^2}\right)
       s_t^{1/2}.
      \end{aligned}
  \end{equation*}
  
Similarly as in Section~\ref{sec:G_sim_lear_cv_proof}, we conclude using Equation~\eqref{equ:G_sim_lear_1} that
\begin{equation} \label{equ:G_eps_NE_7}
    \begin{aligned}
         \mathcal{C}_i^K 
        &\leq  T \Psi_i^0 \big(\bm{\mu^{\star0}}\big)  +\left( \frac{\Delta}{\underline{\mu}}  \left(1 + \frac{8  \overline{\mu} \overline{\nu}}{\underline{\mu}} \right)  \Vert \bm{a_i} \Vert_2 +  \frac{2 M \gamma \Delta \overline{\mu} }{N \underline{\mu}^2}\right) \mathbb{E}_{\bm{v}, \bm{m}} \left[\sum_{t=1}^{\underline{\mathscr{T}}} s_t^{1/2} \right] \\
        &+  \mathbb{E}_{\bm{v}, \bm{m}} \left[ T - \underline{\mathscr{T}}\right] .\\
    \end{aligned}
\end{equation}

Moreover, we 
use Lemma~\ref{lem:G_bound_MSE_eps_NE} with $\beta = G \in \mathcal{B}^{CI}$ to derive an equivalent of Equation~\eqref{equ:G_sim_lear_6} by following the same argument, i.e.,
\begin{equation}\label{equ:G_eps_NE_9}
    \begin{aligned}
         \mathbb{E}_{\bm{v}, \bm{m}} \left[\sum_{t=1}^{ \underline{\mathscr{T}}} s_{t}^{1/2} \right]
         &\leq \frac{2 \overline{\mu} \sqrt{N}}{\lambda \epsilon} 
          + \frac{ \Delta \sqrt{N \epsilon} }{ \sqrt{\lambda} \underline{\mu}} T
        + \frac{\Delta } {\lambda \underline{\mu}} \left( \Vert \bm{a_i}\Vert_2 +  \frac{\gamma}{\sqrt{N}}\right) T.
    \end{aligned}
\end{equation}

\paragraph{Conclusion.}
  
   Together, Equations~\eqref{equ:G_eps_NE_8}, \eqref{equ:G_eps_NE_7}  and~\eqref{equ:G_eps_NE_9} finally lead to
   \begin{equation*}
       \begin{aligned}
         \mathcal{C}_i^{K} - \mathcal{C}_i^{\beta, K_{-i}} 
        &\leq \left( \frac{2\Delta}{\underline{\mu}}  \left(1 + \frac{4  \overline{\mu} \overline{\nu}}{\underline{\mu}} \right)  \Vert \bm{a_i} \Vert_2 
        + \frac{M\gamma \overline{\mu}}{N\underline{\mu}} \left(1 + \frac{\Delta}{\underline{\mu}} \right)
        \right) \Bigg( \frac{2 \overline{\mu} \sqrt{N}}{\lambda \epsilon} \\
          &+ \frac{ \Delta \sqrt{N \epsilon} }{ \sqrt{\lambda} \underline{\mu}} T
        + \frac{\Delta } {\lambda \underline{\mu}} \left( \Vert \bm{a_i}\Vert_2 +  \frac{\gamma}{\sqrt{N}}\right) T\Bigg)
        + \overline{\mu} T \left( \frac{\gamma\Delta  }{N \underline{\mu}} + \overline{\rho} \right)\\
        &+ \left( 1 + \frac{ M \gamma\Delta}{N \underline{\mu}} + \overline{\rho} \right) \mathbb{E}_{\bm{v}, \bm{m}} \left[ T - \underline{\mathscr{T}}\right] . 
       \end{aligned}
   \end{equation*}
Rewriting $\overline{\rho} = \overline{k_1}/T$,  this proves the existence of a constant $C$  in $\mathbb{R}_+$ such that
\begin{equation*}
\begin{aligned}
     \frac{1}{T} \left(\mathcal{C}_i^{K} - \mathcal{C}_i^{\beta, K_{-i}} \right)
    &\leq C \Bigg( \left( \Vert \bm{a_i}\Vert_2 + \frac{M\gamma}{N} \right) \bigg( \sqrt{N\epsilon} \left(1 + \frac{1}{\epsilon^{3/2} T}\right) + \Vert \bm{a_i}\Vert_2 \\
    &+ \frac{\gamma}{\sqrt{N}}  \bigg) 
    + \left( \frac{\gamma}{N} + \frac{\overline{k_1}}{T} \right)
    +   \left( \frac{\overline{k_1}}{T} + \frac{M\gamma}{N} \right) 
    \frac{\mathbb{E}_{\bm{v}, \bm{m}} \left[T - \underline{\mathscr{T}}   \right]}{T} \Bigg).
\end{aligned}
\end{equation*}
 \section{Proofs of the Supplementary Results }\label{sec:suppl_res}

In this section, we gather the proofs of secondary results discussed in the main text of the article, mainly discussions on the relaxation of certain assumptions.

\subsection{Sufficient Conditions for Assumption~\ref{ass:G_stat_comp_L_mono}} \label{sec:A_stat_comp_proof_imply_differentiability+concavity}

In this section, we provide mild differentiability and continuity conditions on the valuation and competing bid distributions that guarantee that the technical Assumption
~\ref{ass:G_stat_comp_L_mono} is satisfied.
This verifies that the conditions of Theorem
~\ref{thm:G_stat_comp} are not restrictive.

\begin{assumption}\label{ass:A_stat_comp_imply_differentiability+concavity}
     The following conditions hold:
    \begin{enumerate}[label=\ref{ass:A_stat_comp_imply_differentiability+concavity}.\arabic*]
        \item \label{ass:A_stat_comp_imply_diff_val_density}The valuation density $\nu_i$ is differentiable with bounded derivative $\lvert \nu_i'(v) \lvert \leq \overline{\nu}'$ for all $v \in [0,1]$;

        \item \label{ass:A_stat_comp_imply_diff_comp_bids_density}The distribution of competing bids $\mathcal{D}_i^\gamma$ is absolutely continuous with bounded density $h_i: [0, \Delta/\underline{\mu_i}] \mapsto [\underline{h}, \overline{h}]  \subset \mathbb{R}_{>0}$;

        \item \label{ass:A_stat_comp_imply_diff_varepsilon}The residual gain satisfies $\hat{\varepsilon} \leq \dfrac{\underline{\nu} \:  \underline{h} \: \Delta^4}{3  \: \overline{\nu} \: \overline{\mu}^4}$.
    \end{enumerate}
 \end{assumption}

In the following, we show that Assumption
~\ref{ass:G_stat_comp_L_mono} is 
implied by the absolute continuity of the distribution of valuations $\bm{\mathcal{V}}$
and Assumption~\ref{ass:A_stat_comp_imply_differentiability+concavity} in five different steps. We first bound the first and second derivatives of the expected gain function $G: \mu \mapsto \mathbb{E}_{v, d} \left[\gamma p^{\gamma+1} / N \right]$; then give upper and lower bounds on the derivative of the expenditure function $Z: \mu \mapsto \mathbb{E}_{v, d} \left[d^\gamma \mathds{1}\{\Delta v  > \mu d^\gamma \}  \right]$ and bounds its second order derivative;  we conclude by showing that the dual function $\Psi^0: \mu \mapsto \mathbb{E}_{v, d} \left[v \tau -\mu g - (\Delta v - \mu d^\gamma )^+\right]$ is differentiable. 

\subsubsection{Bounds on the Derivative of $G$}

The function $\mu \mapsto \frac{\gamma}{N} p $ is differentiable with derivative $\mu \mapsto -\frac{\gamma \Delta v}{N \mu^2} \mathds{1} \big\{  d^{\gamma} > \frac{\Delta v}{\mu} > d^{\gamma+1} \big\}$, except in the sets $\left\{ d^{\gamma} = \Delta v / \mu \geq d^{\gamma+1}\right\}$ and $\left\{ d^{\gamma} \geq \Delta v / \mu = d^{\gamma+1} \right\}$ of measure zero since valuations and competing prices are absolutely continuous
under Assumption~\ref{ass:A_stat_comp_imply_diff_comp_bids_density}. Since the derivative is bounded by $ \frac{\gamma \Delta }{N \underline{\mu}^2}$ which is integrable, 
Leibniz's integral rule implies that the gain function $G$ is differentiable, i.e.,
\begin{equation}\label{equ:G_stat_compt_diff_conc_0}
    G^{'} (\mu) =  - \frac{\gamma  }{N } \mathbb{E}_{\bm{v}, \bm{d}} \left[ \frac{\Delta v}{\mu^2} \mathds{1} \left\{  d^{\gamma} > \Delta v / \mu > d^{\gamma+1} \right\} \right].
\end{equation}
In particular, $G^{'}$ is negative on $[\underline{\mu}, \overline{\mu}].$

We now carefully bound $G^{'}.$
Let $H: d^\gamma \times d^{\gamma+1} \mapsto H\left( d^\gamma , d^{\gamma+1} \right) $ denote the cumulative probability function of competing bids.
\begin{equation}\label{equ:G_stat_compt_diff_conc_1}
\begin{aligned}
     \left\vert G^{'} (\mu) \right \vert 
    &=   \frac{\gamma }{N}  \int_{x=0}^{\Delta/ \underline{\mu}} \int_{y=0}^{\Delta/ \underline{\mu}} \int_{z= \mu y / \Delta}^{\mu x / \Delta} \nu(z) dH(x,y) dz ,\\
    & \stackrel{(a)}{\leq} \frac{\gamma  }{N} \frac{\overline{\mu} \overline{\nu}}{\Delta}  \int_{x=0}^{\Delta/ \underline{\mu}} \int_{y=0}^{\Delta/ \underline{\mu}} (x - y) dH(x,y) ,\\
    &= \frac{\gamma  }{N} \frac{\overline{\mu} \overline{\nu}}{\Delta}  \mathbb{E}_{\bm{v}, \bm{d}}\left[ d^\gamma - d^{\gamma+1} \right] ,\\
\end{aligned}
\end{equation}
where inequality $(a)$ uses the absolute continuity of the distribution of valuations $\bm{\mathcal{V}}.$ 

\subsubsection{Bound on the Second Order Derivative of $G$}

Furthermore, the function $\mu \mapsto - \frac{\gamma \Delta v}{N \mu^2} \mathds{1} \left\{  d^{\gamma} > \Delta v / \mu > d^{\gamma+1} \right\} $ is differentiable outside of the same measure zero sets, and its derivative, bounded by $ \frac{3 \gamma \Delta }{N \underline{\mu}^3}$, is still integrable. Leibniz's integral rule hence implies that $G^{'}$ is also differentiable, i.e.,
\begin{equation*}
\begin{aligned}
    \left \vert G^{''} (\mu) \right\vert
    &=   \left \vert \frac{3 \gamma \Delta }{N \mu^3 } \mathbb{E}_{\bm{v}, \bm{d}} \left[ v \mathds{1} \left\{  d^{\gamma} > \Delta v / \mu > d^{\gamma+1} \right\} \right] \right\vert
    \leq  \frac{3 \Delta }{ \underline{\mu}^3 },
\end{aligned}
\end{equation*}
where we used that the fraction of winning agents $\frac{\gamma}{N} $ is smaller than one.

\subsubsection{Bounds on the Derivative of $Z$ and Strong Monotonicity of $L$}

Let $\mathscr{V}$ denote the cumulative distribution function of valuations associated with the density $\nu$. Since $\nu$ is null outside of $[0,1]$, we can write the expenditure function as follows:
\begin{equation*}
    Z(\mu) = \int_0^{\Delta / \mu } x  \left(1 - \mathscr{V} \left( \frac{\mu x}{\Delta} \right)\right) h(x) dx.
\end{equation*}
Since $\mu \mapsto x (1- \mathscr{V}(\mu x / \Delta))$ is differentiable almost everywhere with derivative bounded by $\frac{\Delta^2}{\underline{\mu}^2} \overline{\nu}$, Leibniz's integral rule 
gives:
\begin{equation}\label{equ:A_stat_compt_diff_conc_2}
    \begin{aligned}
         Z^{'}(\mu) 
         &= - \int_0^{\Delta / \mu } \frac{x^2}{\Delta} \nu\left( \frac{\mu x}{\Delta} \right)  h(x) dx.
    \end{aligned}
\end{equation}
Using the bounded densities in Assumption~\ref{ass:A_stat_comp_imply_differentiability+concavity}, we finally  bounds  the derivative, i.e.,
\begin{equation} \label{equ:A_stat_compt_diff_conc_3}
     -\frac{\Delta} {\underline{\mu}^2}\overline{\nu} \leq Z^{'}(\mu) \leq - \underline{\nu} \underline{h} \int_0^{\Delta / \overline{\mu} } \frac{x^2}{\Delta} dx = - \frac{\underline{\nu} \underline{h} \Delta^3}{3 \overline{\mu}^3}.
\end{equation}

Combining Equation~\eqref{equ:G_stat_compt_diff_conc_1} with  Equation~\eqref{equ:A_stat_compt_diff_conc_3}  and the fact that $G^{'} $ is negative on $[\underline{\mu}, \overline{\mu}]$ yields
\begin{equation*} 
    L^{'}(\mu) = Z^{'}(\mu) - G^{'}(\mu) 
    \leq   \frac{\overline{\mu} \overline{\nu}}{\Delta}  \frac{\gamma  }{N} \mathbb{E}_{\bm{v}, \bm{d}}\left[ d^\gamma - d^{\gamma+1} \right]  - \frac{\underline{\nu} \underline{h} \Delta^3}{3 \overline{\mu}^3} \stackrel{(a)}{<} 0,
\end{equation*}
where inequality $(a)$ follows from Assumption~\ref{ass:A_stat_comp_imply_diff_varepsilon}. 
This proves the strong monotonicity of the loss function $L.$

\subsubsection{Bound on the Second Order Derivative of $Z$}

Since $\mu \mapsto \frac{x^2}{\Delta} \nu\left( \frac{\mu x}{\Delta} \right)  h(x)$ is differentiable almost everywhere with derivative bounded by $\frac{\Delta}{\underline{\mu}^3} \overline{\nu}^{'}$, using Leibniz integral rule on Equation~\eqref{equ:A_stat_compt_diff_conc_2} gives the following, i.e.,
\begin{equation*} 
         Z^{''}(\mu) 
         = \frac{\Delta^2}{\mu^4} \nu(1) h\left(\frac{\Delta}{\mu}\right)  - \int_0^{\Delta / \mu } \frac{x^3}{\Delta^2} \nu^{'}\left( \frac{\mu x}{\Delta} \right)  h(x) dx.
\end{equation*}
Using the absolute continuity of the distribution of valuations $\bm{\mathcal{V}}$ and
Assumption~\ref{ass:A_stat_comp_imply_differentiability+concavity}, we finally get the desired bound.
\begin{equation*}
    \left\lvert  Z^{''}(\mu) \right\lvert \leq \frac{\Delta^2}{\underline{\mu}^4} \overline{\nu} \overline{h} + \frac{\Delta}{\underline{\mu}^3} \overline{\nu}^{'}
\end{equation*}

\subsubsection{Differentiability of $\Psi^0$}

On one hand, the function $\mu \mapsto (\Delta v - \mu d^\gamma )^+$ is differentiable with derivative $ \mu \mapsto d^\gamma \mathds{1}\{\Delta v  > \mu d^\gamma \}$, except in the set $\{ (v,d^\gamma) : \Delta v = \mu d^\gamma \} $ of measure zero since valuations and competing prices are both absolutely continuous with respective support in $[0, \overline{\nu}]$ and $[0, \Delta/\underline{\mu}].$ As the derivative is bounded by $\Delta/\underline{\mu}$, which is integrable, Leibniz's integral rule ensures that $\mu \mapsto (\Delta v - \mu d^\gamma )^+$ is differentiable. 

On the other hand, Equation~\eqref{equ:G_stat_compt_diff_conc_0} implies that the function $\mu \mapsto \mu G(\mu)$ is differentiable with derivative $G(\mu) + \mu G'(\mu)$. 
Hence the dual function $\Psi^0$ is differentiable and we have for all $\mu \geq 0$, i.e.,
\begin{equation*}
\begin{aligned}
    \Psi^{0'}(\mu) &= d^\gamma \mathds{1}\{\Delta v  > \mu d^\gamma \}  - (G(\mu) + \mu G'(\mu) ) \\
    &= L(\mu) -\mu G'(\mu).
    \end{aligned}
\end{equation*}

\subsection{Uniqueness of Stationary Multiplier Profile } \label{sec:A_sim_lear_unique_mu^*}

We prove in the following that Assumption~\ref{ass:G_sim_lear_strong_mono} ensures the uniqueness of a stationary multiplier profile up to a multiplicative constant.

 By contradiction, suppose that two different vectors $\bm{\mu}$ and $\bm{\mu'}$ in $\bm{U}\cap \bm{H_{\mu_1}}$ satisfy the definition of a stationary multiplier. We have:
 \begin{align*}
    0 \stackrel{(a)}{>} - \lambda \Vert \bm{\mu} - \bm{\mu'}\Vert_2^2 \stackrel{(b)}{\geq}  \sum_{i=1}^N (\mu_i - \mu_i') \left( L_i(\bm{\mu}) - L_i(\bm{\mu'})\right) 
    \stackrel{(c)}{=} 0. 
 \end{align*}
 Inequality $(a)$ uses that $\bm{\mu} \neq \bm{\mu'}$ and the norm is positive definite;iinequality $(b)$ follows from the strong monotonicity of $\bm{L}$, and equality $(c)$ uses that both $\bm{\mu}$ and $\bm{\mu'}$ are stationary multipliers. This is a contradiction.

\end{document}